\documentclass{article}
\usepackage[utf8]{inputenc}

\usepackage{pdflscape}
\usepackage{array}
\usepackage[onehalfspacing]{setspace}
\usepackage{titlesec}
\usepackage{cmbright}
\usepackage{amsthm,amsmath,bm,bbm}
\usepackage{amssymb,mathtools}
\usepackage{pifont}
\newcommand{\cmark}{\ding{51}}%
\newcommand{\xmark}{\ding{55}}
\usepackage{fullpage}
\usepackage[dvipsnames]{xcolor}
\usepackage[colorlinks=true, allcolors=violet]{hyperref}
\usepackage[margin=1in]{geometry}

\usepackage{multirow}
\usepackage{caption}
\usepackage{booktabs}
\usepackage{enumitem}

\usepackage{natbib}
\setcitestyle{authoryear, open={(},close={)}}

\usepackage{minitoc}

\newtheorem{theorem}{Theorem}
\newtheorem{definition}{Definition}[section]

\newtheorem{assumption}{Assumption}
\newtheorem{lemma}{Lemma}[section]

\makeatletter
\renewcommand{\paragraph}{%
  \@startsection{paragraph}{4}%
  {\z@}{1ex \@plus 1ex \@minus .2ex}{-1em}%
  {\normalfont\normalsize\bfseries}%
}
\makeatother

\usepackage{tikz}
\usetikzlibrary{shapes,shadows,arrows,positioning,arrows.meta,decorations.pathreplacing,decorations.pathmorphing,decorations.shapes}

\usepackage{cancel}
\usepackage[colorinlistoftodos,textsize=scriptsize,textwidth=.8in,color=blue!20]{todonotes}

\def\independent{\perp\!\!\!\perp}

\colorlet{no}{cyan!200}
\colorlet{ok}{black!70}
\colorlet{rmod}{gray}
\colorlet{butx}{blue}
\colorlet{uhoh}{red}
\colorlet{half}{blue!50!red}

\allowdisplaybreaks

\newlength\tindent
\def\odds{\mathrm{odds}}
\def\E{\mathrm{E}}
\def\P{\mathrm{P}}
\def\I{\mathrm{I}}

\def\II{\text{II}}
\def\III{\text{III}}
\def\IV{\text{IV}}
\def\std{\mathrm{std}}
\def\hatomega{\hat{\omega}}

\def\hateta{\hat{\eta}}

\DeclareFontFamily{U}{mathb}{}
\DeclareFontShape{U}{mathb}{m}{n}{<-5.5> mathb5 <5.5-6.5> mathb6 
<6.5-7.5> mathb7 <7.5-8.5> mathb8 <8.5-9.5> mathb9 <9.5-11> mathb10 
<11-> mathb12}{}
\DeclareSymbolFont{mathb}{U}{mathb}{m}{n}
\DeclareFontSubstitution{U}{mathb}{m}{n}

\DeclareMathSymbol{\blackdiamond}{\mathbin}{mathb}{"0C} 
\usepackage{amssymb}

\title{Estimation Strategies for Causal Decomposition Analysis with Allowability Specifications}

\author{John W. Jackson, Ting-Hsuan Chang, Aster Meche, Trang Quynh Nguyen}
\titlespacing*{\section} {0pt}{0ex}{0ex}
\titlespacing*{\subsection} {0pt}{0ex}{0ex}
\expandafter\def\expandafter\normalsize\expandafter{%
    \normalsize%
    \setlength\abovedisplayskip{2pt}%
    \setlength\belowdisplayskip{2pt}%
    \setlength\abovedisplayshortskip{2pt}%
    \setlength\belowdisplayshortskip{2pt}%
}
\begin{document}

\maketitle

\begin{abstract}
Causal decomposition analysis (CDA) is an approach for modeling the impact of hypothetical interventions to reduce disparities. It is useful for identifying foci that future interventions, including multilevel and multimodal interventions, could focus on to reduce disparities. Based within the potential outcomes framework, CDA has a causal interpretation when the identifying assumptions are met. CDA also allows an analyst to consider which covariates are allowable (i.e., fair) for defining the disparity in the outcome and in the point of intervention, so that its interpretation is also meaningful. While the incorporation of causal inference and allowability promotes robustness, transparency, and dialogue in disparities research, it can lead to challenges in estimation such as the need to correctly model densities. Also, how CDA differs from commonly used statistical decomposition estimators from the econometrics literature may not be clear, which may limit its uptake. To address these challenges, we provide a tour of estimation strategies for CDA, reviewing existing proposals and introducing novel estimators that overcome key estimation challenges. Among them we introduce what we call ``bridging" estimators that avoid modeling any density, and sequential weighted regression estimators that are multiply robust. Additionally, we provide diagnostics to assess the quality of the nuisance density models and weighting functions they rely on. We formally establish the estimators' robustness to model mis-specification, demonstrate their performance through a simulation study based on real data, and apply them to study disparities in uncontrolled hypertension using electronic health records in a large healthcare system.
\end{abstract}

\section{Introduction}

Causal decomposition analysis (CDA) is an approach to understand how an observed outcome disparity would be reduced, and how much would remain, upon hypothetically eliminating the disparity in a point of intervention, a factor hypothesized to causally affect the outcome. CDA can identify factors that interventions aiming to reduce outcome disparities could intervene upon. A key feature of CDA, as developed in \citet{jackson2021meaningful}, is to explicitly specify which covariate differences are allowable (i.e., fair) for defining disparity in the outcome and in the point of intervention so that the estimands are meaningful and transparent to stakeholders \citep{chang2024importance}. For example, measures of disparity in healthcare typically account for differences in clinical need, which are often seen as allowable (i.e., not leading to disparity) \citep{duan2008disparities,cook2012measuring}.

Building from \citet{vanderweele2014causal} and \citet{jackson2018decomposition}, CDA was introduced in epidemiology \citep{jackson2021meaningful} and has been extended across the social sciences (see \citet{qin2025review} for a summary). However, recent innovations sometimes overlook the issue of allowability, and many estimators, including multiply robust estimators, require correctly modeling the distribution of the variable designated as the point of intervention, limiting their robustness properties. The new estimators require many nuisance components, but few practical diagnostics are available to evaluate their quality. Further, their relevance and connection to traditional statistical decomposition techniques used in the applied research community is unclear, potentially limiting their uptake. Our review and novel contributions aim to overcome these limitations.

\subsection{Contributions \& Related Literature} \label{sec:Contributions}

We tour estimation strategies for causal decomposition analysis (CDA) that can explicitly incorporate assumptions about which covariates reflect fair sources of outcome differences (outcome-allowability) and which covariates are appropriate for guiding interventions (intervention-allowability). Classifying estimators by whether they do or do not model densities for the point-of-intervention variable or covariates, we further organize estimators by whether they focus on re-weighting, outcome modeling, or both. As part of our tour, we generalize the proposals of \citet{jackson2018decomposition}, \citet{sudharsanan2021educational}, and \citet{park2024estimation} to incorporate allowability. Certain estimators generalize widely used statistical decomposition estimators \citep{oaxaca1973male,blinder1973wage,fairlie2005extension} to incorporate assumptions about causality and allowability. Our tour updates the review of \citet{park2024choosing} that focused on continuous outcomes and ignored allowability, and of \citet{qin2025review} that covered allowability but did not examine estimation details.

Our work also develops new estimators that overcome key challenges in CDA, including the need to model densities and the need for robust estimation. We introduce novel ``bridging" estimators that avoid posing models for densities and instead rely on constructed samples to facilitate the use of empirical densities or the re-expression of density ratios. For both modeling and bridging strategies, we introduce novel sequential weighted regression estimators \citep{isenberg2024marshall,gabriel2024inverse} that are multiply robust, remaining consistent when certain nuisance components are incorrect. We prove their robustness properties, demonstrate their consistency and robustness through simulation, and show their relation to the influence function, which we derive in the Supplement. They complement recent estimators \citep {lundberg2024gap,yu2025nonparametric,park2025causal} based on Augmented Inverse Probability Weighting \citep{robins1994estimation,glynn2010introduction} that accommodate machine learning tools but, relying on models for densities, are limited to discrete points of intervention. We limit our review of multiply robust estimators to sequential weighted regression estimators because they are intuitive and readily accommodate discrete, continuous, and multiple points of intervention.

Finally, we provide tools to improve the implementation of CDA estimators. Many estimators, even some that are multiply robust, require a correctly specified model for the density of the point of intervention or covariates, limiting their robustness. We therefore propose diagnostics that evaluate the quality of these density models. Furthermore, many estimators make use of weighting functions that, unlike other estimands in causal inference \citep{cole2008constructing,austin2015moving,jackson2016diagnostics,jackson2019diagnosing}, have few relevant metrics for covariate balance. \citet{ben2024estimating} have considered diagnostics for how well weights balance covariates across social groups. We propose two sets of diagnostics to assess the weights. The first examines the weights' mean within an appropriate subsample. The other, based on the idea of target balance \citep{chattopadhyay2020balancing}, examines how well the weighted covariate distributions, including the point of intervention, mimic their target distributions. Our diagnostics for densities and weighting functions support a design-based estimation process where analysts can evaluate and address the potential quality of their estimator before obtaining the results.

\subsection{Motivating Example \& Notation}\label{sec:Setup}

Suppose we observe that, in a population of patients with hypertension, the level of uncontrolled hypertension at follow-up $Y$ (systolic blood pressure at or above 140 mm Hg or diastolic blood pressure at or above 90 mm Hg) is higher among a historically disadvantaged social group $G=1$ as compared to a historically advantaged social group $G=0$. We want to know how the disparity in $Y$ would change if we eliminate the disparity in a determinant of $Y$, such as antihypertensive treatment intensification, which is defined as starting, increasing the dose, or adding a new class of antihypertensive medication. We consider this treatment intensification variable our point of intervention $Z$. Let $A_y$ denote the variables we designate as allowable for measuring disparity in our outcome $Y$ (i.e., outcome-allowable covariates) such as age and sex. Let $A_z$ denote the covariates we designate as allowable for defining our intervention on $Z$ (i.e., intervention-allowable covariates) such as baseline blood pressure. Let $N$ denote non-allowable covariates that may be unfair causes of the outcome or their correlates (e.g., neighborhood disadvantage) that we use to identify causal effects but not to measure the disparity in $Y$ or to influence how we intervene on $Z$. Let $Y(z)$ represent the potential outcome had $Z$ been set to $z$. 

To simplify notation, we use subscripts to denote the population over which a probability statement or expectation is obtained, e.g., $\P_g(\cdot)\coloneq\P(\cdot|G=g)$ and $\E_g[\cdot]\coloneq\E[\cdot|G=g]$. When we denote weighting functions $\omega$ used in a weighted average, the superscript and subscript denote how the weight shifts a sample's covariate distribution. For example, the weighting function $\omega_{g^{\prime}g^{\prime\prime}}^{V}(\cdot)$ indicates that it shifts the observed conditional density of the random variable $V$ among the group $G=g^{\prime}$ to a conditional density of $V$ defined by group $G=g^{\prime\prime}$. We use $\P_n[X]$ to refer to the sample average of $X$, i.e., $\frac{1}{n}\textstyle \sum_i^n X_i$ where in this usage $n$ represents the sample size (elsewhere, it represents the instantiation of the random variable $N$) and $i$ indexes the individual. We use the term `group' to refer to a subgroup of the population. Finally, for readability, we sometimes abuse notation by abbreviating a function $f(V)$ as $f(\cdot)$ or simply as $f$.

\subsection{Outline of the Paper}

Following an overview of statistical and causal decomposition analysis in section \ref{sec:SDA_CDA}, we describe existing estimators and introduce novel estimators that overcome key estimation challenges in sections \ref{sec:Estimators_Observation_Arm} and \ref{sec:Estimators_Intervention_Arm}. We provide diagnostics to evaluate the density models and weighting functions used to implement these estimators in section \ref{sec:Diagnostics}. We assess the consistency and robustness of the reviewed estimators in a simulation study based on real data and the clinical literature in section \ref{sec:Simulation_Study}, and apply them to study how eliminating the racial disparity in treatment intensification would change the subsequent disparity in uncontrolled hypertension within a large healthcare system in section \ref{sec:DataApplication}. We conclude by identifying favorable estimators for application and future development based on our results. Extended proofs and derivations appear in the Supplement.

\section{Statistical versus Causal Decomposition Analysis}\label{sec:SDA_CDA}

\subsection{Statistical Decomposition Analysis}\label{sec:SDA}
Statistical decomposition methods (see \citet{fortin2011decomposition} for a review) consider how the joint distribution of covariates differs across groups, and quantify how much group-differences in the joint distribution of covariates associates with group-differences in outcomes. A ``detailed" decomposition breaks this association down into pieces that are statistically attributable to each covariate's differential distribution. In our example, the detailed decomposition would isolate the association of $Z$, the single variable of interest. This framework does not consider interventions, so $Z$ here is not a point of intervention but just a covariate of interest, and our use of $Z$ here (an abuse of notation) is simply to facilitate comparison with what is to come in the presentation of causal decomposition.

The detailed decomposition of \citet{oaxaca1973male} and \citet{blinder1973wage} assumes a linear outcome model:
\begin{align*}
	\E_1[Y|Z,X]=\beta_0+\beta_z Z+\beta_x X 
\end{align*}
Based on this model, the outcome mean under a shift of $Z$ in the $G=1$ group to resemble the marginal distribution of $Z$ in the $G=0$ group is	$\E_1[Y]+\beta_z \bigl(\E_0[Z]-\E_1[Z]\bigr)$. The second term involving $\beta_z$ quantifies how much group differences in $Z$ associate with group differences in $Y$.

Generalizing to non-linear outcomes and interactions, the detailed decomposition of \citet{fairlie2005extension} assumes an outcome regression function with a possibly non-linear link $m(\cdot)$, for example:
\begin{align*}
	m(\E_1[Y|Z,X])=\beta_0+\beta_z Z+\beta_x X+\beta_{z,x} ZX 
\end{align*}
The outcome mean under a shift of $Z$ in the $G=1$ group to resemble the marginal distribution of $Z$ in the $G=0$ group is $\E_\bullet[\hat{\E}_1[Y|Z,X]]$. The expectation is over the model's predicted values in the \emph{artificial} group $\bullet$ created by replacing the $Z$ values of the $G=1$ group with those of persons matched randomly (i.e., ignoring $X$) from the $G=0$ group. Contrasting the mean of the predicted values' in the artificial group with the mean of $Y$ in the $G=1$ group quantifies how much group differences in $Z$ associate with group differences in $Y$.

As discussed in \citet{jackson2018decomposition}, these decompositions can have challenging causal interpretations. Typically, $Z$ is given equal status with $X$, so that the associations of $X$ are also assessed. Investigators do not choose $X$ to contain confounders of $Z$ and $Y$'s relationship, or to exclude potential effects of $Z$, which degrade the causal interpretation for $Z$. As discussed in \citet{jackson2021meaningful}, detailed decompositions can have challenging substantive interpretations. These estimators analyze how marginal differences in $Z$ associate with marginal differences in $Y$. Depending on the setting, marginal differences may not map to what stakeholders consider to be unfair or unjust patterns. Causal Decomposition Analysis, introduced in \citet{jackson2021meaningful} and reviewed in \citet{qin2025review}, addresses these limitations by incorporating causal assumptions, which influence variable selection, and by carefully defining the disparity in $Y$ and the removal of disparity in $Z$. 

\subsection{Causal Decomposition Analysis} \label{sec:CDA}
\subsubsection{Hypothetical Study Design}\label{subsec:HypotheticalDesign} 
To provide intuition for CDA, we cast it within a design that nests the Target Study framework \citep{jackson2025target} (to measure disparity across social groups) within the Target Trial framework \citep{hernan2016using} (to measure causal effects of interventions). While both frameworks envision hypothetical studies with features of eligibility and follow-up, the hypothetical target study balances allowable covariates across social groups by sampling persons (without assigning who belongs to which social group), whereas the hypothetical target trial balances potential confounders across intervention arms by randomly assigning persons to an intervention plan. \citet{qin2025review} provide an overview of CDA and \citet{sun2025integrated} introduce the integration of these frameworks with broader interventions. 

We want to understand how an intervention that removes the disparity in $Z$ would change the disparity in $Y$. To proceed, we acknowledge that groups may differ on many factors, but the distribution of some covariates (i.e., outcome-allowable covariates, $A_y$), on normative or ethical grounds, may not contribute to disparity in the outcome $Y$. We therefore define disparity as the difference in $Y$ across social groups $G$ that would be observed in a hypothetical study after enrolling them via a stratified sampling plan that balances $A_y$ across these social groups according to a standard population denoted as $T=1$ (i.e., the Target Study component). 
\begin{definition}[Standard Population $T=1$]
The standard population, which may be defined among the study population, is the population whose distribution of $A_y$ is used to determine the stratified sampling plan of the target study component. Formally, we denote membership in the standard population by the deterministic binary function $T\coloneq \I(g\in\mathcal{G},a_y\in\mathcal{A}_y)$, where $\I(\cdot)$ is the indicator function, and $\mathcal{G}$ and $\mathcal{A}_y$ are the set or range of values $g$ and $a_y$ that comprise the standard population. For convenience, we denote the standard population's $A_y$ distribution as $\P_\std(A_y)\equiv\P(A_y|T=1)$, and the expectation over its $A_y$ distribution as $\E_\std[\cdot]\equiv\E[\cdot|T=1]$.\footnote{We will assume a within-study standard population. Under an an external study population, where data from the study population are supplemented with that of the external standard population, we would adapt our abbreviated notation as follows: $\E_g[\cdot]:=\E[\cdot|G=g,T=0]$ and $\P_g[\cdot]:=\P(\cdot|G=g,T=0)$ to emphasize that the study and standard populations are distinct.}
\label{def:Standard_Population}
\end{definition}
Among each group, enrolled individuals are randomized to one of two study arms and then followed for outcomes (the Target Trial component). In the observation arm, no action is taken. In the intervention arm, $Z$ is assigned based on covariates $A_z$ that, on normative grounds, are appropriate for determining how $Z$ is allocated.

\subsubsection{The Intervention}\label{subsec:Intervention}
The intervention can be defined in many ways (see, e.g., \citet{qin2025review}) but often it may suffice to define the intervention as in \citet{jackson2021meaningful}. Among the historically advantaged group  we do not intervene on $Z$ so it takes its natural value. But among the historically disadvantaged group $G=1$, we set $Z$ as a random draw from the distribution of $Z$ among the historically advantaged group $G=0$ conditional on $A_z$, for each level of $A_y$. This intervention to set $Z\sim\P_0(Z|A_z,A_y)$, our focus in this paper, is defined by covariates $A_z$ relevant for defining disparity in $Z$ (if any) and by covariates $A_y$ relevant for defining disparity in $Y$ (if any). \citet{jackson2021meaningful} discuss principles for choosing $A_z$ and $A_y$. This intervention  removes the disparity in $Z$ and renders $Z$ and $N$ independent given $(A_z,A_y)$ among the group $G=1$.

\subsubsection{Estimands and Identification}
Comparing the disparity in the observation and intervention arms will inform how much the hypothetical intervention on $Z$ would remove the disparity in $Y$. Because our study design involving sampling-based enrollment and stratified randomization of an intervention to equalize $Z$ is hypothetical, we rely on assumptions to use non-experimental data to estimate the arm-specific group means and disparities. Although these assumptions rely on non-allowables $N$,\footnote{One only needs to measure as many non-allowables as necessary (if any) to satisfy the identification assumptions. The non-allowable category could also include variables that are not deemed as allowable but are included to increase efficiency.} the disparity in each arm is defined by $A_y$ and the intervention on $Z$ is defined by $(A_z,A_y)$. This allows analysts to respect their assumptions about fairness in the distribution of $Z$ and $Y$.

For the observation arm, we assume sufficient overlap in the $A_y$ distribution for stratified sampling ($A_y$-overlap), and that enrollment of each person $i$ into the trial sample $\Omega$ would not change the conditional outcome distribution (innocuous sampling). Formally,
\begin{assumption}[$A_y$-overlap]
$\P_g(A_y=a_y)>0$ for $g\in\{0,1\}$ and all $a_y$ with $P_\std(A_y=a_y)>0$\label{def:Assumption_Ay_Overlap}
\end{assumption}
\begin{assumption}[Innocuous Sampling]
$\P_{i\in\Omega}(Y|Z,N,A_z,A_y,G)=\P_{i\not\in\Omega}(Y|Z,N,A_z,A_y,G)$\label{def:Assumption_Innocuous_Sampling}
\end{assumption}
Under these assumptions, $\theta_g$ the group-specific expected value of $Y$ in the observation arm is:
\begin{align}
\theta_g=\E_\std\bigl[\E_g[Y|A_y]\bigr]\label{eq:ID_Result_Observation_Arm}
\end{align}
For the $G=1$ group among the intervention arm, where $Z\sim\P_0(Z|A_z,A_y)$, the distribution of the data is:
\begin{align}
    \P_1^*(Y(Z),Z,N,A_z,A_y) = \P_1^*(Y(Z) | Z,N,A_z,A_y)\P_1^*(Z,N|A_z,A_y)\P_1^*(A_z|A_y) \P^*_\std(A_y)\label{eq:Distribution_Counterfactual}
\end{align}
Here and throughout, the superscript $*$ notation indicates the intervention arm of the hypothetical target trial. To identify this distribution, we assume overlap in the conditional distribution of $A_z$ ($A_z$-overlap) to permit the assignment of $Z$ given ($A_z,A_y$) per the intervention; unconfoundedness of $Z$'s effect on $Y$ given $(N,A_z,A_y)$ ($Z$-exchangeability); relevant variation is observed in $Z$ given $(N,A_z,A_y)$ ($Z$-positivity); the same outcomes are obtained regardless of whether $Z$ is observed or assigned by the intervention to have the value $z$ ($Y$-consistency). 
\begin{assumption}[$A_z$-overlap]
$\P_0(A_z=a_z|A_y=a_y)>0$ for all values $a_y$ where $\P_\std(A_y=a_y)>0$ and all values $a_z$ where $\P_1(A_z=a_z|A_y=a_y)>0$.\label{def:Assumption_Az_Overlap}
\end{assumption}
\begin{assumption}[$Z$-exchangeability]
$Y(z)\independent Z|N=n,A_z=a_z,A_y=a_y,G=1$ for all values $a_y$ where $\P_\std(A_y=a_y)>0$, all values $(n,a_z)$ where $\P_1(N=n,A_z=a_z|A_y=a_y)>0$, and all values $z$ where $\P_0(Z=z|A_z=a_z,A_y=a_y)>0$\label{def:Assumption_Z_Exchangeability}
\end{assumption}
\begin{assumption}[$Z$-positivity]
$P_1(Z=z|N,A_z,A_y)>0$ for all values $a_y$ where $\P_\std(A_y=a_y)>0$, all values $(n,a_z)$ where $P_1(N=n,A_z=a_z|A_y=a_y)>0$, and all values $z$ where $\P_0(Z=z|A_z=a_z,A_y=a_y)>0$.\label{def:Assumption_Z_Positivity} 
\end{assumption}
\begin{assumption}[$Y$-consistency]
$Y_i\equiv Y_i(z)$ if $Z_i=z$ for each individual $i$\label{def:Assumption_Z_Consistency}
\end{assumption}
These assumptions, which only need to hold at certain values of the observed data, are somewhat weaker than the standard exchangeability, positivity, and consistency assumptions described in \citet{hernan2006estimating}. 

Under Assumptions \ref{def:Assumption_Ay_Overlap}-\ref{def:Assumption_Z_Consistency}, we can identify the distribution of the data in the intervention arm as:
\begin{align}
    \P_1^*(Y(Z),Z,N,A_z,A_y) = \P_1(Y | Z,N,A_z,A_y)\P_0(Z|A_z,A_y)\P_1(N|A_z,A_y)\P_1(A_z|A_y) \P_\std(A_y)\label{eq:Distribution_Identified}
\end{align}
Consequently, we can express $\theta_1^*$ the expected value of $Y$ among the group $G=1$ in the intervention arm as:
\begin{align}
\theta_1^*=\E_\std\bigl[\E_1^*[\mu_1(Z,N,A_z,A_y)\big|A_y]\bigr].\label{eq:ID_Result_Intervention_Arm}
\end{align}
where $\mu_1(Z,N,A_z,A_y)$, the conditional mean function $\E_1[Y|Z,N,A_z,A_y]$, is averaged over $\P_1^*(Z,N,A_z|A_y)$ the distribution of $(Z,N,A_z|A_y)$ in the intervention arm (see (\ref{eq:Distribution_Counterfactual})). The intervention conditions on $(A_z,A_y)$, leaving their distribution unchanged, implying that $\P_1(A_z|A_y)$ identifies $\P_1^*(A_z|A_y)$ and that $\P_\std(A_y)$ identifies $\P_\std^*(A_y)$. For our intervention, $\P_1^*(Z,N|A_z,A_y)$ is identified as the product of $\P_0(Z|A_z,A_y)$ and $\P_1(N|A_z,A_y)$ which, due to their importance in the flexible estimation strategies that follow, we refer to as critical densities.
\begin{definition}[Critical Density of $Z$, $\lambda_0^Z$]
The critical density of $Z$, denoted as $\lambda_0^Z$, among the $G=1$ group under the intervention arm is equivalent to $P_0(Z|A_z,A_y)$, where $Z$ is independent of $N$ given $(A_z,A_y)$. \label{eq:lambda_0_Z}
\end{definition}
\begin{definition}[Critical Density of $N$, $\lambda_1^N$]
The critical density of $N$, denoted as $\lambda_1^N$, among the $G=1$ group under the intervention arm is equivalent to $P_1(N|A_z,A_y)$, where $N$ is independent of $Z$ given $(A_z,A_y)$. \label{eq:lambda_1_N}
\end{definition}
The critical densities $\lambda_0^Z$ and $\lambda_1^N$ differ from the distributions of $Z$ and $N$ in the observation arm, $\P_1(Z|A_z,A_y)$ and $\P_1(N|Z,A_z,A_y)$. The conditional independence of the intervention foci $Z$ from non-allowable factors $N$ reflects the hypothetical intervention's removal of unethical ways of allocating $Z$. The conditional equalization of $Z$ across $G$ reflects the hypothetical intervention's removal of disparity in the distribution of $Z$. As we will see, the key challenge in estimating $\theta_1^*$ is the expectation over the \emph{unobserved} density $\P_1^*(Z,N|A_z,A_y)$. 

\subsection{Prelude to Estimation}
Different views of the identification result (\ref{eq:ID_Result_Intervention_Arm}) indicate potential strategies for estimating $\theta_1^*$. As a weighted average, (\ref{eq:ID_Result_Intervention_Arm}) suggests a weighting function that takes covariate data from the $G=1$ group and morphs the observed density $\P_1(Z,N|A_z,A_y)$ to that of $G=1$ in the intervention arm $\P_1^*(Z,N|A_z,A_y)$, and morphs the density of $A_y$ to that of the standard population, using the observed outcome $Y$:
\begin{align}
\theta_1^*=\E_1 \Big[Y\times \frac{\P^*_1(Z,N|A_z,A_y)}{\P_1(Z,N|A_z,A_y)}\times\frac{\P_\std(A_y)}{\P_1(A_y)}\Big]\nonumber. 
\end{align}
As a sequential expectation, (\ref{eq:ID_Result_Intervention_Arm}) suggests averaging the conditional mean function $\mu_1(Z,N,A_z,A_y)$ over $\P_1^*(Z,N,A_z|A_y)$, the $(Z,N,A_z|A_y)$ distribution of $G=1$ in the intervention arm, yielding the $\eta_1^*(A_y)$ model:
\begin{align}
 \eta_1^*(A_y) &\coloneq \E_1^*[\mu(Z,N,A_z,A_y)|A_y].\nonumber
\end{align}
Another route to $\eta_1^*$ is to average $\mu_1$ over $\P_1^*(Z,N|A_z,A_y)$ (i.e., the product of $\lambda_0^Z$ and $\lambda_1^N$) and then over $\P_1^*(A_z|A_y)$. We can also average $\mu_1$ over $\lambda_0^Z$ and then over $\P_1^*(N,A_z|A_y)$. Or, we can sequentially average $\mu_1$ over $\lambda_1^N$, $\lambda_0^Z$, and $\P_1^*(A_z|A_y)$. The average of $\eta_1^*$ over the standard population's $A_y$ distribution estimates $\theta_1^*$:
\begin{align}
\theta_1^*=\E_\std[\eta_1^*(A_y)].\nonumber
\end{align}

These are the core estimation strategies used by the flexible estimators that follow.

\section{Estimation Strategies for the Observation Arm} \label{sec:Estimators_Observation_Arm}
Here, we present three ways to estimate $\theta_g$, the group-specific mean outcome in the observation arm, to estimate the disparity in that arm. They resemble estimators for average treatment effects but are adapted for this descriptive endeavor. The first estimator is based on a re-expression of $\theta_g$ as a weighted average:
\begin{align}
	\theta_g=\E_g \Bigg[ \frac{\P_\std(A_y)}{\P_g(A_y)} Y\Bigg]  \label{eq:Estimator_PW},
\end{align}
which suggests the `standardizing' weighting function
\begin{align}
	\omega_{gT}^{A_y}(A_y) &\coloneq \frac{\P_\std(A_y)}{\P_g(A_y)} \nonumber = \frac{\P(T=1|A_y)}{\P(G=g|A_y)} \times \frac{\P(G=g)}{\P(T=1)}\label{eq:Weight_Ay}.
\end{align} 
This weight morphs the distribution of $A_y$ among the group $G=g$ to that of the standard population, denoted by $T=1$ (see Definition \ref{def:Standard_Population}). 
This justifies estimating $\theta_g$ as the following weighted sample average:
\begin{align}
\hat\theta_g=\frac{\P_{gn}[\hat\omega_{gT}^{A_y}Y]}{\P_{gn}[\hat\omega_{gT}^{A_y}]},
\end{align}
where $\P_{gn}$ is the average taken over the $G=g$ sample. This weighting estimator is closely related to the inverse probability weighting-based estimators of \citet{sato2003marginal} and of \citet{hernan2006estimating}. It can be shown that the weights are proportional to the sampling fractions used in the sampling plan of the target study component that balance $A_y$ to the standard population's distribution of $A_y$ \citep{jackson2025target}. To achieve consistent estimation of $\theta_g$, the weight $\omega_{gT}^{A_y}(A_y)$ needs to be estimated consistently.

The second estimator is based on the expression of $\theta_g$ as:
\begin{align}
    \theta_g&= \E_\std[\eta_g(A_y)] \nonumber \\
	&=\E_\std\bigl[\E_g [Y|A_y]\big] \label{eq:Estimator_SR}
\end{align}
This estimator averages $\eta_g(A_y)$, defined as $\E_g[Y|A_y]$, over $\P_\std(A_y)$. We can thus estimate $\theta_g$ by fitting a model for $\eta_g(A_y)$, obtaining predicted values $\hateta_g$, and averaging them among the standard population, i.e., those with $T=1$. This is essentially an outcome regression, g-computation, or g-formula estimator \citep{snowden2011implementation}. To achieve consistent estimation of $\theta_g$, the model for $\eta_g(A_y)$ must be correctly specified.

The third estimator is the same as (\ref{eq:Estimator_SR}) but fits the model for $\eta_g$ to the sample space of the standard population. This entails weighting the $\eta_g$ model by $\hatomega_{gT}^{A_y}$ (\ref{eq:Weight_Ay}). This is essentially the doubly robust weighted regression estimator first proposed by Marshall Joffe \citep{isenberg2024marshall} and later explained by \citet{gabriel2024inverse}. When the model for $\eta_g$ has the mean recovery property, the estimator is consistent for $\theta_g$ if either $\eta_g(A_y)$ is correctly specified or $\omega_{gT}^{A_y}$ is consistently estimated. For intuition, the mean recovery property ensures that $\E_\std[\eta^{'}_g(A_y)]=\E_\std[\eta_g(A_y)]$ when the fitted model $\eta^{'}_g(A_y)$ is incorrectly specified but the weights $\hatomega_{gT}^{A_y}$ are consistently estimated. Likewise, $\E_\std[\eta^{'}_g(A_y)]=\E_\std[\eta_g(A_y)]$ when the fitted model $\eta^{'}_g(A_y)$ is correctly specified, regardless of the weights. This estimator also solves the influence function for $\theta_g$, which is: 
\begin{align}
\varphi_{\theta_g}(O)=  \frac{\I(G=g)}{\P(G=g)}\frac{\P_{\std}(A_y)}{\P_g(A_y)}[Y-\eta_g(A_y)] 
                      + \frac{T}{\P(T=1)}[\eta_g(A_y)-\theta_g],\label{eq:if-theta.g}
\end{align}
where $O=(T,G,A_y,Y)$. The second term appears when the standard distribution $\P_{\std}(A_y)$ is not known but ascertained from the $T=1$ sample. With group-specific estimates of $\theta_g$, the observed disparity is $\tau=\theta_1-\theta_0$.

\section{Estimation Strategies for the Intervention Arm} \label{sec:Estimators_Intervention_Arm}

\subsection{Overview}
Here, we present a range of strategies for estimating $\theta_g^*$, the outcome mean of the $G=1$ group in the intervention arm, used to estimate the disparity in the intervention arm, defined as $\tau^*=\theta_1^*-\theta_0^*$. By definition of the intervention, $\theta_0^*=\theta_0$, so we can estimate $\theta_0^*$ using the approaches described for the observation arm. Hence, we focus on estimation strategies for $\theta_1^*$. These include a linear estimator, of interest for its simplicity, and two classes of flexible estimators that employ what we call density-modeling and density-bridging strategies. The modeling estimators model one critical density (e.g., $\lambda_0^Z$) and average over the empirical version of the other critical density (e.g., $\lambda_1^N$). Remarkably, the bridging strategies avoid modeling either critical density by exploiting the conditional independence of $Z$ and $N$ among $G=1$ in the intervention arm.\footnote{The bridging estimator construction permits averaging over the empirical critical density, or re-expressing a density ratio involving the critical density as a combination of odds functions, even though its numerator and denominator differ in conditioning sets.} 

We classify the flexible estimators according to whether (a) they use a critical density modeling strategy or a bridging strategy and (b) the modeling or bridging strategy revolves around $Z$ or $N$. This leads to four strategies for flexible estimation: \textit{Z-modeling}, \textit{N-modeling}, \textit{Z-bridging}, and \textit{N-bridging}. For each of the four flexible estimation strategies, we present (a) a \textit{Pure Weighting} (PW) estimator that requires no model for the outcome, (b) a \textit{Sequential Regression (SR)} estimator that only requires sequential models for the outcome and pseudo-outcome\footnote{By ``pseudo-outcome", we mean the predicted value from the previous regression step in the sequential regression procedure.} (c) a \textit{Sequential Weighted Regression estimator (SWR)} that weights the sequential regressions of the SR estimator to target the covariate space of the intervention arm, and (d) a \textit{Regress-then-Weight} (RW) estimator that combines weighting and outcome regressions (see Supplement).

The linear estimator, PW, RW, and SR estimators will consistently estimate $\theta_1^*$ when all of their nuisance components are correctly specified or consistently esimated. In contrast, the SWR estimators are consistent for $\theta_1^*$ in $2^J$ ways, where $J$ refers to the estimator's count of outcome and pseudo-outcome regression steps, permitting some nuisance components to be incorrectly specified or inconsistent. The robustness of each outcome regression step follows from the use of a model form with the mean recovery property and weights that target the covariate space of the intervention arm. Under mean recovery, the outcome regression will always recover the mean outcome in the sample space used to fit the model. Thus, when the weights for that regression step are consistently estimated, the intervention arm's outcome mean is obtained regardless of the outcome regression's specification. Conversely, when the outcome regression is correctly specified, the intervention arm's outcome mean is obtained regardless of the weights. The linear estimator and certain PW, RW, and SR modeling estimators generalize existing proposals by accommodating intervention-allowability. To our knowledge, all of the SWR estimators, and the all of the bridging estimators, are novel proposals.

\subsection{E-OBD: Linear Estimator}
The linear estimator is an extension of the detailed Oaxaca-Blinder Decomposition estimator of \citet{oaxaca1973male} and \citet{blinder1973wage} to incorporate allowability for the intervention foci and the outcome (thus named E-OBD). The approach is of interest in that (i) it closely relates to this canonical decomposition estimator that is widely used by applied researchers \citep{altonji1999race,sen2014using,rahimi2021detailed,gastwirth2020role} (ii) it is simple to implement for categorical and continuous $Z$ without any density modeling, sequential regressions, or construction of artificial samples. The estimator assumes a linear causal model for the outcome $Y$ given the point of intervention $Z$, the non-allowables $N$, and the allowables $(A_z,A_y)$ among the $G=1$ group:
\begin{align}
	\E_1[Y|Z,N,A_z,A_y]=\beta_0+\beta_z Z+\beta_n N+\beta_{a_z} A_z + \beta_{a_y}A_y \label{eq:Y_Model_Linear}.
\end{align}
Because the model is specified among the $G=1$ group, the effect of $Z$ may implicitly vary across groups $G$. However, the model cannot reflect any further heterogeneity of $Z$'s effect across levels of $(N,A_z,A_y)$ and it cannot contain any higher-order terms (e.g., quadratic) for $Z$.\footnote{The outcome model (\ref{eq:Y_Model_Linear}) can contain interactions separately among $N$, separately among $A_z$, and separately among $A_y$.} Under the linear causal model (\ref{eq:Y_Model_Linear}), $\theta_1^*$ can be expressed in a form that resembles a detailed OBD decomposition: a (weighted) outcome mean of the $G=1$ group is shifted by the product of $\beta_z$ and a difference in (weighted) intervention foci means between $G=0$ and $G=1$. Both groups are weighted to mimic the $A_y$ standard distribution via the standardizing weighting function $\omega_{gT}^{A_y}(A_y)$ previously defined (\ref{eq:Weight_Ay}), which incorporates outcome-allowability. To incorporate the intervention-allowability of $A_z$, the $G=0$ group is additionally weighted to mimic the $(A_z|A_y)$ distribution of the $G=1$ group, via a `covariate-shifting' weighting function:
\begin{align}
	\omega_{01}^{A_z}(A_z,A_y)	&\coloneq \frac{\P_1(A_z|A_y)}{\P_0(A_z|A_y)} =\frac{\odds(G=1\text{ vs }0|A_z,A_y)}{\odds(G=1\text{ vs }0|A_y)}\label{eq:Weight_Az}.
\end{align}
Formally, 
\begin{align}
    \theta_1^*&=\E_1[\omega_{1T}^{A_y}(A_y)\,Y]+(\theta_1^*-\theta_1) \nonumber
\end{align}   
where 
\begin{align}
\theta_1^*-\theta_1=\beta_z\Big\{\E_0[\omega_{0T}^{A_y}(A_y)\omega_{01}^{A_z}(A_z,A_y)\,Z]-\E_1[\omega_{1T}^{A_y} (A_y)\,Z]\Big\}. \nonumber
\end{align}
The RHS does not change if we divide each of these expectations by the expectations of the weights, as they are all equal to 1. This justifies the E-OBD linear estimator, which uses weighted sample averages:
\begin{align}
    \hat\theta_{1,\text{E-OBD}}^*\coloneq\frac{\P_{1n}[\hat\omega_{1T}^{A_y}\,Y]}{\P_{1n}[\hat\omega_{1T}^{A_y}]}+\hat\beta_z \Bigg( \frac{\P_{0n}[\hat\omega_{0T}^{A_y}\hat\omega_{01}^{A_z}\,Z]}{\P_{0n}[\hat\omega_{0T}^{A_y}\hat\omega_{01}^{A_z}]}-\frac{\P_{1n}[\hat\omega_{1T}^{A_y}\,Z]}{\P_{1n}[\hat\omega_{1T}^{A_y}]} \Bigg) \label{eq:Estimator_EOBD},
\end{align}
where $\P_{1n}$ and $\P_{0n}$ respectively denote averages on the $G=1$ and $G=0$ samples.
For a categorical $Z$ with $j=1,2,\dots,J$ levels, we fit the linear outcome model (\ref{eq:Y_Model_Linear}) with indicators $Z_j$ for each for each non-referent level of $Z$ (i.e., $Z_j\coloneq\textup{I}(Z=j)$), and modify (\ref{eq:Estimator_EOBD}) to sum over the contributions of each non-referent level, i.e.,
\begin{align}
    \hat\theta_{1,\text{E-OBD}}^*\coloneq\frac{\P_{1n}[\hat\omega_{1T}^{A_y}\,Y]}{\P_{1n}[\hat\omega_{1T}^{A_y}]}+ \sum_j \Bigg\{ \hat\beta_z \Bigg(\frac{\P_{0n}[\hat\omega_{0T}^{A_y}\hat\omega_{01}^{A_z}\,Z_j]}{\P_{0n}[\hat\omega_{0T}^{A_y}\hat\omega_{01}^{A_z}]}-\frac{\P_{1n}[\hat\omega_{1T}^{A_y}\,Z_j]}{\P_{1n}[\hat\omega_{1T}^{A_y}]}\Bigg) \Bigg\} \label{eq:Estimator_EOBD_Categorical},
\end{align}
To consistently estimate $\theta_1^*$, this estimator requires (i) correct specification of the linear outcome model (\ref{eq:Y_Model_Linear}), which assumes absence of interactions beyond $Z$ and $G$ and absence of higher order terms for $Z$, and (ii) consistent estimation of the weights $\omega_{gT}^{A_y}$ (\ref{eq:Weight_Ay}) and $\omega_{01}^{A_z}$ (\ref{eq:Weight_Az}). (An alternate linear estimator is provided alongside proof of this estimator in the Supplement.) This estimator generalizes the linear estimator of \citet{jackson2018decomposition} by accommodating intervention-allowability and categorical points of intervention, but condition (i) is very restrictive. The flexible estimators that follow avoid these constraints.

\subsection{Z-Modeling Estimators}
The Z-modeling estimators require a correctly specified model for $\lambda_0^Z$, the critical density of $Z$, and rely on the empirical version of $\lambda_1^N$, the critical density of $N$. In the \textit{Pure Weighting} (PW) and \textit{Regress then Weight} (RW) estimators, $\lambda_0^Z$ appears in a density ratio used to take a weighted average of the outcome or pseudo-outcome. Whereas in the \textit{Sequential Regression} (SR) estimator, an artificial sample $G=\blacktriangle$ is created by simulating $Z$ from $\lambda_0^Z$. In the \textit{Sequential Weighted Regression} (SWR) estimator, $\lambda_0^Z$ plays two roles: simulating $Z$ and appearing in the weight used to fit the initial outcome model.

\subsubsection{Z-Model-PW: Pure Weighting}
Based on its identifying formula in (\ref{eq:ID_Result_Intervention_Arm}), $\theta_1^*$ can be expressed as a weighted outcome mean in the $G=1$ group:
\begin{align}
    \theta_1^*
    &=\E_1\left[\omega_{11^*}^{Z,N}(Z,N,A_z,A_y)\omega_{1T}^{A_y}(A_y)\,Y\right],\label{eq:ID_Weighting}
\end{align}
where the   `interventional' weighting function
\begin{align}
    \omega_{11^*}^{Z,N}(Z,N,A_z,A_y)
    \coloneq
    \frac{\P_0(Z| A_z,A_y)\P_1(N|A_z,A_y)}{\P_1(Z,N|A_z,A_y)}\label{eq:Weight_ZN_1I}
\end{align}
shifts the observed distributions of $Z$ and $N$ to the critical densities $\lambda_0^Z$ and $\lambda_1^N$. This weighting function can be expressed in many ways that suggest avenues for its estimation. Here we consider the simple expression 
\begin{align}
    \omega_{11^*}^{(Z,N)}(Z,N,A_z,A_y)
    =\frac{\P_0(Z|A_z,A_y)}{\P_1(Z|N,A_z,A_y)} \label{eq:Weight_Z_Model_PW}.
\end{align}
The Z-Model-PW estimator is thus a weighted average of the outcome among the $G=1$ sample:
\begin{align}
    \hat\theta_{1,\text{Z-Model-PW}}^*
    \coloneq\frac{\P_{1n}[\hat\omega_{11^*}^{(Z,N)}\hat\omega_{1T}^{A_y}\,Y]}{\P_{1n}[\hat\omega_{11^*}^{(Z,N)}\hat\omega_{1T}^{A_y}]}.\label{eq:Estimator_Z_Model_PW}
\end{align}
It is consistent for $\theta_1^*$ if the interventional weight $\omega_{11^*}^{(Z,N)}$ and the standardizing weight $\omega_{1T}^{A_y}$ are consistently estimated. The weight expression of this estimator (\ref{eq:Weight_Z_Model_PW}) is identical to the one proposed in \citet{jackson2021meaningful}.


\subsubsection{Z-Model-SR: Sequential Regression}

This estimator is based on a sequential expectation expression of $\theta_1^*$:
\begin{align}
    \theta_1^*
    =\E_\text{std}\Big\{\underbrace{\E_1^*\big(\underbrace{\E_1[Y|Z,N,A_z,A_y]}_{\mu_1(Z,N,A_z,A_y)}\big|A_y\big)}_{\eta_1^*(A_y)}\Big\},\label{eq:SeqExp_Z_Model_SR}
\end{align}
where
\begin{align*}
    \E_1^*[h(Z,N,A_y,Az)| A_y]=\int\!\!\!\int\!\!\!\int h(z,n,a_z,A_y)\P_0(z| a_z,A_y)\P_1(n,a_z|a_z,A_y)dz\,dn\,da_z
\end{align*}
for arbitrary functions $h()$.

The innermost expectation $\E_1[\cdot]$ (denoted by $\mu_1(Z,N,A_z,A_y)$) is estimated by regression of the outcome $Y$ on the $G=1$ sample given $(Z,N,A_z,A_y)$; the middle expectation $\E_1^*(\cdot)$ (denoted by $\eta_1^*(A_y)$) is estimated by regression of the predicted values $\hat\mu_1$ on an artificial sample $G=\blacktriangle$ (defined shortly) given $A_y$, which effectively averages $\hat\mu_1$ over the modeled $\lambda_0^z$ and empirical $\lambda_1^N$; and the outermost expectation $\E_\std\{\cdot\}$ is estimated by averaging the predicted values $\hat\eta_1^*$ on the standard $A_y$ distribution sample.

\begin{definition}[artificial sample $G=\blacktriangle$]
An artificial sample $G=\blacktriangle$ used to estimate $\E_1^*(\cdot)$ has the joint distribution $\P_\blacktriangle(Z,N,A_z\mid A_y)=\P_1^*(Z,N,A_z\mid A_y)=\P_0(Z\mid A_z,A_y)\P_1(N,A_z\mid A_y)$.\label{def:Sample_Z_Model} 
\end{definition}

We can construct an artificial sample $G=\blacktriangle$ as follows: 1) Among the $G=0$ group, fit a model for $\lambda_0^{Z}$; 2) Take the $G=1$ sample, delete the $Z$ values, duplicate the sample many times, and stack the duplicates; 3) Draw a simulated value of $Z$ from $\hat{\lambda}_0^{Z}$.

The Z-Model-SR estimator is consistent for $\theta_1^*$ if both of the $\mu_1$ and $\eta_1^*$ models are correctly specified. This estimator generalizes \citet{sudharsanan2021educational} by accommodating intervention-allowability. It generalizes \citet{fairlie2005extension} by replacing its sampling step with draws from $\hat\lambda_0^Z$ and standardizing over outcome-allowables $A_y$.

\subsubsection{Z-Model-SWR: Sequential Weighted Regression} \label{subsubsec:Estimator_Z_Model_SWR}
The Z-Model-SWR estimator of $\theta_1^*$ is based on the same sequential expectation (\ref{eq:SeqExp_Z_Model_SR}) as Z-Model-SR but fits the $\mu_1$ and $\eta_1^*$ models to the sample space of the intervention arm. This entails: 1) weighting the $\mu_1$ model by $\hatomega_{11^*}^{(Z,N)}\hatomega_{1T}^{A_y}$ (\ref{eq:Weight_Z_Model_PW}) and (\ref{eq:Weight_Ay}); 2) weighting the $\eta_1^*$ model by $\hatomega_{1T}^{A_y}$ (\ref{eq:Weight_Ay}). The Z-Model-SWR estimator is consistent for $\theta_1^*$ if 1) the model for $\lambda_0^Z$ is correctly specified \emph{and} 2) the following nuisance components are correctly specified (in the case of outcome regressions) or consistently estimated (in the case of weighting functions): (2-i) $\mu_1$ or $\omega_{11^*}^{(Z,N)}\omega_{1T}^{A_y}$; and (2-ii) $\eta_1^*$ or $\omega_{1T}^{A_y}$. This amounts to four avenues for consistent estimation of $\theta_1^*$. Because the Z-Model-SWR estimator requires correct specification of the $\lambda_0^Z$ model, we distinguish it as partially robust.

\subsection{N-Modeling Estimators}
When modeling $\lambda_0^Z$ is difficult (e.g., when $Z$ is continuous), an alternate strategy is to model $\lambda_1^N$. This motivates the N-modeling estimators, which require a correctly specified model for $\lambda_1^N$, the critical density for $N$, and rely on the empirical version of $\lambda_0^Z$, the critical density for $Z$. In the \textit{Pure Weighting} (PW) and \textit{Regress then Weight} (RW) estimators, $\lambda_1^N$ appears in a density ratio used to take a weighted average of the outcome or pseudo-outcome. Whereas, in the \textit{Sequential Regression} (SR) estimator, an artificial sample $G=\triangle$ is created by simulating $N$ from $\lambda_1^N$. In the \textit{Sequential Weighted Regression} (SWR) estimator, $\lambda_1^N$ plays two roles: simulating $N$ and appearing in the weight used to fit the initial outcome model. 

\subsubsection{N-Model-PW: Pure Weighting}
This estimator is based on the same weighted-outome-mean expression of $\theta_1^*$ in (\ref{eq:ID_Weighting}) that is the basis of Z-Model-PW. Here, we factorize the interventional weighting function $\omega_{11^*}^{(Z,N)}$ into two terms $\omega_{11^*}^N$ and $\omega_{11^*}^Z$:
\begin{align}
    \omega_{11^*}^{(Z,N)}(Z,N,A_z,A_y)
    &=\underbrace{\frac{\P_1(N|A_z,A_y)}{\P_1(N|Z,A_z,A_y)}}_{=:\omega_{11^*}^N(Z,N,A_z,A_y)}\underbrace{\frac{\P_0(Z|A_z,A_y)}{\P_1(Z|A_z,A_y)}}_{=:\omega_{11^*}^Z(Z,A_z,A_y)}.\label{eq:Weight_N_Model_PW}
\end{align}
$\lambda_1^N$ appears in the first term $\omega_{11^*}^N$, and $\lambda_0^Z$ appears in the second term $\omega_{11^*}^Z$, which can be re-expressed as
\begin{align*}
    \omega_{11^*}^Z(Z,A_z,A_y)
    =\frac{\text{odds}(G=0~\text{vs}~1\mid Z,A_z,A_y)}{\text{odds}(G=0~\text{vs}~1\mid A_z,A_y)},
\end{align*}
suggesting a way to estimate $\omega_{11^*}^{(Z,N)}$ by modeling $\lambda_1^N$, but using a ratio of odds functions to avoid modeling $\lambda_0^Z$. The N-Model-PW estimator is the following weighted average of the outcome among the $G=1$ sample:
\begin{align}
    \hat\theta_{1,\text{N-Model-PW}}^*
    \coloneq\frac{\P_{1n}[\hat\omega_{11^*}^{(Z,N)}\hat\omega_{1T}^{A_y}\,Y]}{\P_{1n}[\hat\omega_{11^*}^{(Z,N)}\hat\omega_{1T}^{A_y}]},
\end{align}
where the estimate of the interventional weight $\omega_{11^*}^{(Z,N)}$ is provided by the product of $\hat\omega_{11^*}^N$ and $\hat\omega_{11^*}^Z$. It is consistent for $\theta_1^*$ if the interventional weight $\omega_{11^*}^{(Z,N)}$ and the standardizing weight $\omega_{1T}^{A_y}$ are consistently estimated.


\subsubsection{N-Model-SR: Sequential Regression}

This estimator is based on the following sequential expectation expression of $\theta_1^*$:
\begin{align}
    \theta_1^*=\E_\text{std}\bigg[\underbrace{\E_1\Big\{\underbrace{\E_1^*\big(\underbrace{\E_1[Y| Z,N,A_z,A_y]}_{\mu_1(Z,N,A_z,A_y)}\big| A_z,A_y\big)}_{\kappa_1^*(A_z,A_y)}\Big| A_y\Big\}}_{\eta_1^*(A_y)}\bigg],\label{eq:SeqExp_N_Model_SR}
\end{align}
where
\begin{align*}
    \E_1^*[h(Z,N,A_z,A_y)\mid A_z,A_y]=\int\!\!\!\int h(z,n,A_z,A_y)\P_1(n\mid A_z,A_y)\P_0(z\mid A_z,A_y)dn\,dz
\end{align*}
for arbitrary functions $h()$.

The innermost expectation $\E_1[\cdot]$ (denoted by $\mu_1(Z,N,A_z,A_y))$ is estimated by regressing the outcome $Y$ on $(Z,N,A_z,A_y)$ among the $G=1$ sample; the next expectation $\E_1^*(\cdot)$ (denoted by $\kappa_1^*(A_z,A_y)$) is then estimated by regression of predicted values $\hat\mu_1$ on $(A_z,A_y)$ among an artificial sample $G=\triangle$ (explained shortly), which effectively averages $\hat\mu_1$ over the empirical $\lambda_0^z$ and modeled $\lambda_1^N$; the third expectation $\E_1\{\cdot\}$ (denoted by $\eta_1^*(A_y)$) is then estimated by regression of predicted values $\hat\kappa_1^*$ on $A_y$ among the $G=1$ sample; and finally the outermost expectation $\E_\std[\cdot]$ is estimated by averaging the predicted values $\hat\eta_1^*$ on the standard $A_y$ distribution sample.
\begin{definition}[artificial sample $G=\triangle$]
An artificial sample $G=\triangle$ used to estimate $\E_1^*(\cdot)$ has the joint distribution $\P_\triangle(Z,N\mid A_z,A_y)=\P_1^*(Z,N\mid A_z,A_y)=\P_0(Z\mid A_z,A_y)\P_1(N\mid A_z,A_y)$.\label{def:Sample_N_Model}
\end{definition}
We can construct an artificial sample $G=\triangle$ as follows: 1) Among the $G=1$ group, fit a model for $\lambda_1^{N}$; 2) Take the $G=0$ sample, delete the $N$ values, duplicate the sample many times, and stack the duplicates; 3) Draw a simulated value of $N$ from $\hat{\lambda}_1^{N}$.

This estimator requires correct specification of the $\mu_1$, $\kappa_1^*$, and $\eta_1^*$ models to achieve consistent estimation of $\theta_1^*$. This estimator generalizes \citet{park2024estimation} by accommodating intervention-allowability.

\subsubsection{N-Model-SWR: Sequential Weighted Regression} \label{subsubsec:Estimator_N_Model_SWR}
The N-Model-SWR estimator of $\theta_1^*$ is based on the same sequential expectation as that of N-Model-SR  (\ref{eq:SeqExp_N_Model_SR}) but instead fits the $\mu_1$, $\kappa_1^*$, and $\eta_1^*$ models to the sample space of the intervention arm. This entails: 1) weighting the $\mu_1$ model by $\hatomega_{11^*}^{(Z,N)}\hatomega_{1T}^{A_y}$ (\ref{eq:Weight_N_Model_PW}) and (\ref{eq:Weight_Ay}); 2) weighting the $\kappa_1^*$ model by $\hatomega_{01}^{A_z}\hatomega_{0T}^{A_y}$ (\ref{eq:Weight_Az}) and (\ref{eq:Weight_Ay}); 3) weighting the $\eta_1^*$ model by $\hatomega_{1T}^{A_y}$ (\ref{eq:Weight_Ay}). The N-Model-SWR estimator is consistent for $\theta_1^*$ if 1) the density model $\lambda_1^N$ is correctly specified \emph{and} 2) the following nuisance components are correctly specified (in the case of outcome regressions) or consistently estimated (in the case of weighting functions): (i) $\mu_1$ or $\omega_{11^*}^{(Z,N)}\omega_{1T}^{A_y}$; (ii) $\kappa_1^*$ or $\omega_{01}^{A_z}\omega_{0T}^{A_y}$;  and (iii) $\eta_1^*$ or $\omega_{1T}^{A_y}$.  This amounts to eight avenues for consistent estimation of $\theta_1^*$. Because the N-Model-SWR estimator requires correct specification of $\lambda_1^N$, we distinguish it as partially robust.

\subsection{Z-bridging Estimators} \label{subsec:Estimators_Z_Bridge}
Sometimes $\lambda_0^Z$ and $\lambda_1^N$ may be difficult to model (e.g., both $Z$ and $N$ are continuous). Z-bridging estimators, which avoid modeling either density, are based on an artificial population group with certain properties.  
\begin{definition}[artificial group $G=\blackdiamond$]
The artificial group $G=\blackdiamond$ has a joint distribution $\P_\blackdiamond(Z,N,A_z,A_y)=\P_\blackdiamond(Z|N,A_z,A_y)\P_\blackdiamond(N,A_z,A_y)$ where $\P_\blackdiamond(N,A_z,A_y)=\P_1(N,A_z,A_y)$ and $\P_\blackdiamond(Z|N,A_z,A_y)=\P_\blackdiamond(Z)$. The distribution $\P_\blackdiamond(Z)$ is any arbitrary distribution of $Z$ that satisfies a support condition: the support of this distribution covers a subset of the $G=0$ group's support of $Z$, where the subset results from needing only to consider values that $Z$ takes in this group when $(A_z,A_y)$ values fall within the $G=1$ group's support of these variables. \label{def:Group_Z_Bridge}
\end{definition}

Overall, the $G=\blackdiamond$ group has the $(N,A_z,A_y)$ distribution of the $G=1$ group, and a distribution of $Z$ that does not vary by $(N,A_z,A_y)$. Together, these properties permit expressions of $\theta_1^*$ that do not involve $\lambda_0^Z$ or $\lambda_1^N$.

Although the $G=\blackdiamond$ group is not a real population subgroup, there are multiple ways to construct a $G=\blackdiamond$ sample that is i.i.d. with the distributional properties of the $G=\blackdiamond$ group outlined in Definition \ref{def:Group_Z_Bridge}. When $Z$ is discrete, we can construct the artificial $G=\blackdiamond$ sample as follows: 1) Determine the set of unique values $z$ in the support $\mathcal{Z}$ of $Z$ in the $G=0$ sample; 2) Take the $G=1$ sample, create a duplicate for each unique value of $z$, where each duplicate's $Z$ is assigned a unique value $z$, retain the original values of $(N,A_z,A_y)$, and stack the duplicates. When $Z$ is continuous, we can replace steps 1 and 2 by stacking many duplicates of the $G=1$ sample, retaining the original $(N,A_z,A_y)$ values, deleting the $Z$ values and either sampling $Z$ from the $G=0$ sample with replacement or by drawing $Z$ from a uniform distribution that covers $\mathcal{Z}$ the support of $Z$ in the $G=0$ sample.

In the \textit{Pure Weighting} (PW) and \textit{Regress then Weight} (RW) estimators, the artificial $G=\blackdiamond$ sample allows density ratios involving $\lambda_0^Z$ or $\lambda_1^N$ to be expressed purely in terms of odds functions. In the \textit{Sequential Regression} (SR) estimator, the artificial sample $G=\blackdiamond$ bridges the outcome regression function $\mu_1$ to the empirical version of $\lambda_0^Z$. In the \textit{Sequential Weighted Regression} (SWR), the artificial sample $G=\blackdiamond$ plays both roles simultaneously. 

\subsubsection{Z-Bridge-PW: Pure Weighting}

The Z-Bridge-PW estimator is based on the same weighted-outome-mean expression of $\theta_1^*$ in (\ref{eq:ID_Weighting}) that is the basis of Z-Model-PW. We consider the interventional weighting function $\omega_{11^*}^{(Z,N)}$ as before but now modify its expression in N-Model-PW (\ref{eq:Weight_N_Model_PW}). Specifically, we consider its component $\omega_{11^*}^N$, and replace its numerator $\P_1(N|A_z,A_y)$ with $\P_\blackdiamond(N|A_z,A_y)=\P_\blackdiamond(N|Z,A_z,A_y)$, and label the component as $\omega_{1\blackdiamond}^N$:
\begin{align}
    \omega_{11^*}^{(Z,N)}(Z,N,A_z,A_y)
    &=\underbrace{\frac{\P_\blackdiamond(N|Z, A_z,A_y)}{\P_1(N| Z,A_z,A_y)}}_{=:\omega_{1\blackdiamond}^N(Z,N,A_z,A_y)}\underbrace{\frac{\P_0(Z|A_z,A_y)}{\P_1(Z|A_z,A_y)}}_{=:\omega_{11^*}^Z(Z,A_z,A_y)}.\label{eq:Weight_Z_Bridge_PW_2}
\end{align}
where $\lambda_1^N$ appears in $\omega_{1\blackdiamond}^N$ and $\lambda_0^Z$ appears in $\omega_{11^*}^Z$. Both pieces can be expressed as in odds terms:
\begin{align*}
    \omega_{1\blackdiamond}^N(Z,N,A_z,A_y)=\frac{\text{odds}(G=\blackdiamond~\text{vs}~1\mid N,Z,A_z,A_y)}{\text{odds}(G=\blackdiamond~\text{vs}~1\mid Z,A_z,A_y)}
\end{align*}
\begin{align*}
    \omega_{11^*}^Z(Z,A_z,A_y)
    =\frac{\text{odds}(G=0~\text{vs}~1\mid Z,A_z,A_y)}{\text{odds}(G=0~\text{vs}~1\mid A_z,A_y)},
\end{align*}
providing an alternate way to estimate $\omega_{11^*}^{(Z,N)}$ without modeling $\lambda_1^N$ or $\lambda_0^Z$ using ratios of odds functions.\footnote{The expression $\odds(G=\blackdiamond~vs~g^\prime|\cdot)$, where $g^\prime$ is a fixed value of $G\in(1,0)$, is evaluated in a dataset where $G=\blackdiamond$ data is stacked on $G=g^\prime$ data that is cloned to the same degree as $G=\blackdiamond$ data (see Definition \ref{def:Group_Z_Bridge}) but retains its original $Z$ values.\label{note:odds_bridge_Z}} Putting the two pieces together and simplifying yields the following expression:
\begin{align}
\omega_{11^*}^Z(Z,N,A_z,A_y)=\underbrace{\odds(G=\blackdiamond~vs~1|Z,N,A_z,A_y)\frac{\odds(G=1~vs~0|A_z,A_y)}{\odds(G=\blackdiamond~vs~0|Z,A_z,A_y}}_{\omega_{1\blackdiamond}^N(Z,N,A_z,A_y)\omega_{11^*}^Z(Z,A_z,A_y)}\nonumber
\end{align}

Thus, the Z-Bridge-PW estimator is the following weighted average of the outcome among the $G=1$ sample:
\begin{align}
    \hat\theta_{1,\text{Z-Bridge-PW}}^*
    \coloneq\frac{\P_{1n}[\hat\omega_{11^*}^{(Z,N)}\hat\omega_{1T}^{A_y}\,Y]}{\P_{1n}[\hat\omega_{11^*}^{(Z,N)}\hat\omega_{1T}^{A_y}]}.
\end{align}
where the estimate of the interventional weight $\omega_{11^*}^{(Z,N)}$ is provided by the product of $\hat\omega_{1\blackdiamond}^N$ and $\hat\omega_{11^*}^Z$. It is consistent for $\theta_1^*$ if the interventional weight $\omega_{11^*}^{(Z,N)}$ and the standardizing weight $\omega_{1T}^{A_y}$ are consistently estimated.

\subsubsection{Z-Bridge-SR: Sequential Regression}

This estimator is based on the following expression of $\theta_1^*$ as a sequential expectation
\begin{align*}
    \theta_1^*
    &=\E_\text{std}\Bigg(\E_1\bigg[\E_0\Big\{\E_1^*\big(\E_1[Y| Z,N,A_z,A_y]\big| Z,A_z,A_y\big)\Big| A_z,A_y\Big\}\bigg| A_y\bigg]\Bigg).
\end{align*}
The middle expectation $\E_1^*(\cdot)$ is taken over the $(N|Z,A_z,A_y)$ distribution of the $G=1$ group in the intervention arm   
\begin{align*}
    \E_1^*[h(Z,N,A_z,A_y)| Z,A_z,A_y]
    &=\int h(Z,n,A_z,A_y)\P_1^*(n|Z,A_z,A_y)dn
\end{align*}
for arbitrary functions $h()$.
Because $\P_\blackdiamond(N|Z,A_z,A_y)=\P_1^*(N|Z,A_z,A_y)$, it follows that this expectation could be equivalently taken over $\P_\blackdiamond(N|Z,A_z,A_y)$ of the $G=\blackdiamond$ group, i.e.,  
$
\E_1^*[h(Z,N,A_z,A_y)|Z,A_z,A_y]=\E_\blackdiamond[h(Z,N,A_z,A_y)|Z,A_z,A_y].$
This implies that we can re-express $\theta_1^*$ in terms of the $G=\blackdiamond$ group
\begin{align}
    \theta_1^*
    =\E_\text{std}\Bigg(\underbrace{\E_1\bigg[\underbrace{\E_0\Big\{\underbrace{\E_\blackdiamond\big(\underbrace{\E_1[Y| Z,N,A_z,A_y]}_{\mu_1(Z,N,A_z,A_y)}\big| Z,A_z,A_y\big)}_{\zeta_1^*(Z,A_z,A_y)}\Big| A_z,A_y\Big\}}_{\kappa_1^*(A_z,A_y)}\bigg| A_y\bigg]}_{\eta_1^*(A_y)}\Bigg).\label{eq:Seq_Exp_Z_Bridge_SR}
\end{align}
Based on this expression, the Z-Bridge-SR estimator involves a sequence of regressions. The innermost expectation $\E_1[\cdot]$ (denoted by $\mu_1(Z,N,A_z,A_y)$) is estimated by regressing the outcome $Y$ on $(Z,N,A_z,A_y)$ among the $G=1$ sample; the next expectation $\E_\blackdiamond(\cdot)$ (denoted $\zeta_1^*(Z,A_z,A_y)$) is then estimated by regressing the predicted values $\hat\mu_1$ on $(Z,A_z,A_y)$ among the artifical $G=\blackdiamond$ sample, which effectively averages $\hat\mu_1$ over the empirical $\lambda_1^N$; the next expecataion $\E_0\{\cdot\}$ (denoted $\kappa_1^*(A_z,A_y)$) is estimated by regressing $\hat\zeta_1^*$ on $(A_z,A_y)$ among the $G=0$ sample, which effectively averages $\hat\zeta_1^*$ over the empirical $\lambda_0^Z$; the next epxectation $\E_1[\cdot]$ (denoted $\eta_1^*(A_y)$) is estimated by regressing $\hat\kappa_1^*$ on $A_y$ among the $G=1$ sample. Finally, the outermost expectation $\E_\std(\cdot)$ is estimated by averaging the predicted values $\hat\eta_1^*$ on the standard $A_y$ sample. This estimator requires correct specification of the $\mu_1$, $\zeta_1^*$, $\kappa_1^*$, and $\eta_1^*$ models to achieve consistent estimation of $\theta_1^*$.

\subsubsection{Z-Bridge-SWR: Sequential Weighted Regression} \label{subsubsec:Estimator_Z_Bridge_SWR}

To construct a Z-bridging SWR estimator, we need a new weighting function $\omega_{\blackdiamond1^*}^Z(Z,A_z,A_y)$ 
\begin{align}
    \omega_{\blackdiamond1^*}^Z(Z,A_z,A_y) &\coloneq  \frac{\P_0(Z|A_z,A_y)}{\P_\blackdiamond(Z|A_z,A_y)}= \frac{\odds(G=0\text{ vs }\blackdiamond|Z,A_z,A_y)}{\odds(G=0\text{ vs }\blackdiamond|A_z,A_y)}\label{eq:Weight_Z_Bridge_RW}.
\end{align}
This weight takes the artificial sample $G=\blackdiamond$ and morphs its distribution of $Z$, i.e., $\P_\blackdiamond(Z|A_z,A_y)$, to $\lambda_0^Z$. When expressed in odds terms (see note \ref{note:odds_bridge_Z}), it avoids a model for $\lambda_0^Z$.


The Z-Bridge-SWR estimator of $\theta_1^*$ is based on the same sequential expectation as Z-Bridge-SR (\ref{eq:Seq_Exp_Z_Bridge_SR}) but instead fits the $\mu_1$, $\zeta_1^*$, $\kappa_1^*$, and $\eta_1^*$ models to the sample space of the intervention arm. This entails: 1) weighting the $\mu_1$ model by $\hat\omega_{11^*}^{(Z,N)}\hat\omega_{1T}^{A_y}$ (\ref{eq:Weight_Z_Bridge_PW_2}) and (\ref{eq:Weight_Ay}); 2) weighting the $\zeta_1^*$ model by $\hat\omega_{\blackdiamond1^*}^Z\hat\omega_{1T}^{A_y}$ (\ref{eq:Weight_Z_Bridge_RW}) and (\ref{eq:Weight_Ay}); 3) weighting the $\kappa_1^*$ model by $\hatomega_{01}^{A_z}\hatomega_{0T}^{A_y}$ (\ref{eq:Weight_Az}) and (\ref{eq:Weight_Ay}); 4) weighting the $\eta_1^*$ model by $\hatomega_{1T}^{A_y}$ (\ref{eq:Weight_Ay}). Provided that the artificial $G=\blackdiamond$ sample is properly constructed, the Z-Bridge-SWR estimator is consistent for $\theta_1^*$ if the following nuisance components are correctly specified (in the case of outcome regressions) or consistently estimated (in the case of weighting functions): (i) $\mu_1$ or $\omega_{11^*}^{(Z,N)}\omega_{1T}^{A_y}$; (ii) $\zeta_1^*$ or $\omega_{\blackdiamond1^*}^Z\omega_{1T}^{A_y}$; (iii) and $\kappa_1^*$ or $\omega_{01}^{A_z}\omega_{0T}^{A_y}$; and (iv) $\eta_1^*$ or $\omega_{1T}^{A_y}$. This amounts to sixteen avenues for consistent estimation of $\theta_1^*$. The Z-Bridge-SWR estimator is fully robust because it does not require any particular model to be correct.

\subsection{N-bridging Estimators} \label{subsec:Estimators_N_Bridge}
The Z-bridging estimators avoid modeling either critical density but are more complex than the Z- and N-modeling estimators. The N-bridging estimators also avoid modeling $\lambda_0^Z$ and $\lambda_1^N$ but are simpler. Like the Z-bridging estimators, they are based on an artificial population group, albeit one with different properties.
\begin{definition}[artificial group $G=\diamond$]
The artificial group $G=\diamond$ has a joint distribution $\P_\diamond(Z,N,A_z,A_y)=\P_\diamond(Z,A_z,A_y)\P_\diamond(N|Z,A_z,A_y)$ where $\P_\diamond(Z,A_z,A_y)=\P_0(Z,A_z,A_y)$ and $\P_\diamond(N|Z,A_z,A_y)=\P_\diamond(N)$. The distribution $\P_\diamond(N)$ is any distribution that covers the $G=1$ group's support of $N$.
\label{def:Group_N_Bridge}
\end{definition}

Overall, the $G=\diamond$ group has the $(Z,A_z,A_y)$ distribution of the $G=0$ group, and a distribution of $N$ that does not vary by $(Z,A_z,A_y)$. Together, these properties permit expressions of $\theta_1^*$ that do not involve $\lambda_0^Z$ or $\lambda_1^N$.

Though the $G=\diamond$ group is not real, we can construct an i.i.d. $G=\diamond$ sample with its properties as follows. When $N$ is discrete, we can use the following procedure: 1) Determine the set of unique values $N$ in the support $\mathcal{N}$ of $N$ in the $G=1$ sample; 2) Take the $G=0$ sample, create a duplicate for each unique value $n$, where each duplicate's $N$ is assigned a unique value $n$, retain the original values of $(Z,A_z,A_y)$, and stack the duplicates. When $N$ is continuous or multivariate, we can replace steps 1 and 2 by stacking many duplicates of the $G=0$ sample, retaining the original $(Z,A_z,A_y)$ values, and deleting the $N$ values and either sampling $N$ from the $G=1$ sample with replacement or drawing $N$ from a uniform distribution that covers $\mathcal{N}$ the support of $N$ in the $G=1$ sample.

In the \textit{Pure Weighting} (PW) and \textit{Regress then Weight} (RW) estimators, the artificial $G=\diamond$ sample  allows density ratios involving $\lambda_0^Z$ or $\lambda_1^N$ to be modeled using ratios of odds functions. In the \textit{Sequential Regression} (SR) estimator, the artificial $G=\diamond$ sample bridges the outcome regression function $\mu_1$ to the empirical version of $\lambda_1^N$. In \textit{Sequential Weighted Regression} (SWR), the artificial sample $G=\diamond$ plays both roles simultaneously.

\subsubsection{N-Bridge-PW: Pure Weighting}
The N-Bridge-PW estimator is based on same weighted-outome mean expression of $\theta_1^*$ in (\ref{eq:ID_Weighting}) that is the basis of Z-Model-PW. Here, we modify the interventional weighting function's expression used in Z-Model-PW (\ref{eq:Weight_Z_Model_PW}) by replacing its numerator $\P_0(Z|A_z,A_y)$ with $\P_\diamond(Z|A_z,A_y)=\P_\diamond(Z|N,A_z,A_y)$, and label this form as $\omega_{1\diamond}^Z$. 
\begin{align}
    \omega_{11^*}^{(Z,N)}(Z,N,A_z,A_y)
    =\underbrace{\frac{\P_\diamond(Z| N,A_z,A_y)}{\P_1(Z| N,A_z,A_y)}}_{=:\omega_{1\diamond}^Z(Z,N,A_z,A_y)},\label{eq:Weight_N_Bridge_PW}
\end{align}
where $\lambda_0^Z$ appears in the numerator. Expressing the weighting function $\omega_{1\diamond}^Z$ in terms of odds\footnote{The expression $\odds(G=\diamond~vs~1|\cdot)$ is evaluated in a dataset where the $G=\diamond$ data is stacked on the $G=1$ data that is cloned to the same degree as the $G=\diamond$ data (see Definition \ref{def:Group_N_Bridge}) but retains its original $N$ values.\label{note:odds_bridge_N}} 
\begin{align*}
    \omega_{1\diamond}^Z(Z,N,A_z,A_y)=\frac{\text{odds}(G=\diamond~\text{vs}~1\mid Z,N,A_z,A_y)}{\text{odds}(G=\diamond~\text{vs}~1\mid N,A_z,A_y)}
\end{align*}
suggests a simple way to estimate $\omega_{11^*}^{(Z,N)}$ without modeling $\lambda_0^Z$ or $\lambda_1^N$. Thus, the N-Bridge-PW estimator is the following weighted average of the outcome among the $G=1$ sample:
\begin{align}
    \hat\theta_{1,\text{N-Bridge-PW}}^*
    \coloneq\frac{\P_{1n}[\hat\omega_{11^*}^{(Z,N)}\hat\omega_{1T}^{A_y}\,Y]}{\P_{1n}[\hat\omega_{11^*}^{(Z,N)}\hat\omega_{1T}^{A_y}]},
\end{align}
where the estimate of the interventional weight $\omega_{11^*}^{(Z,N)}$ is provided by $\hat\omega_{1\diamond}^Z$. It is consistent for $\theta_1^*$ if the interventional weight $\omega_{11^*}^{(Z,N)}$ and the standardizing weight $\omega_{1T}^{A_y}$ are consistently estimated. 


\subsubsection{N-Bridge-SR: Sequential Regression}
This estimator is based on the following sequential expectation expression of $\theta_1^*$
\begin{align*}
\theta_1^*=\E_\text{std}\Big[\E_1\Big\{\E_1^*\big(\E_1[Y| Z,N,A_z,A_y]\big| N,A_z,A_y\big)\Big| A_y\Big\}\Big].
\end{align*}
The middle expectation $\E_1^*(\cdot)$ is taken over the $(Z|N,A_z,A_y)$ distribution of the $G=1$ group in the intervention arm
\begin{align*}
    \E_1^*[h(Z,N,A_z,A_y)| N,A_z,A_y]=\int h(z,N,A_z,A_y)\P_1^*(z| N,A_z,A_y)dz
\end{align*}
for arbitrary functions $h()$. Because $\P_\diamond(Z|N,A_z,A_y)=\P_1^*(Z|N,A_z,A_y)$, it follows that this expectation could be equivalently taken over $\P_\diamond(Z|N,A_z,A_y)$ of the $G=\diamond$ group, i.e., $\E_1^*[h(Z,N,A_z,A_y)| N,A_z,A_y]=\E_\diamond[h(Z,N,A_z,A_y)| N,A_z,A_y]$. This implies that we can re-express $\theta_1^*$ in terms of the $G=\diamond$ group
\begin{align}
    \theta_1^*=\E_\text{std}\bigg[
    \underbrace{\E_1\Big\{
    \underbrace{\E_\diamond\big(
    \underbrace{\E_1[Y| Z,N,A_z,A_y]}_{\mu_1(Z,N,A_z,A_y)}
    \big| N,A_z,A_y\big)}_{\nu_1^*(N,A_z,A_y)}
    \Big| A_y\Big\}}_{\eta_1^*(A_y)}
    \bigg].\label{eq:Seq_Exp_N_Bridge_SR}
\end{align}
Based on this expression, the N-Bridge-SR estimator involves a sequence of regressions. The innermost expectation $\E_1[\cdot]$ (denoted by $\mu_1(Z,N,A_z,A_y)$) is estimated by regressing the outcome $Y$ on $(Z,N,A_z,A_y)$ among the $G=1$ sample; the next expectation $\E_\diamond(\cdot)$ (denoted by $\nu_1^*(N,A_z,A_y)$) is then estimated by regressing the predicted values $\hat\mu_1$ on $(N,A_z,A_y)$ among the $G=\diamond$ sample, which effectively averages $\hat\mu_1$ over the empriical $\lambda_0^Z$; the next expectation $\E_1\{\cdot\}$ (denoted by $\eta_1^*(A_y)$) is estimated by regressing the predicted values $\hat\nu_1^*$ on $A_y$ among the $G=1$ sample. Finally, the outermost expectation $\E_\std[\cdot]$ is estimated by averaging the predicted values $\hat\eta_1^*$ on the standard $A_y$ sample. This estimator requires correct specification of the $\mu_1$, $\nu_1^*$, and $\eta_1^*$ models to consistently estimate $\theta_1^*$.

\subsubsection{N-Bridge-SWR: Sequential Weighted Regression} \label{subsubsec:Estimator_N_Bridge_SWR}

To construct a N-bridging SWR estimator, we need a new weighting function $\omega_{\diamond1^*}^{(N,A_z)}(N,A_z,A_y)$:
\begin{align}
    \omega_{\diamond1^*}^{(N,A_z)}(N,A_z,A_y) &\coloneq\frac{\P_1(N|A_z,A_y)}{\P_\diamond(N|A_z,A_y)}\frac{\P_1(A_z|A_y)}{\P_\diamond(A_z|A_y)}      =\frac{\odds(G=1\text{ vs }\diamond|N,A_z,A_y)}{\odds(G=1\text{ vs }\diamond|A_y)}.\label{eq:Weight_N_Bridge_RW}
\end{align}
This weight takes the artificial sample $G=\diamond$ and morphs its joint distribution of $(N,A_z)$, $\P_\diamond(N,A_z|A_y)$, to that of the $G=1$ sample, $\P_1(N,A_z|A_y)$. When expressed in odds terms (see note \ref{note:odds_bridge_N}), it avoids a model for $\lambda_1^N$.

The N-Bridge-SWR estimator of $\theta_1^*$ is based on the same sequential expectation as N-Bridge-SR (\ref{eq:Seq_Exp_N_Bridge_SR}) but instead fits the $\mu_1$, $\nu_1^*$, and $\eta_1^*$ models to the sample space of the intervention arm. This entails: 1) weighting the $\mu_1$ model by $\hatomega_{11^*}^{(Z,N)}\hatomega_{1T}^{A_y}$ (\ref{eq:Weight_N_Bridge_PW}) and (\ref{eq:Weight_Ay}); 2) weighting the $\nu_1^*$ model by $\hatomega_{\diamond1^*}^{(N,A_z)}\hatomega_{0T}^{A_y}$ (\ref{eq:Weight_N_Bridge_RW}) and (\ref{eq:Weight_Ay}); 3) weighting the $\eta_1^*$ model by $\hatomega_{1T}^{A_y}$ (\ref{eq:Weight_Ay}). Provided that the artificial $G=\diamond$ sample is properly constructed, the N-Bridge-SWR estimator is consistent for $\theta_1^*$ if the following nuisance components are correctly specified (in the case of outcome regressions) or consistently estimated (in the case of weighting functions): (i) $\mu_1$ or $\omega_{11^*}^{(Z,N)}\omega_{1T}^{A_y}$, and (ii) $\nu_1^*$ or $\omega_{\diamond1^*}^{(N,A_z)}\omega_{1T}^{A_y}$, and (iii) $\eta_1^*$ or $\omega_{1T}^{A_y}$. This amounts to eight avenues for consistent estimation of $\theta_1^*$. The N-Bridge-SWR is fully robust because it does not require any particular model to be correct.

\subsection{Settings where Simpler Estimation Strategies Emerge} \label{subsec:SimplifiedEstimators}

When no covariates are allowable (i.e., both $A_z$ and $A_y$ are empty), the linear estimator (E-OBD; (\ref{eq:Estimator_EOBD})) and the Z-Model-SR estimator (based on (\ref{eq:SeqExp_Z_Model_SR})), respectively, recover causal versions of the Oaxaca-Blinder and Fairlie decompositions described in Section \ref{sec:SDA}. When no covariates are intervention-allowable but some are outcome-allowable (i.e., $A_z$ is empty but $A_y$ is not), the steps involving $\kappa_1^*$ in the N-Model-SR, N-Model-SWR, Z-Bridge-SR, and Z-Bridge-SWR estimators can be skipped and the covariate-shifting weights $\omega_{01}^{A_z}$ reduce to one. When no covariates are outcome-allowable but some are intervention-allowable (i.e., $A_y$ is empty but $A_z$ is not), the steps involving $\eta_1^*$ can be skipped and the standardizing weights $\omega_{gT}^{A_y}$ reduce to one. $\eta_1^*$ is also obviated and the standardizing weight $\omega_{1T}^{A_y}$ reduces to one when $A_y$ is non-empty but $G=1$ is the standard population. When all covariates are allowable in some way (i.e., $N$ is empty but one of $A_z$ or $A_y$ or both are non-empty), the steps involving $\lambda_1^N$ (N-modeling estimators) and the steps that involve $\nu_1^*$ and assigning or sampling values of $N$ (N-bridging estimators) can be skipped. In fact, the following paired estimators coincide with one another within each pair: N-Model-PW and N-Bridge-PW, N-Model-SR and N-Bridge-SR, N-Model-SWR and N-Bridge-SWR, and N-Model-RW and N-Bridge-RW, showing that in this case the N-modeling and N-bridging strategies collapse into a single strategy.

\subsection{Extension to Multivariate Points of Intervention}\label{MultivariateIntervention}

Each estimation strategy readily extends to the setting where $\mathbf{Z}=\{Z_1,\dots,Z_J\}$ is a vector of intervention points. Here, we focus on a joint intervention to shift the distribution of $\mathbf{Z}$ to $\P_0(\mathbf{Z}|A_z,A_y)$ under a common set of intervention-allowables $A_z$. We simply modify our identification assumptions \ref{def:Assumption_Ay_Overlap}, \ref{def:Assumption_Innocuous_Sampling}, \ref{def:Assumption_Az_Overlap}, \ref{def:Assumption_Z_Exchangeability}, \ref{def:Assumption_Z_Positivity} and \ref{def:Assumption_Z_Consistency} to hold with respect to $\mathbf{Z}$, with the the key constraint that no component of $\mathbf{Z}$ may affect $(N,A_z,A_y)$. Then we replace $Z$ with $\mathbf{Z}$ in our identification formula (\ref{eq:ID_Result_Intervention_Arm}), in the outcome regressions (i.e., (\ref{eq:Y_Model_Linear}), $\mu_1$, and $ \zeta_1$) and in the components of the weighting functions. When the distribution of $\mathbf{Z}$ is modeled, as for the Z-Modeling estimators, one can model the factored joint distribution, e.g., $\P(\mathbf{Z}|\cdot)=\P(Z_1|\cdot)\times\P(Z_2|Z_1,\cdot)\times\dots\times\P(Z_J|Z_1,Z_2,\dots,Z_{J-1},\cdot)$. For the linear estimator (E-OBD), we can use (\ref{eq:Estimator_EOBD_Categorical}) to obtain $\theta_1^*$, where now $Z_j$ refers to a specific point of intervention. For the Z-bridging estimators, one can sample $\mathbf{Z}$ from the $G=0$ group or a multivariate uniform distribution to build the artificial sample $G=\blackdiamond$. For the N-modeling and N-bridging estimators, one need only to replace $Z$ with $\mathbf{Z}$ in the nuisance components.

\subsection{Perspectives from the Influence function} \label{eq:Influence_Function_Counterfactual_Arm}

The influence function for $\theta_1^*$, derived in the Supplement, suggests the estimation strategies we have presented.
\begin{align}
\varphi_{}^*(O)=\varphi^*_{I}(O)+\varphi_{II}^*(O)+\varphi_{III}^*(O)+\varphi_{IV}^*(O),\label{eq:if-theta.1*}
\end{align}
where 
\begin{align*}
\varphi^*_{\I}(O)&=\frac{G}{\P(G=1)}\omega_{1T}^{A_y}(A_y)\omega_{11^*}^{(Z,N)}(Z,N,A_z,A_y)[Y-\mu_1(Z,N,A_z,A_y)] \nonumber \\
\varphi_{\II}^*(O)&=\frac{1-G}{\P(G=0)}\omega_{0T}^{A_y}(A_y)\omega_{01}^{A_z}(A_z,A_y)[\zeta_1^*(Z,A_z,A_y)-\kappa_1^*(A_z,A_y)] \nonumber\\
\varphi_{\III}^*(O)&=\frac{G}{\P(G=1)}\omega_{1T}^{A_y}(A_y)[\nu_1^*(N,A_z,A_y)-\eta_1^*(A_y)] \nonumber \\
\varphi_{\IV}^*(O)&=\frac{T}{\P(T=1)}[\eta_1^*(A_y)-\theta_1^*],
\end{align*}
and $O=(T,G,A_y,A_z,N,Z,Y)$. All PW estimators, and the Z-Model-SR and Z-Model-SWR estimators are reflected in the term $\varphi^*_{\I}(O)$. The N-modeling and Z-bridging SR and SWR estimiators, as well as the linear estimator E-OBD are reflected in the term $\varphi^*_{\II}(O)$. The N-bridging SR and SWR estimators are reflected in the term $\varphi^*_{\III}(O)$. The last term $\varphi^*_{\IV}(O)$ appears when $\P_{\std}(A_y)$ is known through the data, which is a sample.

\section{Diagnostics} \label{sec:Diagnostics}

Given the critical role of $\lambda_0^Z$ and $\lambda_1^N$ in the modeling estimators, we propose the following diagnostics:
\begin{alignat}{1}
	\P_{\square n}[b_0(Z,A_z,A_y)] &= \P_{0n}[Y] \label{eq:DiagnosticModelLambda} \\
	\P_{\blacksquare n}[b_1(N,A_z,A_y)] &= \P_{1n}[Y] \label{eq:DiagnosticModelDelta}
\end{alignat}
where $b_0(\cdot)$ and $b_1(\cdot)$ are \emph{working} models (i.e., not necessarily correctly specified models) for the conditional mean of $Y$ (i.e., $b_0(Z,A_z,A_y)\equiv \E_0[Y|Z,A_z,A_y]$ and $b_1(N,A_z,A_y)\equiv \E_1[Y|N,A_z,A_y])$ fit by GLMs with canonical links (to leverage mean recovery). The artificial sample $G=\square$ is created by taking the $G=0$ sample and drawing $Z$ from the model for $\lambda_0^z$. The artificial sample $G=\blacksquare$ is created by taking the $G=1$ sample and drawing $N$ from the model for $\lambda_1^N$. (\ref{eq:DiagnosticModelLambda}) evaluates the model for $\lambda_0^Z$ for use in $Z$-modeling estimators, and (\ref{eq:DiagnosticModelDelta}) evaluates the model for $\lambda_1^N$ for use in the $N$-modeling estimators. We want the LHS sample mean to return the appropriate sample mean on the RHS. Because the working models $b_0$ and $b_1$ exhibit mean recovery, this occurs when the critical density is correctly specified but not otherwise. (\ref{eq:DiagnosticModelLambda}) and (\ref{eq:DiagnosticModelDelta}) each isolate the quality of a particular critical density model because they are expressed in terms of $Z$ or $N$ but not both.

Many of the estimators use weighting functions. As density ratios, when consistently estimated, each expression of the interventional weights $\hat{\omega}^{(Z,N)}_{11^*}$ ((\ref{eq:Weight_Z_Model_PW}), (\ref{eq:Weight_N_Model_PW}), (\ref{eq:Weight_Z_Bridge_PW_2}), and (\ref{eq:Weight_N_Bridge_PW})) will have a mean of one among the $G=1$ sample; the weights $\hat{\omega}^Z_{\blackdiamond 1^*}$ and $\hat{\omega}^{(N,A_z)}_{\diamond 1^*}$ used in the second steps of the bridging SWR estimators will, respectively, have a mean of one among the artificial samples $G=\blackdiamond$ and $G=\diamond$; the covariate-shifting weight $\hat{\omega}^{A_z}_{01}$ (\ref{eq:Weight_Az}) will have a mean of one among the $G=0$ sample; and the standardizing weight $\hat{\omega}^{A_y}_{gT}$ (\ref{eq:Weight_Ay}) will have a mean of one among the full sample that combines the $G=1$ and $G=0$ samples. These diagnostics are satisfied when weights are properly constructed. One can also assess the weights' distribution. However, moment diagnostics are not sufficient because they can be satisfied in pathological cases (e.g., all observations' weights equal one).

A further diagnostic is to check whether the weights or their components balance the distribution of their targeted covariates to that of their intended target population (see Table \ref{tab:WeightProperties} for details). For each expression of the interventional weights $\hat{\omega}^{(Z,N)}_{11^*}$ ((\ref{eq:Weight_Z_Model_PW}), (\ref{eq:Weight_N_Model_PW}), (\ref{eq:Weight_Z_Bridge_PW_2}), and (\ref{eq:Weight_N_Bridge_PW})), provided that (\ref{eq:DiagnosticModelLambda}) holds, we can assess whether they balance the $Z$ distribution of the $G=1$ sample according to that of the artificial sample $G=\blacktriangle$ (i.e., the $G=1$ sample where $Z$ is randomly drawn from a model for $\lambda_0^Z$). This idea is in the same spirit of the balance proposals discussed in \citet{nguyen2023causal}. A diagnostic to check the weights ability to balance covariate means according to their intended target distributions, as recommended in \citet{chattopadhyay2020balancing}), is
\begin{equation}
   \frac{\sum_i\hat{\omega}_i\times \textup{I}(analytic_i)X_i}{\sum_i \hat{\omega}_i\times \textup{I}(analytic_i)}=\frac{\sum_i\textup{I}(target_i) X_i}{\sum_i \textup{I}(target_i)} \label{eq:TargetBalance}
\end{equation}
where $X_i$ is a covariate for individual $i$, $\textup {I}(\cdot)$ is the indicator function, and $analytic_i$ and $target_i$ refer to the individual's membership in the analytic and target populations, respectively. To compare across covariates, (\ref{eq:TargetBalance}) could be divided by $X$'s \emph{unweighted} standard deviation in the target population.

\section{Simulation Study} \label{sec:Simulation_Study}

\subsection{Design}

We performed a simulation study to examine the consistency and efficiency for $\theta_1^*$ at large (n=5,000) and small (n=500) sample sizes, for continuous and binary outcomes. We also examined the robustness of SWR estimators under mis-specified nuisance models. We omitted estimators for $\theta_g$, the group-specific mean in the observation arm, whose statistical properties are known \citep{lunceford2004stratification,gabriel2024inverse}.

The data generating model (DGM) for the observed data was informed by the National Health and Nutrition Examination Survey \citep{nhanes2017} and the clinical literature \citep{fontil2015simulating}, adapting the procedure of \citet{chang2024importance}. We generated systolic blood pressure $Y$ at follow-up as a continuous outcome, and dichotomized it as $W$ for a binary outcome. These outcomes were dependent on covariates considered outcome-allowable (age $A_{y1}$ and sex $A_{y2}$; continuous and binary), intervention-allowable (baseline systolic blood pressure $A_{z1}$ and diabetes $A_{z2}$; continuous and binary), and non-allowable (educational attainment $N_1$ and insurance status $N_2$; both binary). The point of intervention $Z$ (treatment intensification) and race $G$ were both binary. The DGM contained dependencies between these variables, as well as heterogeneous effects across levels of $G$. In the model for the outcome $Y$, the effect of treatment intensification $Z$ depended on systolic blood pressure $A_{z1}$ and group membership $G$. Our DGM reflected standard clinical practice where antihypertensive treatment is only intensified when blood pressure is elevated (e.g., systolic $\geq$ 140 mm Hg), which is a deterministic relationship in the true model for $Z$. 

As described in section \ref{subsec:HypotheticalDesign}, we cast our estimand in a study design where (i) each social group $G=1$ and $G=0$ is enrolled so that its outcome-allowables are distributed as in a standard population denoted $T=1$ (ii) after enrollment, persons are randomized to the observation arm (that leaves $Z$ alone) or the intervention arm (that modifies the distribution of $Z$ to follow $\P_0(Z|A_zA_y)$ when $G=1$). We chose the $G=1$ group to represent the standard population (i.e., $G=1 \Longrightarrow T=1$). We obtained the true values of $\theta_1$ numerically by (i) modifying the distributions of $A_y$ and $Z$ in the DGM to reflect their distributions in both arms of this hypothetical study design (ii) generating a large sample for each arm ($n=4\times 10^9$) and (iii) taking group-specific means in the sample generated for each arm. (See the Supplemental Material for further details on the DGM).

We specified all estimators' nuisance models with correct models following the DGM, flexible models including higher order terms and two-way interactions between selected covariates, and incorrect models excluding higher order terms and key covariates $A_{Y1}$, $A_{Z1}$, and $N_2$. For the interventional weights $\omega_{11^*}^{(Z,N)}$, we used (\ref{eq:Weight_Z_Model_PW}) for Z-Model-PW/SWR, (\ref{eq:Weight_N_Model_PW}) for N-Model-PW/SWR, (\ref{eq:Weight_Z_Bridge_PW_2}) for Z-Bridge-PW/SWR, and (\ref{eq:Weight_N_Bridge_PW}) for N-Bridge-PW/SWR, with (\ref{eq:Weight_N_Model_PW}), (\ref{eq:Weight_Z_Bridge_PW_2}), and (\ref{eq:Weight_N_Bridge_PW}) expressed in odds terms. For the large sample size (n=5,000) with correct or flexible nuisance models for $\omega_{11^*}^{(Z,N)}$, we respected the true model for $Z$ by estimating $\omega_{11^*}^{(Z,N)}$ among those with $A_{Z1}\geq140$ and setting  $\omega_{11^*}^{(Z,N)}$ to $1$ when $A_{Z1}<140$, and by setting $Z=0$ in the artificial sample $G=\blackdiamond$ when $A_{Z1}<140$. For the bridging estimators, we created the artifical samples by assignment (see sections \ref{subsec:Estimators_Z_Bridge} and \ref{subsec:Estimators_N_Bridge}). We ran 1,000 simulations and estimated the bias, standard error (SE), root mean squared error (RMSE), and 95\% confidence interval coverage (via the non-parametric bootstrap with 1,000 replicates) for each scenario.

\subsection{Results}

Table 1 shows the estimator performance when all nuisance models are correct or flexible. See Table \ref{tab:AllEstimators_Omit} for their performance when all nuisance models are incorrect, and Table \ref{tab:AllEstimators_CorrFlex_RW} for RW estimators. The estimators were consistent for $\theta_1^*$ across sample size and outcome types under correct or flexible nuisance models. The linear E-OBD estimator performed well for the continuous outcome (despite its inability to specify the $Z\times A_z$ interaction in the true $Y$ model), but suffered degraded consistency and coverage for the binary outcome.

At the smaller sample size, comparing estimators by their status as PW, RW, SR, or SWR estimators, the SWR estimators were the least biased, while the SR estimators were slightly more biased but had coverage closer to the nominal rate. The RW were the most biased, followed by the PW estimators, both of which had anti-conservative coverage. Comparing estimators by their status as modeling or bridging the $Z$ or $N$ distribution, the modeling strategies were less efficient with anti-conservative coverage. The N-bridging estimators dominated all other strategies in their respective classes, achieving the lowest bias and most accurate coverage. The Z-Bridge-SWR and N-Bridge-SWR estimators had the best performance across all estimators.

Table 2 shows the robustness of the SWR estimators under various scenarios of nuisance model mis-specification that nonetheless achive consistent estimation (as outlined in sections \ref{subsubsec:Estimator_Z_Model_SWR}, \ref{subsubsec:Estimator_N_Model_SWR}, \ref{subsubsec:Estimator_Z_Bridge_SWR}, and \ref{subsubsec:Estimator_N_Bridge_SWR}). Our results confirm the robustness properties of the SWR estimators across sample size and outcome types. Comparing across estimators, the N-Model SWR estimator was the most biased and showed worse coverage as dependence on consistent weight estimation increased. Overall, the Z-Model-SWR and N-Bridge-SWR estimators dominated the Z-Bridge-SWR estimator with lower bias and more accurate coverage.

\section{Data Application} \label{sec:DataApplication}

We analyzed data from 79,898 primary care visits of 25,287 adult patients seen in the Johns Hopkins Community Physicians (JHCP) health system from 2018 to 2022. The cohort was assembled to emulate a target trial  where each trial’s “time zero” was anchored on bi-weekly calendar intervals. Patients could enroll in a trial (including multiple trials) whenever they met the following eligibility criteria: age 18 years or older, self-reported race as measured in the electronic medical record as “Black” or “White”, current office visit with elevated blood pressure (BP) readings (systolic BP$\geq$ 140 mm Hg or diastolic BP $\geq$ 90 mm Hg), no office visit with elevated BP readings in the prior two weeks but 2+ primary care office visits with elevated BP readings in the prior year, recorded diagnosis of primary hypertension in the prior year, were not pregnant or diagnosed with end stage renal disease, liver disease, dementia, or metastatic tumor in the prior year, and all allowable covariates were measured by the time of the current visit defining enrollment into the trial. We considered age and sex as outcome allowable $A_y$. We considered body mass index, chronic kidney disease, diabetes, cardiovascular disease, congestive heart failure, baseline numbers of antihypertensive medications and medication classes, current systolic and diastolic BP as intervention-allowable $A_z$, as these reflect relevant factors used by clinical guidelines to determine antihypertensive medication regimens \citep{whelton20182017}. As non-allowable covariates $N$ we considered the number of prior primary care visits, type of health insurance, and the national percentile of the area deprivation index, all measured at baseline. Our point of intervention $Z$ was treatment intensification, which was measured as an initiation, increase in dose, or addition of a new class of an antihypertensive medication within the two weeks following the current visit, the date of the visit marking the start of follow-up. We measured our outcomes systolic BP $Y$ and uncontrolled hypertension $W$ (systolic BP $\geq$ 140 mm Hg or diastolic BP $\geq$ 90 mm Hg) by taking BP measures from the closest visit to the 6-month follow-up mark between 4.5 and 7.5 months. Further details on the dataset and its construction are available in \citet{meche2026target}.

\subsection{Analysis}

We defined the Black population as the historically disadvantaged group $G=1$ and the standard population $T=1$, and the White population as the historically advantaged group $G=0$. We used the weighting estimator of section \ref{sec:Estimators_Observation_Arm} to measure the disparity in antihypertensive treatment intensification, adjusting for all allowable covariates. We estimated $\theta_1^*$ the mean systolic BP and proportion of uncontrolled hypertension for $G=1$ in the intervention arm of our target trial using the estimators described in section \ref{sec:Estimators_Intervention_Arm}. We estimated $\theta_1$ and $\theta_0$ the group-specific mean outcomes of the observation arm using the estimators described in section \ref{sec:Estimators_Observation_Arm}, and calculated the reduced disparity for each outcome. When implementing these estimators, we included cubic splines for age and systolic BP, and (except for E-OBD) interactions between treatment intensification and (i) treatment status at baseline as well as (ii) number of concomitant treatment classes at baseline. We used multinomial models to estimate the factored joint distribution of the non-allowables, after discretizing the area deprivation index distribution into 10 evenly spaced categories across the response scale. To estimate the average means and disparities over 2018-2022, we adjusted for calendar time (via cubic splines) by including it along with outcome-allowable covariates $A_y$ as discussed in \citet{jackson2025target}. For the interventional weights $\omega_{11^*}^{(Z,N)}$, we used (\ref{eq:Weight_Z_Model_PW}) for Z-Model-PW/SWR, (\ref{eq:Weight_N_Model_PW}) for N-Model-PW/SWR, (\ref{eq:Weight_Z_Bridge_PW_2}) for Z-Bridge-PW/SWR, and (\ref{eq:Weight_N_Bridge_PW}) for N-Bridge-PW/SWR, with (\ref{eq:Weight_N_Model_PW}), (\ref{eq:Weight_Z_Bridge_PW_2}), and (\ref{eq:Weight_N_Bridge_PW}) expressed in odds terms. For the bridging estimators, we created the artificial samples by sampling from uniform distributions of $Z$ or $N$. We addressed missing outcomes using the methods described in the Supplement. Because patients could enroll for each trial where they met eligibility, we accounted for within-person correlations by estimating 95\% confidence intervals obtained via a non-parametric cluster bootstrap \citep{field2007bootstrapping} that sampled persons with replacement and retained all observations of sampled persons. We assessed the quality of our nuisance models using the diagnostics outlined in section \ref{sec:Diagnostics}.

\subsection{Results}

Black persons received slightly less treatment intensification than White patients by 1.9\% (95\% CI 1.2\%, 2.7\%), after accounting for outcome- and intervention-allowable covariates. At follow up, Black and White patients had an average systolic BP of 142 and 139 mm Hg, respectively, with an observed disparity of 2.8 mm Hg (95\%CI 2.4, 3.3), after accounting for outcome-allowable covariates. With respect to uncontrolled hypertension, 53\% and 48\% of Black and White patients, respectively, had uncontrolled hypertension, with an observed disparity of 5.5\% (95\%CI 4.3\%, 6.6\%). Table \ref{tab:Data_Application} shows the estimated results and 95\% confidence intervals for outcome mean in the intervention arm of the envisioned target trial, along with the reduced disparity and the residual disparity. The diagnostics for the critical density models of $\lambda_0^Z$ and $\lambda_1^N$ (data not shown) and the weights (see Supplement Figures \ref{fig:Weight_Ay_1T}-\ref{fig:Weight_C_0_G0}) suggested that the nuisance models for the estimators were adequately specified. Interestingly, the varaibility of the interventional weights $\omega^{(Z,N)}_{11^*}$ was substantially greater for the forms used in the modeling strategies compared to the forms used in the bridging strategies (see Table \ref{tab:Weight_Distribution}).  The results were similar across all estimators, showing outcome means of of 142 mm Hg systolic BP and a proportion with uncontrolled hypertension of 53\% among Black patients under the hypothetical intevention, and a negligible reduction in the observed disparity in systolic BP and in uncontrolled hypertension. The estimators’ confidence interval width were wider for the linear and modeling estimators. The small effect size likely stems from the small disparity in treatment intensification. Overall, the results suggest that, in this healthcare system, an intervention to remove disparities in treatment intensification would have little impact on hypertension control disparities.

\section{Conclusion} \label{sec:Conclusion}

In this paper, we toured estimation strategies for CDA, including existing methods and introduced novel estimators that overcome key challenges, and provided diagnostics for their implementation. We examined estimators that model either the distribution of $Z$ (the point of intervention) or $N$ (the non-allowable covariates). We also introduced estimators that build `bridge' samples to avoid modeling the distributions of $Z$ or $N$. For each estimator class, we presented or developed pure weighting (PW), sequential regression (SR), and multiply robust Weighted SR (SWR) estimators (and non-robust Regress-then-Weight estimators (RW), see the Supplement). We showed that these estimators are consistent and that they relate to the influence function of our estimand's identifying formula. We also showed that certain estimators, under simplifying circumstances, reduce to causal versions of statistical decomposition estimators that are widely used in the applied research community.

Here, we provide some guidance for choosing between estimators. The simplicity of the linear estimator is valuable pedagogically, but its assumptions are overly restrictive. Of the more flexible estimators, the Z-Modeling estimators are very intuitive, and require fitting the fewest number of models. They may be advantageous when one has adequate substantive knowledge to correctly model $\lambda_0^Z$, the interventional distribution of the estimand (the critical density of $Z$), which may be limited to scenarios where $Z$ is binary. If chosen, we recommend using the diagnostic for this distribution (\ref{eq:DiagnosticModelLambda}) to support their implementation. We do not prioritize the use of the N-modeling estimators (over their alternatives) whose inferior performance likely reflects the challenge of modeling the joint distribution of covariates. Both bridging estimators avoid modeling any density, but the N-bridging estimators require fewer models, and in our simulation results they dominated every other estimator in their respective subclass. Of the SWR estimators, the N-bridging estimator had the best overall performance across scenarios. The N-bridging estimators therefore offer a very attractive option for routine estimation, especially when dealing with continuous or multivariate points of intervention, but the Z-bridging estimators remain a viable option. Within a given estimation strategy, if the weights can be estimated well and are not highly variable, the SWR estimators could serve as a first-line option. Otherwise, the SR estimators would be the next logical choice. The PW estimators or their RW counterparts can be used when the outcome is difficult to model or one wishes to fit a specialized model such as Cox proportional hazards or quantile regression.   

Our multiply robust SWR estimators do not make use of machine learning tools that are advocated for when covariates or their interactions are high-dimensional. They could be incorporated via Cross-validated Targeted Minimum Loss Estimation \citep{van2011cross,diaz2020machine}, which satisfies the mean recovery condition for outcome models. Alternatively, augmented balancing weights that leverage covariate balance conditions \citep{chattopadhyay2020balancing,ben2021balancing}, or emerging techniques for density ratio estimation \citep{hines2025learning}, could be used to estimate the weights directly. Each of these proposals can accommodate machine learning but this is left for future work. Importantly, unlike current AIPW-based proposals for CDA (\citep{lundberg2024gap,yu2025nonparametric,park2025causal}), our SWR estimators readily generalize to continuous and multivariate points of intervention. Furthermore, the sensitivity of machine-learning enabled AIPW estimators to random-seed selection, their performance in finite samples, and the optimal choice of folds in their required sample splitting and cross-validation procedures remains an active area of research \citep{chernozhukov2018double,naimi2023challenges,schader2024don,naimi2024pseudo,zivich2024commentary,ahrens2025introduction}. The robustness of our SWR estimators can be enhanced by guiding their implementation with our proposed diagnostics for modeled densities and weighting functions. Moreover, we provided different forms of the interventional weights $\omega_{11^*}^{(Z,N)}$ used to fit the initial $\mu_1$ outcome regressions. As the forms rely on different nuisance models, they can be swapped across SWR estimators, offering additional avenues for robustness.

\citet{steiner2024robust} discusses proper variable selection for multiply robust estimators. We advise users to include the same respective set of covariates for outcome-allowables $A_y$, intervention-allowables $A_z$, and non-allowables $N$ in whichever nuisance models they appear. If investigators vary $N$, perhaps out of necessity such as in high-dimensional settings, they should ensure that the covariate set for $N$ used in the outcome regressions can overcome any biases not addressed by the covariate set for $N$ used in the weighting functions and vice versa. See \citet{steiner2024robust} for further details. Sensitivity analysis for unmeasured confounding in CDA is also important avenue to assess robustness to variable selection \citep{park2023sensitivity,shen2025calibrated}.

Regarding statistical inference, all of our estimators are M-estimators \citep{stefanski2002calculus}, and therefore are asymptotically normal and are amenable to bootstrap procedures, including ones developed for clustered data \citep{field2007bootstrapping} or very large datasets \citep{kleiner2014scalable}. Our simulation results suggest that the standard errors for our multiply robust estimators, when obtained using bootstrap methods or M-estimation theory, may also enjoy a multiple robustness property \citep{shook2025double,wu2025doublerobustestimatorcausalinference}. They may also achieve the non-parametric efficiency bound, as weighted regression estimators for the average treatment effect do generally \citep{gabriel2024inverse}. Proof of these conjectures for our estimators is left for future work.

\section{Author Contributions}
Conception (JWJ); development of estimators (JWJ \& TQN); derivation of influence function \& proof of SWR estimator robustness (TQN); simulation study (JWJ \& TC); data application (JWJ \& AM); drafting of manuscript (JWJ); editing for critical scientific content (TC, AM, TQN).

\section{Funding Statement}
Research reported in this publication was supported by the National Heart, Lung, and Blood Institute of the NIH and the NIH Office of the Director under awards K01HL145320 and R01HL169956.

\newpage

\bibliography{refs}

\newpage

\section{Tables}

\begin{table}[htpb!]
  \caption{Estimator performance under correct or flexible nuisance models by outcome type and sample size.}
  \centering
    \begin{tabular}{lrrrrrrrrr}
\hline    
    \multicolumn{10}{c}{Sample Size = 5,000} \\
\hline    
          & \multicolumn{4}{c}{Continuous ($\theta_1^*=132.606$)} &       & \multicolumn{4}{c}{Binary ($\theta_1^*=.245$)} \\
          & \multicolumn{1}{l}{Bias} & \multicolumn{1}{l}{SE} & \multicolumn{1}{l}{RMSE} & \multicolumn{1}{l}{Coverage} &       & \multicolumn{1}{l}{Bias} & \multicolumn{1}{l}{SE} & \multicolumn{1}{l}{RMSE} & \multicolumn{1}{l}{Coverage} \\
\hline          
    Linear (E-OBD)$^\dagger$ & 0.020 & 0.273 & 0.273 & 0.940 &       & 0.014 & 0.010 & 0.018 & 0.717 \\
    Pure Weighting (PW) &       &       &       &       &       &       &       &       &  \\
    \multicolumn{1}{r}{Z-Model-PW} & 0.013 & 0.269 & 0.270 & 0.937 &       & 0.001 & 0.010 & 0.010 & 0.956 \\
    \multicolumn{1}{r}{N-Model-PW} & -0.004 & 0.270 & 0.270 & 0.936 &       & 0.000 & 0.010 & 0.010 & 0.959 \\
    \multicolumn{1}{r}{Z-Bridge-PW} & -0.019 & 0.269 & 0.269 & 0.932 &       & -0.001 & 0.010 & 0.010 & 0.950 \\
    \multicolumn{1}{r}{N-Bridge-PW} & -0.003 & 0.270 & 0.270 & 0.940 &       & 0.000 & 0.010 & 0.010 & 0.959 \\
    \multicolumn{10}{l}{Sequential Regressions (SR)} \\
    \multicolumn{1}{r}{Z-Model-SR} & 0.013 & 0.269 & 0.269 & 0.942 &       & 0.000 & 0.010 & 0.010 & 0.961 \\
    \multicolumn{1}{r}{N-Model-SR} & -0.086 & 0.267 & 0.281 & 0.927 &       & -0.004 & 0.010 & 0.010 & 0.936 \\
    \multicolumn{1}{r}{Z-Bridge-SR} & -0.017 & 0.267 & 0.267 & 0.941 &       & 0.000 & 0.010 & 0.010 & 0.953 \\
    \multicolumn{1}{r}{N-Bridge-SR} & -0.018 & 0.267 & 0.267 & 0.939 &       & 0.000 & 0.010 & 0.010 & 0.953 \\
    \multicolumn{10}{l}{Sequential Weighted Regressions (SWR)} \\
    \multicolumn{1}{r}{Z-Model-SWR} & 0.014 & 0.272 & 0.272 & 0.940 &       & 0.001 & 0.010 & 0.010 & 0.964 \\
    \multicolumn{1}{r}{N-Model-SWR} & -0.062 & 0.270 & 0.277 & 0.931 &       & -0.003 & 0.010 & 0.010 & 0.950 \\
    \multicolumn{1}{r}{Z-Bridge-SWR} & 0.008 & 0.270 & 0.270 & 0.936 &       & 0.001 & 0.010 & 0.010 & 0.953 \\
    \multicolumn{1}{r}{N-Bridge-SWR} & 0.007 & 0.270 & 0.270 & 0.937 &       & 0.001 & 0.010 & 0.010 & 0.954 \\
\hline
    \multicolumn{10}{c}{Sample Size = 500} \\
\hline
          & \multicolumn{4}{c}{Continuous ($\theta_1^*=132.606$)} &       & \multicolumn{4}{c}{Binary ($\theta_1^*=.245$)} \\
          & \multicolumn{1}{l}{Bias} & \multicolumn{1}{l}{SE} & \multicolumn{1}{l}{RMSE} & \multicolumn{1}{l}{Coverage} &       & \multicolumn{1}{l}{Bias} & \multicolumn{1}{l}{SE} & \multicolumn{1}{l}{RMSE} & \multicolumn{1}{l}{Coverage} \\
\hline          
    Linear (E-OBD)$^\dagger$ & 0.009 & 0.838 & 0.837 & 0.954 &       & 0.013 & 0.032 & 0.035 & 0.937 \\
    \multicolumn{10}{l}{Pure Weighting (PW)} \\
    \multicolumn{1}{r}{Z-Model-PW} & 0.198 & 1.654 & 1.665 & 0.983 &       & -0.008 & 0.048 & 0.049 & 0.982 \\
    \multicolumn{1}{r}{N-Model-PW} & -0.139 & 0.918 & 0.929 & 0.990 &       & 0.002 & 0.037 & 0.037 & 0.983 \\
    \multicolumn{1}{r}{Z-Bridge-PW} & -0.202 & 0.824 & 0.848 & 0.948 &       & 0.002 & 0.032 & 0.032 & 0.964 \\
    \multicolumn{1}{r}{N-Bridge-PW} & -0.025 & 0.870 & 0.870 & 0.969 &       & 0.001 & 0.034 & 0.034 & 0.957 \\
    \multicolumn{10}{l}{Sequential Regressions (SR)} \\
    \multicolumn{1}{r}{Z-Model-SR} & 0.030 & 0.838 & 0.838 & 0.964 &       & 0.002 & 0.033 & 0.033 & 0.960 \\
    \multicolumn{1}{r}{N-Model-SR} & -0.127 & 0.816 & 0.825 & 0.949 &       & -0.004 & 0.032 & 0.032 & 0.955 \\
    \multicolumn{1}{r}{Z-Bridge-SR} & -0.050 & 0.809 & 0.810 & 0.952 &       & -0.001 & 0.032 & 0.032 & 0.951 \\
    \multicolumn{1}{r}{N-Bridge-SR} & -0.050 & 0.809 & 0.810 & 0.952 &       & -0.001 & 0.032 & 0.032 & 0.949 \\
    \multicolumn{10}{l}{Sequential Weighted Regressions (SWR)} \\
    \multicolumn{1}{r}{Z-Model-SWR} & 0.030 & 0.855 & 0.855 & 0.964 &       & 0.003 & 0.035 & 0.035 & 0.970 \\
    \multicolumn{1}{r}{N-Model-SWR} & -0.093 & 0.833 & 0.838 & 0.975 &       & -0.005 & 0.034 & 0.034 & 0.980 \\
    \multicolumn{1}{r}{Z-Bridge-SWR} & -0.020 & 0.819 & 0.819 & 0.953 &       & -0.001 & 0.033 & 0.033 & 0.961 \\
    \multicolumn{1}{r}{N-Bridge-SWR} & -0.020 & 0.822 & 0.822 & 0.955 &       & -0.001 & 0.033 & 0.033 & 0.956 \\

\hline    
    \multicolumn{10}{l}{\footnotesize{$^\dagger$The true outcome model contained a $Z\times A_z$ interaction. The outcome model in Linear (E-OBD) omitted $Z\times A_z$ interaction.}} \\
    \end{tabular}%
\label{tab:AllEstimators_CorrFlex}
\end{table}%

\begin{table}[htpb!]
  \caption{Robustness of SWR estimators under nuisance specification scenarios by outcome and sample size.}
  \centering
      \begin{tabular}{rllllllllrrrr}
      \hline
      \multicolumn{13}{c}{Sample Size = 5,000} \\
      \hline
          & \multicolumn{8}{c}{Nuisance Specification Scenario} & \multicolumn{2}{c}{Con. ($\theta_1^*=132.606$)} & \multicolumn{2}{c}{Bin. ($\theta_1^*=.245$)} \\
          & $\mu$ & $\zeta$ & $\nu$ & $\kappa$ & $\omega^{(Z,N)}_{11^*}$ & $\omega^{Z}_{\blackdiamond 1^*}$ & $\omega^{(N,A_z)}_{\diamond 1^*}$ & $\omega^{A_z}_{01}$ & \multicolumn{1}{l}{Bias} & \multicolumn{1}{l}{Coverage} & \multicolumn{1}{l}{Bias} & \multicolumn{1}{l}{Coverage} \\
     \hline
    \multicolumn{1}{r}{Z-Model-SWR} & \cmark &       &       &       & \xmark &       &       &       & 0.014 & 0.940 & 0.001 & 0.960 \\
          & \xmark &       &       &       & \cmark &       &       &       & 0.013 & 0.944 & 0.000 & 0.956 \\
          &       &       &       &       &       &       &       &       &       &       &       &  \\
    \multicolumn{1}{r}{N-Model-SWR} & \cmark &       &       & \cmark & \xmark &       & & \xmark        & -0.087 & 0.929 & -0.004 & 0.940 \\
          & \cmark &       &       & \xmark & \xmark &       & & \cmark        & -0.158 & 0.899 & -0.007 & 0.908 \\
          & \xmark &       &       & \cmark & \cmark &       & & \xmark        & -0.266 & 0.854 & -0.011 & 0.840 \\
          & \xmark &       &       & \xmark & \cmark &       & & \cmark        & -0.334 & 0.788 & -0.013 & 0.774 \\
          &       &       &       &       &       &       &       &       &       &       &       &  \\
    \multicolumn{1}{r}{Z-Bridge-SWR} & \cmark & \cmark &       & \cmark & \xmark & \xmark &       & \xmark & -0.016 & 0.937 & 0.000 & 0.950 \\
          & \cmark & \cmark &       & \xmark & \xmark & \xmark &       & \cmark & -0.088 & 0.929 & -0.003 & 0.938 \\
          & \cmark & \xmark &       & \cmark & \xmark & \cmark &       & \xmark & -0.079 & 0.933 & 0.006 & 0.911 \\
          & \cmark & \xmark &       & \xmark & \xmark & \cmark &       & \cmark & -0.149 & 0.902 & 0.003 & 0.943 \\
          & \xmark & \cmark &       & \cmark & \cmark & \xmark &       & \xmark & -0.024 & 0.934 & -0.001 & 0.954 \\
          & \xmark & \cmark &       & \xmark & \cmark & \xmark &       & \cmark & -0.095 & 0.926 & -0.004 & 0.931 \\
          & \xmark & \xmark &       & \cmark & \cmark & \cmark &       & \xmark & -0.028 & 0.933 & -0.001 & 0.954 \\
          & \xmark & \xmark &       & \xmark & \cmark & \cmark &       & \cmark & -0.097 & 0.928 & -0.004 & 0.931 \\
          &       &       &       &       &       &       &       &       &       &       &       &  \\
    \multicolumn{1}{r}{N-Bridge-SWR} & \cmark &       & \cmark &       & \xmark &       & \xmark &       & -0.010 & 0.943 & 0.000 & 0.954 \\
          & \cmark &       & \xmark &       & \xmark &       & \cmark &       & 0.058 & 0.940 & -0.002 & 0.948 \\
          & \xmark &       & \cmark &       & \cmark &       & \xmark &       & 0.007 & 0.944 & 0.000 & 0.959 \\
          & \xmark &       & \xmark &       & \cmark &       & \cmark &       & 0.004 & 0.940 & -0.004 & 0.939 \\
          &       &       &       &       &       &       &       &       &       &       &       &  \\
    \hline
    \multicolumn{13}{c}{Sample Size = 500} \\
   \hline

          & \multicolumn{8}{c}{Nuisance Specification Scenario} & \multicolumn{2}{c}{Con. ($\theta_1^*=132.606$)} & \multicolumn{2}{c}{Bin. ($\theta_1^*=.245$)} \\
          & $\mu$ & $\zeta$ & $\nu$ & $\kappa$ & $\omega^{(Z,N)}_{11^*}$ & $\omega^{Z}_{\blackdiamond 1^*}$ & $\omega^{(N,A_z)}_{\diamond 1^*}$ & $\omega^{A_z}_{01}$ & \multicolumn{1}{l}{Bias} & \multicolumn{1}{l}{Coverage} & \multicolumn{1}{l}{Bias} & \multicolumn{1}{l}{Coverage} \\
    \hline
    \multicolumn{1}{r}{Z-Model-SWR} & \cmark &       &       &       & \xmark &       &       &       & 0.029 & 0.962 & 0.004 & 0.966 \\
          & \xmark &       &       &       & \cmark &       &       &       & -0.171 & 0.974 & -0.017 & 0.950 \\
          &       &       &       &       &       &       &       &       &       &       &       &  \\
    \multicolumn{1}{r}{N-Model-SWR} & \cmark &       &       & \cmark & \xmark &       & & \xmark        & -0.121 & 0.954 & -0.005 & 0.959 \\
          & \cmark &       &       & \xmark & \xmark &       & & \cmark        & -0.209 & 0.956 & -0.008 & 0.964 \\
          & \xmark &       &       & \cmark & \cmark &       & & \xmark        & -0.265 & 0.989 & -0.005 & 0.977 \\
          & \xmark &       &       & \xmark & \cmark &       & & \cmark        & -0.371 & 0.983 & -0.008 & 0.979 \\
          &       &       &       &       &       &       &       &       &       &       &       &  \\
    \multicolumn{1}{r}{Z-Bridge-SWR} & \cmark & \cmark &       & \cmark & \xmark & \xmark &       & \xmark & -0.051 & 0.952 & -0.001 & 0.952 \\
          & \cmark & \cmark &       & \xmark & \xmark & \xmark &       & \cmark & -0.140 & 0.956 & -0.005 & 0.957 \\
          & \cmark & \xmark &       & \cmark & \xmark & \cmark &       & \xmark & -0.207 & 0.950 & 0.008 & 0.956 \\
          & \cmark & \xmark &       & \xmark & \xmark & \cmark &       & \cmark & -0.319 & 0.944 & 0.004 & 0.961 \\
          & \xmark & \cmark &       & \cmark & \cmark & \xmark &       & \xmark & -0.082 & 0.961 & 0.006 & 0.954 \\
          & \xmark & \cmark &       & \xmark & \cmark & \xmark &       & \cmark & -0.193 & 0.952 & 0.002 & 0.958 \\
          & \xmark & \xmark &       & \cmark & \cmark & \cmark &       & \xmark & -0.095 & 0.961 & 0.006 & 0.961 \\
          & \xmark & \xmark &       & \xmark & \cmark & \cmark &       & \cmark & -0.206 & 0.956 & 0.002 & 0.961 \\
          &       &       &       &       &       &       &       &       &       &       &       &  \\
    \multicolumn{1}{r}{N-Bridge-SWR} & \cmark &       & \cmark &       & \xmark &       & \xmark &       & -0.046 & 0.954 & -0.001 & 0.950 \\
          & \cmark &       & \xmark &       & \xmark &       & \cmark &       & 0.046 & 0.969 & -0.002 & 0.966 \\
          & \xmark &       & \cmark &       & \cmark &       & \xmark &       & 0.070 & 0.954 & 0.008 & 0.953 \\
          & \xmark &       & \xmark &       & \cmark &       & \cmark &       & 0.030 & 0.968 & 0.002 & 0.953 \\
    \hline 
    \multicolumn{13}{l}{\footnotesize{Nuisance specification denoted by \cmark ~(flexible) or \xmark ~(omitted interactions, higher order terms, and certain covariates in $(N_2,A_{z1},A_{y1})$).}} \\    
    \multicolumn{13}{l}{\footnotesize{Flexible specification was used for $\eta_1^*$ and $\omega_{gT}^{A_y}$ (all estimators) and $\lambda_0^Z$ (for Z-Model-SWR) and $\lambda_1^N$ (for N-Model-SWR).}} \\

    \end{tabular}%
  \label{tab:SWR_Robustness}%
\end{table}%

\begin{table}[htbp!]
  \centering
    \caption{Counterfactual mean and reduced disparity in the JHCP health system over 2018-2022, by estimator}
    \begin{tabular}{lrrrrr}
\hline    
& \multicolumn{2}{c}{Systolic Blood Pressure$^\dagger$} & & \multicolumn{2}{c}{Uncontrolled Hypertension (\%){$^\ddagger$}} \\
          & Counterfactual Mean & Reduced Disparity &       & Counterfactual Mean & Reduced Disparity  \\
\hline          
    Linear (E-OBD) & 141.8 (141.5 ,  142.1) & -0.1 (-0.2 ,  0.1)&       & 53.5 (52.7 ,  54.3) & 0.0 (-0.3 ,  0.3)  \\
    \multicolumn{6}{l}{Pure Weighting (PW)} \\
    \multicolumn{1}{r}{Z-Model-PW} & 141.8 (141.4 ,  142.1) & -0.1 (-0.2 ,  0.1) &       & 53.2 (52.4 ,  54.0) & -0.2 (-0.6 ,  0.2) \\
    \multicolumn{1}{r}{N-Model-PW} & 141.8 (141.5 ,  142.1) & 0.0 (-0.1 ,  0.1) &       & 53.4 (52.6 ,  54.1) & 0.0 (-0.3 ,  0.2)  \\
    \multicolumn{1}{r}{Z-Bridge-PW} & 141.8 (141.5 ,  142.1) & 0.0 (-0.1 ,  0.0) &       & 53.4 (52.6 ,  54.1) & -0.1 (-0.2 ,  0.0)  \\
    \multicolumn{1}{r}{N-Bridge-PW} & 141.8 (141.5 ,  142.1) & -0.1 (-0.1 ,  0.0) &       & 53.3 (52.6 ,  54.0) & -0.1 (-0.2 ,  0.0)  \\
    \multicolumn{6}{l}{Sequential Regressions (SR)} \\
    \multicolumn{1}{r}{Z-Model-SR} & 141.8 (141.4 ,  142.1) & -0.1 (-0.1 ,  0.0) &       & 53.3 (52.5 ,  54.0) & -0.1 (-0.2 ,  0.0)  \\
    \multicolumn{1}{r}{N-Model-SR} & 141.8 (141.4 ,  142.1) & 0.0 (-0.1 ,  0.0) &        & 53.3 (52.5 ,  54.0) & -0.1 (-0.3 ,  0.1)  \\
    \multicolumn{1}{r}{Z-Bridge-SR} & 141.8 (141.4 ,  142.1) & -0.1 (-0.1 ,  0.0)  &       & 53.3 (52.5 ,  54.0) & -0.1 (-0.2 ,  0.0)  \\
    \multicolumn{1}{r}{N-Bridge-SR} & 141.8 (141.4 ,  142.1) & -0.1 (-0.1 ,  0.0)  &       & 53.3 (52.5 ,  54.0) & -0.1 (-0.2 ,  0.0)  \\
    \multicolumn{6}{l}{Sequential Weighted Regressions (SWR)} \\
    \multicolumn{1}{r}{Z-Model-SWR} & 141.8 (141.4 ,  142.1) & 0.0 (-0.2 ,  0.1)  &       & 53.3 (52.5 ,  54.1) & -0.1 (-0.5 ,  0.3)  \\
    \multicolumn{1}{r}{N-Model-SWR} & 141.8 (141.4 ,  142.1) & -0.1 (-0.2 ,  0.0)  &       & 53.2 (52.4 ,  53.9) & -0.2 (-0.6 ,  0.1) \\
    \multicolumn{1}{r}{Z-Bridge-SWR} & 141.8 (141.5 ,  142.1) & -0.1 (-0.1 ,  0.0)  &       & 53.3 (52.5 ,  54.0) & -0.1 (-0.2 ,  0.0)  \\
    \multicolumn{1}{r}{N-Bridge-SWR} & 141.8 (141.5 ,  142.1) & 0.0 (-0.1 ,  0.0)  &       & 53.4 (52.6 ,  54.1) & 0.0 (-0.1 ,  0.1) \\
\hline    
\multicolumn{6}{l}{\footnotesize{{$^\dagger$}The observed disparity for systolic blood pressure was 2.8 mm Hg (95\%CI 2.4, 3.3)}} \\
\multicolumn{6}{l}{\footnotesize{{$^\ddagger$}The observed disparity for uncontrolled hypertension was 5.5\% (95\%CI 4.3\%, 6.6\%)}} 
\end{tabular}%
\label{tab:Data_Application}%
\end{table}%

\clearpage

\appendix

\renewcommand{\thetable}{S\arabic{table}}
\renewcommand{\thefigure}{S\arabic{figure}}
\setcounter{table}{0}

\section{Supplemental Material}

\subsection{Regress-then-Weight (RW) Estimators}

Another weighted average targeting $\theta_1^*$ suggests a weighting function that takes covariate data from the $G=0$ group and morphs the observed density $\P_0(Z,N,A_z|A_y)$ to that of the $G=1$ group in the intervention arm $\P_1^*(Z,N,A_z|A_y)$, and morphs the density of $A_y$ to that of the standard population, using $\mu_1(Z,N,A_z,A_y)$: 
\begin{align}
\theta_1^*=\E_0 \Big(\E_1[Y|Z,N,A_z,A_y]\times \frac{\P^*_1(Z,N,A_z|A_y)}{\P_0(Z,N,A_z|A_y)}\times\frac{\P_\std(A_y)}{\P_0(A_y)}\Big).
\end{align}
This approach forms the basis for some of the regress-then-weight approaches we now present. Unlike the Sequential Weighted Regression (SWR) estimators that also combine weighting and outcome regressions, these estimators are only consistent for $\theta_1^*$ when all of their nusiance models are correctly specified and weighting functions are consistently estimated. 

\subsubsection{Z-Model-RW}

A Z-Model-RW estimator of $\theta_1^*$ can be based on (\ref{eq:ID_Weighting}), taking a weighted mean among the $G=1$ group but replacing $Y$ with $\mu_1(Z,N,A_z,A_y)$. Thus, a Z-Model-RW estimator can be adapted from (\ref{eq:Estimator_Z_Model_PW}). It is consistent for $\theta_1^*$ if the model for $\mu_1$ is correctly specified and weights $\omega_{11^*}^{(Z,N)}$ (\ref{eq:Weight_Z_Model_PW}) and $\omega_{1T}^{A_y}$ (\ref{eq:Weight_Ay}) are consistently estimated.

Alternatively, as noted earlier, $\theta_1^*$ can be expressed as a weighted mean of $\mu_1(Z,N,A_z,A_y)$ in the $G=0$ group:
\begin{align}
    \theta_1^*
    &=\E_0\left[\omega_{01^*}^{(Z,N,A_z)}(Z,N,A_z,A_y)\omega_{0T}^{A_y}(A_y)\,\mu_1(Z,N,A_z,A_y)\right],\label{eq:ID_Weighting_Alt}
\end{align}
where the weighting function
\begin{align}
    \omega_{01^*}^{(Z,N,A_z)}(Z,N,A_z,A_y) &\coloneq \frac{\P_0(Z|A_z,A_y)\times\P_1(N|A_z,A_y)\times\P_1(A_z|A_y)}{\P_0(Z,N,A_z|A_y)}, 
\end{align}
shifts the distribution of $(Z,N,A_z)$ to that of the $G=1$ group in the intervention arm. This weighting function can be expressed in many ways. Here we use a factorization involving two pieces, $\omega_{01^*}^Z$ and $\omega_{01^*}^{(N,A_Z)}$
\begin{align}
    \omega_{01^*}^{(Z,N,A_z)}(Z,N,A_z,A_y) &= \underbrace{\frac{\P_0(Z|A_z,A_y)}{\P_0(Z|N,A_z,A_y)}}_{:=\omega_{01^*}^{Z}(Z,N,A_z,A_y)}
     \underbrace{\frac{\P_1(N|A_z,A_y)}{\P_0(N|A_z,A_y)}\frac{\P_1(A_z|A_y)}{\P_0(A_z|A_y)}}_{:=\omega_{01^*}^{(N,A_z)}(N,A_z,A_y)}\label{eq:Weight_Z_Model_RW}.
\end{align}
where $\lambda_0^Z$ appears in $\omega_{01^*}^{Z}$ and $\lambda_1^N$ appears in $\omega_{01^*}^{(N,A_Z)}$. Noting that $\omega_{01^*}^{(N,A_Z)}$ can be expressed as a ratio of odds functions
\begin{align}
\omega_{01^*}^{(N,A_z)}(N,A_z,A_y)= \frac{\odds(G=1\text{ vs }0|N,A_z,A_y)}{\odds(G=1\text{ vs }0|A_y)}, 
\end{align}
suggests a way to estimate $\omega_{01^*}^{(Z,N,A_z)}$ without modeling $\lambda_1^N$. Thus, this alternate Z-Model-RW estimator is the following weighted average of $\mu_1(Z,N,A_z,A_y)$ among the $G=0$ sample:
\begin{align}
    \hat\theta_{1,\text{Z-Model-RW-alt}}^*
    \coloneq\frac{\P_{0n}[\hat\omega_{01^*}^{(Z,N,A_z)}\hat\omega_{0T}^{A_y}\hat\mu_1(Z,N,A_z,A_y)]}{\P_{0n}[\hat\omega_{01^*}^{(Z,N,A_z)}\hat\omega_{0T}^{A_y}]}.
\end{align}
where the estimate of $\omega_{01^*}^{(Z,N,A,z)}$ is provided by the product of $\hat\omega_{01^*}^{Z}$ and $\hat\omega_{01^*}^{(N,A_Z)}$. It is consistent for $\theta_1^*$ if the model for $\mu_1$ is correctly specified and the weights $\omega_{01^*}^{(Z,N,A_z)}$ (\ref{eq:Weight_Z_Model_RW}) and $\omega_{0T}^{A_y}$ (\ref{eq:Weight_Ay}) are consistently estimated.

\subsubsection{N-Model-RW}

This estimator is based on the same weighted-outcome mean expression of $\theta_1^*$ in (\ref{eq:ID_Weighting_Alt}) that is the basis of the Z-Model-RW-alt. Here, we factorize the weighting function $\omega_{01^*}^{(Z,N,A_z)}$ into two terms $\omega_{01^*}^N$ and $\omega_{01}^{A_z}$ 
\begin{align}
    \omega_{01^*}^{(Z,N,A_z)}(Z,N,A_z,A_y) &=
    \underbrace{\frac{\P_1(N|A_z,A_y)}{\P_0(N|Z,A_z,A_y)}}_{:=\omega_{01^*}^N(Z,N,A_z,A_y)}\underbrace{\frac{\P_1(A_z|A_y)}{\P_0(A_z|A_y)}}_{:=\omega_{01}^{A_z}(A_z,A_y)} \label{eq:Weight_N_Model_RW}
\end{align}
where $\omega_{01}^{A_z}(A_z,A_y)$ is the same weighting function in (\ref{eq:Weight_Az}) which can be expressed in terms of odds. Note that this expression for $\omega_{01^*}^{(Z,N,A_z)}$ obviates the need to model $\lambda_0^Z$ because it cancels out in the factorization.

The N-Model-RW estimator is the following weighted average of $\mu_1(Z,N,A_z,A_y)$ among the $G=0$ sample:
\begin{align}
    \hat\theta_{1,\text{N-Model-RW}}^*
    \coloneq\frac{\P_{0n}[\hat\omega_{01^*}^{(Z,N,A,z)}\hat\omega_{0T}^{A_y}\hat\mu_1(Z,N,A_z,A_y)]}{\P_{0n}[\hat\omega_{01^*}^{(Z,N,A_z)}\hat\omega_{0T}^{A_y}]}.
\end{align}
where the estimate of $\omega_{01^*}^{(Z,N,A,z)}$ is provided by the product of $\hat\omega_{01^*}^{N}$ and $\hat\omega_{01^*}^{A_Z}$. It is consistent for $\theta_1^*$ if the model for $\mu_1$ is correctly specified and the weights $\omega_{01^*}^{(Z,N,A_z)}$ (\ref{eq:Weight_N_Model_RW}) and $\omega_{0T}^{A_y}$ (\ref{eq:Weight_Ay}) are consistently estimated.


\subsubsection{Z-Bridge-RW}
It is possible to represent $\theta_1^*$ as a weighted mean of $\mu_1(Z,N,A_z,A_y)$ among the artifical group $G=\blackdiamond$:
\begin{align}
    \theta_1^*
    &\coloneq\E_\blackdiamond\left[\omega_{\blackdiamond1^*}^Z(Z,A_z,A_y)\omega_{1T}^{A_y}(A_y)\mu_1(Z,N,A_z,A_y)\right],\label{eq:ID_Weighting_Z_Bridge}
\end{align}
where the weighting function
\begin{align}
    \omega_{\blackdiamond1^*}^Z(Z,A_z,A_y) 
    \coloneq\frac{\P_0(Z|A_z,A_y)}{\P_\blackdiamond(Z|A_z,A_y)}                                     
    = \frac{\odds(G=0\text{ vs }\blackdiamond|Z,A_z,A_y)}{\odds(G=0\text{ vs }\blackdiamond|A_z,A_y)}.\tag{\ref{eq:Weight_Z_Bridge_RW}}
\end{align}
The weighting function $\omega_{\blackdiamond1^*}^Z$ is the same one used to weight the $\zeta_1^*$ model in Z-Bridge-SWR. That it can be re-expressed as a ratio of odds functions (see note \ref{note:odds_bridge_Z}) does away with the need to model $\lambda_0^Z$. 

The Z-Bridge-RW estimator is the following weighted average of $\mu_1(Z,N,A_z,A_y)$ among the $G=\blackdiamond$ sample:
\begin{align}
    \hat\theta_{1,\text{Z-Bridge-RW}}^*
    \coloneq\frac{\P_{\blackdiamond n}[\hat\omega_{\blackdiamond1^*}^Z\hat\omega_{1T}^{A_y}\hat\mu_1(Z,N,A_z,A_y)]}{\P_{\blackdiamond n}[\hat\omega_{\blackdiamond1^*}^Z\hat\omega_{1T}^{A_y}]}.
\end{align}
It is consistent for $\theta_1^*$ if the model for $\mu_1$ is correctly specified and weights $\omega_{\blackdiamond1^*}^Z$ (\ref{eq:Weight_Z_Bridge_RW}) and $\omega_{1T}^{A_y}$ (\ref{eq:Weight_Ay}) are consistently estimated.

\subsubsection{N-Bridge-RW}
It is possible to represent $\theta_1^*$ as a weighted mean of $\mu_1(Z,N,A_z,A_y)$ among the artifical group $G=\diamond$:
\begin{align}
    \theta_1^*
    &=\E_\diamond\left[\omega_{\diamond1^*}^{(N,A_z)}(N,A_z,A_y)\omega_{0T}^{A_y}(A_y)\mu_1(Z,N,A_z,A_y)\right],\label{eq:ID_Weighting_N_Bridge}
\end{align}
with the weighting function
\begin{align}
    \omega_{\diamond1^*}^{(N,A_z)}(N,A_z,A_y)
    \coloneq\frac{\P_1(N|A_z,A_y)}{\P_\diamond(N|A_z,A_y)}\frac{\P_1(A_z|A_y)}{\P_\diamond(A_z|A_y)}  
    =\frac{\odds(G=1\text{ vs }\diamond|N,A_z,A_y)}{\odds(G=1\text{ vs }\diamond|A_y)}\nonumber.\tag{\ref{eq:Weight_N_Bridge_RW}}
\end{align}
The weighting function $\omega_{\diamond1^*}^{(N,A_z)}$ is the same one that is used to weight the $\nu_1^*$ model in N-Bridge-SWR. That it can be re-expressed as a ratio of odds functions (see note \ref{note:odds_bridge_N}) does away with the need to model $\lambda_1^N$.

The N-Bridge-RW estimator is the following weighted average of $\mu_1(Z,N,A_z,A_y)$ among the $G=\diamond$ sample:
\begin{align}
    \hat\theta_{1,\text{N-Bridge-RW}}^*
    \coloneq\frac{\P_{\diamond n}[\hat\omega_{\diamond1^*}^{(N,A_z)}\hat\omega_{0T}^{A_y}\hat\mu_1(Z,N,A_z,A_y)]}{\P_{\diamond n}[\hat\omega_{\diamond1^*}^{(N,A_z)}\hat\omega_{0T}^{A_y}]}.
\end{align}
It is consistent for $\theta_1^*$ if the model for $\mu_1$ is correctly specified and the weights $\omega_{\diamond1^*}^{(N,A_z)}$ (\ref{eq:Weight_N_Bridge_RW}) and $\omega_{0T}^{A_y}$ (\ref{eq:Weight_Ay}) are consistently estimated.

\newpage

\subsection{Extension to Missing Outcome Data}

To address missing outcome data, we first assume that (i) the potential outcome $Y(Z)$ is independent of outcome missingness status $C$ ($1$ yes, $0$ no) given the point of intervention $Z$, group $G$, and all baseline covariates $(N,A_z,A_y)$ (conditional exchangeability with respect to missing outcome data) and (ii) there are no covariate strata where all persons are loss to follow-up (positivity with respect to compete data). Under these additional assumptions, $\theta_g$ and $\theta_1^*$ can be identified by the PW estimators by incorporating the following inverse probability of censoring weights, which are slight modifications of their usual form in other contexts \citep{cain2009inverse,jackson2024evaluating}:
\begin{align}
	\omega_{g}^{C}(C,Z,N,A_z,A_y) \coloneq \frac{\P_g(C=0)}{\P_g(C=0|Z,N,A_z,A_y)}. \label{eq:Weight_IPCW}
\end{align}
Under these assumptions, $\theta_g$ can also be identified as
\begin{align}
\theta_g &=\E_\std[\eta_g(A_y)] \nonumber \\
         &=\E_\std[\E_g[\mu_1(Z,N,A_z,A_y,C=0)]|A_y],   
\end{align}
where essentially $\mu_1$ is fit among those without missing outcome data. $\theta_1^*$ can be obtained in SR and SWR estimators by fitting $\mu_1$ among those without missing outcome data, i.e., $\mu_1(Z,N,A_z,A_y,C=0)$ rather than $\mu_1(Z,N,A_z,A_y)$. In each SWR estimator, $\mu_1(Z,N,A_z,A_y,C=0)$ is fit by incorporating $\omega_{g}^{C}$ (\ref{eq:Weight_IPCW}) into the weights. Then each SWR estimator remains consistent when either the $\mu_1$ model is correct or the weights (now inclusive of $\omega_{g}^{C}$) are consistently estimated, along with its other robustness conditions. For the RW estimators, under these assumptions, it suffices to fit $\mu_1(Z,N,A_z,A_y,C=0)$ rather than $\mu_1(Z,N,A_z,A_y)$ as the conditional mean outcome model.

\newpage

\subsection{Data Generating Model for the Simulation Study}

The observed data $\mathcal{O}$ ($n$=500 or 5,000) were generated as:
\begin{singlespace}
\begin{equation*}
\begin{aligned}
\mathcal{O}	&=(G,A_{y1},A_{y2},N_1,N_2,A_{z1},A_{z2},Z,Y,W) \\ 
\mathcal{O} & \sim \bigl(\textup{B}(\pi_G),\textup{B}(\pi_{A_{y1}}),\textup{N}(\mu_{A_{y2}},6),\textup{B}(\pi_{N_1}),\textup{B}(\pi_{N_2}),\textup{B}(\pi_{A_{z1}}),\textup{N}(\mu_{A_{z2}},12),\textup{B}(\pi_Z),\textup{N}(\mu_Y,5),\textup{I}(Y\geq 140)\bigr) \\
\pi_G		&=.4 \\
\pi_{A_{y1}}&=\textup{expit}(-0.3-0.29G) \\
\mu_{A_{y2}}&=64-1.5G-1.1A_{y1} \\
\pi_{N_1}	&=\textup{expit}\bigl((-.6+.3G)+(.67-1.09G)A_{y1}+(.006-.012G)A_{y2}\bigr) \\
\pi_{N_2}	&=\textup{expit}\bigl((.39-.69G)+(-.096+.616G)A_{y1}+(-.0087+.0107G)A_{y2}+(.79+1.01G)N_1\bigr) \\
\pi_{A_{z1}}&=\textup{expit}\bigl((-2.6+.6G)+(.41-.24G)A_{y1}+.02A_{y2}+(.08-.16G)N_1 \\ & +(.11-.06G)N_2\bigr) \\
\mu_{A_{z2}}&=(115+16G)+(-1.78+3.78G)A_{y1}+(.3-.15G)A_{y2}+(-.95-1.05G)N_1+(1.58-8.58G)N_2 \\&+(-1.12+2.22G)A_{z1} \\
\pi_Z		&=I(A_{z2} \geq 140)\times\textup{expit}\bigl(\textup{logit}(.25)+(.04-.01G) A_{y1}+(.01-.009G)A_{y2}+(.0005-.0002G)A_{y2}^2 \\&+(.45-.05G)N_1+(.35-.05G)N_2+(.1-.04G)A_{z1}+(.008-.004G)(A_{z2}-150) \\ &+(.00005-.00001G)(A_{z2}-150)^2\bigr) \\ 
\mu_Y		&=(0.3+.05G) A_{y1}+(0.015-.005G)A_{y2}+.0005A_{y2}^2+(-1.2+.05G)N_1+(-1.2+.05G)N_2 \\&+(.9+.05G)A_{z1}+.95A_{z2}+.0002A_{z2}^2+\Bigl(10.5+.12(A_{z2}-150)\times\bigl(G\times.8+(1-G)\bigr)\Bigr)Z
\end{aligned}
\end{equation*}
\end{singlespace}

For data in the observation and intervention arms of the target trial (our estimand), we modified the distribution of $A_y$ by replacing $\pi_{A_{y1}}$ with $(-0.3-0.29)$ and $\mu_{A_{y2}}$ with $(64-1.5-1.1A_{y1})$. To modify the distribution of $Z$, among the $G=1$ group, $Z$ was realized under $\lambda_0^z\equiv\pi_{\tilde{Z}}$ the interventional distribution of $Z$, defined as: 
\begin{singlespace}
\begin{equation*}
\begin{aligned}
\pi_{\tilde{Z}}&=\textup{I}(A_{z2}\geq140)\textup{expit}\bigl(\textup{logit}(.25)+.04A_{y1}+.01A_{y2}+.0005A_{y2}^2+.45N_1^++.35N_2^++.1A_{z1} \\&+.008(A_{z2}-150)+.00005(A_{z2}-150)^2\bigr) \\
N_1^+ &\sim B(\pi_{N_1}^+)\text{ where }\pi_{N_1}^+=\textup{expit}(-.6+.67A_{y1}+.006A_{y2}) \\
N_2^+ &\sim B(\pi_{N_2}^+)\text{ where }\pi_{N_2}^+=\textup{expit}(.39-.096A_{y1}-.0087A_{y2}+.79N_1 )
\end{aligned}
\end{equation*}
\end{singlespace}
Under this definition of $\pi_{\tilde{Z}}$, among $G=1$ in the intervention arm, $Z$ follows $\P_0(Z|A_z,A_y)$, is dependent on $(A_z,A_y)$, and is conditionally independent of $N$ given $(A_z,A_y)$.

\newpage

\subsection{Supplemental Tables}

\begin{landscape}

\begin{table}[htbp]
\caption{Properties and target covariate distributions of weighting functions used in CDA estimators}
    \centering
    \renewcommand{\arraystretch}{1.5}
    \begin{tabular}{>{\arraybackslash}p{0.65in}>{\arraybackslash}p{2.0in}>{\arraybackslash}p{2.50in}>{\arraybackslash}p{0.9in}
    >{\arraybackslash}p{0.9in}>{\centering\arraybackslash}p{.40in}>{\centering\arraybackslash}p{.55in}}
    \hline
    Weight & Expression & Component & Actual & Target & Sample$^\dag$ weighted & Sample$^\dag$ emulated \\
    \hline
    $\omega_{gT}^{A_y}$ &  (\ref{eq:Weight_Ay}) &                                                                   & $\P_g(A_y)$                       & $\P_\std(A_y)$                    & $G=g$ & $T=1$ \\
    $\omega_{10}^{A_z}$ &  (\ref{eq:Weight_Az})                    & $\odds(G=1~vs~0|A_z,A_y)$                     & $\P_0(A_z,A_y)$                   & $\P_1(A_z,A_y)$                   & $G=0$ & $G=1$ \\   
    $\omega_{10}^{A_z}$ &  (\ref{eq:Weight_Az})                    & $\odds(G=1~vs~0|A_y)$                         & $\P_0(A_y)$                       & $\P_1(A_y)$                       & $G=0$ & $G=1$ \\        
    $\omega^{(Z,N)}_{11^*}$ & (\ref{eq:Weight_Z_Model_PW}), (\ref{eq:Weight_N_Model_PW}), (\ref{eq:Weight_Z_Bridge_PW_2}),\text{ and }(\ref{eq:Weight_N_Bridge_PW}) &                                              & $\P_1(Z,N,A_z,A_y)$                    & $ \P_\blacktriangle(Z,N,A_z,A_y)$                   & $G=1$  & $G=\blacktriangle$ \\
    $\omega_{11^*}^{(Z,N)}$ & (\ref{eq:Weight_Z_Bridge_PW_2})   & $\omega_{1\blackdiamond}^{N}:~\odds(G=\blackdiamond~vs~1|Z,N,A_z,A_y)$       & $\P_1(Z,N,A_z,A_y)$                 & $\P_\blackdiamond(Z,N,A_z,A_y)$     & $G=1$ & $G=\blackdiamond$ \\
    $\omega_{11^*}^{(Z,N)}$ & (\ref{eq:Weight_Z_Bridge_PW_2})   & $\omega_{11^*}^{Z}:~\odds(G=\blackdiamond~vs~0|Z,A_z,A_y)$                     & $\P_0(Z,A_z,A_y)$                   & $\P_\blackdiamond(Z,A_z,A_y)$                   & $G=0$ & $G=\blackdiamond$ \\
    $\omega_{11^*}^{(Z,N)}$ & (\ref{eq:Weight_Z_Bridge_PW_2})   & $\omega_{11^*}^{Z}:~\odds(G=1~vs~0|A_z,A_y)$                     & $\P_0(A_z,A_y)$                   & $\P_1(A_z,A_y)$                   & $G=0$ & $G=1$ \\
    $\omega_{11^*}^{(Z,N)}$ & (\ref{eq:Weight_N_Bridge_PW})     & $\omega_{1\diamond}^{Z}:~\odds(G=\diamond~vs~1|Z,N,A_z,A_y)$          & $\P_1(Z,N,A_z,A_y)$               & $\P_\diamond(Z,N,A_z,A_y)$        & $G=1$ & $G=\diamond$ \\
    $\omega_{11^*}^{(Z,N)}$ & (\ref{eq:Weight_N_Bridge_PW})     & $\omega_{1\diamond}^{Z}:~\odds(G=\diamond~vs~1|N,A_z,A_y)$            & $\P_1(N,A_z,A_y)$                 & $\P_\diamond(N,A_z,A_y)$          & $G=1$ & $G=\diamond$ \\    
    $\omega_{\blackdiamond 1^*}^Z$ & (\ref{eq:Weight_Z_Bridge_RW})    & $\odds(G=0~vs~\blackdiamond|Z,A_z,A_y)$       & $\P_\blackdiamond(Z,A_z,A_y)$     & $\P_0(Z,A_z,A_y)$                 & $G=\blackdiamond$ & $G=0$ \\    
    $\omega_{\blackdiamond 1^*}^Z$ & (\ref{eq:Weight_Z_Bridge_RW})    & $\odds(G=0~vs~\blackdiamond|A_z,A_y)$         & $\P_\blackdiamond(A_z,A_y)$       & $\P_0(A_z,A_y)$                   & $G=\blackdiamond$ & $G=0$ \\
    $\omega_{\diamond 1^*}^{(N,A_z)}$ & (\ref{eq:Weight_N_Bridge_RW})     & $\odds(G=1~vs~\diamond|N,A_z,A_y)$            & $\P_\diamond(N,A_z,A_y)$          & $\P_1(N,A_z,A_y)$                 & $G=\diamond$ & $G=1$ \\       
    $\omega_{\diamond 1^*}^{(N,A_z)}$ & (\ref{eq:Weight_N_Bridge_RW})     & $\odds(G=1~vs~\diamond|A_y)$                  & $\P_\diamond(A_y)$                & $\P_1(A_y)$                       & $G=\diamond$ & $G=1$ \\       
    $\omega_{01^*}^{(Z,N,A_z)}$ &  (\ref{eq:Weight_Z_Model_RW}) and (\ref{eq:Weight_N_Model_RW})       &                                       & $\P_0(Z,N,A_z|A_y)$                   & $\P_\blacktriangle(Z,N,A_z|A_y)$      & $G=0$ & $G=\blacktriangle$ \\
    \hline
   \multicolumn{7}{l}{\footnotesize{$^\dag$The artificial samples referenced here are specified in Definition \ref{def:Sample_Z_Model} ($G=\blacktriangle$), Definition \ref{def:Group_Z_Bridge} ($G=\blackdiamond$), and Definition \ref{def:Group_N_Bridge} ($G=\diamond$)}}\\
   \multicolumn{7}{l}{\footnotesize{$\omega_{gT}^{A_y}$ is the standardizing weight (\ref{eq:Weight_Ay}) which appears in E-OBD and all PW, RW, and SWR estimators}}\\
   \multicolumn{7}{l}{\footnotesize{$\omega_{10}^{A_z}$ is the covariate shifting weight (\ref{eq:Weight_Az}) which appears in E-OBD and the N-modeling SWR estimator}}\\
   \multicolumn{7}{l}{\footnotesize{$\omega_{11^*}^{(Z,N)}$ is the interventional weight which has several forms (\ref{eq:Weight_Z_Model_PW}) (\ref{eq:Weight_N_Model_PW}) (\ref{eq:Weight_Z_Bridge_PW_2}) and (\ref{eq:Weight_N_Bridge_PW}) that appear in PW and SWR estimators}}\\
    \multicolumn{7}{l}{\footnotesize{$\omega_{\blackdiamond 1^*}^Z$ (\ref{eq:Weight_Z_Bridge_RW}) and $\omega_{\diamond 1^*}^{(N,A_z)}$ (\ref{eq:Weight_N_Bridge_RW}) appear in bridging SWR estimators and bridging RW estimators}} \\
   \multicolumn{7}{l}{\footnotesize{$\omega_{01^*}^{(Z,N,A_z)}$ has two forms (\ref{eq:Weight_Z_Model_RW}) and (\ref{eq:Weight_N_Model_RW}) which appear in modeling RW estimators}} \\\\
    \end{tabular}
    \label{tab:WeightProperties}
\end{table}

\end{landscape}

\newpage

where 
\begin{align}
	\omega_{gT}^{A_y}(A_y) &\coloneq \frac{\P_\std(A_y)}{\P_g(A_y)} \tag{\ref{eq:Weight_Ay}} \\
    &= \frac{\P(T=1|A_y)}{\P(G=g|A_y)} \times \frac{\P(G=g)}{\P(T=1)},\nonumber\\\nonumber\\
	\omega_{01}^{A_z}(A_z,A_y) \tag{\ref{eq:Weight_Az}}	
    &\coloneq \frac{\P_1(A_z|A_y)}{\P_0(A_z|A_y)} \\
    &=\frac{\odds(G=1\text{ vs }0|A_z,A_y)}{\odds(G=1\text{ vs }0|A_y)},\nonumber\\\nonumber\\
    \omega_{11^*}^{(Z,N)}(Z,N,A_z,A_y) \tag{\ref{eq:Weight_Z_Model_PW}}
    &=\frac{\P_0(Z|A_z,A_y)}{\P_1(Z|N,A_z,A_y)},\\\nonumber\\
    \omega_{11^*}^{(Z,N)}(Z,N,A_z,A_y)\tag{\ref{eq:Weight_N_Model_PW}}
    &=\underbrace{\frac{\P_1(N|A_z,A_y)}{\P_1(N|Z,A_z,A_y)}}_{=:\omega_{11^*}^N(Z,N,A_z,A_y)}\underbrace{\frac{\P_0(Z|A_z,A_y)}{\P_1(Z|A_z,A_y)}}_{=:\omega_{11^*}^Z(Z,A_z,A_y)} \\
    &=\frac{\P_1(N|A_z,A_y)}{\P_1(N|Z,A_z,A_y)}\frac{\text{odds}(G=0~\text{vs}~1\mid Z,A_z,A_y)}{\text{odds}(G=0~\text{vs}~1\mid A_z,A_y)}, \nonumber\\\nonumber\\
    \omega_{11^*}^{(Z,N)}(Z,N,A_z,A_y)
    &=\underbrace{\frac{\P_\blackdiamond(N|Z, A_z,A_y)}{\P_1(N| Z,A_z,A_y)}}_{=:\omega_{1\blackdiamond}^N(Z,N,A_z,A_y)}\underbrace{\frac{\P_0(Z|A_z,A_y)}{\P_1(Z|A_z,A_y)}}_{=:\omega_{11^*}^Z(Z,A_z,A_y)},\tag{\ref{eq:Weight_Z_Bridge_PW_2}} \\
    &=\underbrace{\odds(G=\blackdiamond~vs~1|Z,N,A_z,A_y)\frac{\odds(G=1~vs~0|A_z,A_y)}{\odds(G=\blackdiamond~vs~0|Z,A_z,A_y}}_{\omega_{1\blackdiamond}^N(Z,N,A_z,A_y)\omega_{11^*}^Z(Z,A_z,A_y)},\nonumber\\\nonumber\\
    \omega_{11^*}^{(Z,N)}(Z,N,A_z,A_y) \tag{\ref{eq:Weight_N_Bridge_PW}}
    &=\underbrace{\frac{\P_\diamond(Z| N,A_z,A_y)}{\P_1(Z| N,A_z,A_y)}}_{=:\omega_{1\diamond}^Z(Z,N,A_z,A_y)} \\
    &=\frac{\text{odds}(G=\diamond~\text{vs}~1\mid Z,N,A_z,A_y)}{\text{odds}(G=\diamond~\text{vs}~1\mid N,A_z,A_y)}, \nonumber\\\nonumber\\
    \omega_{\blackdiamond1^*}^Z(Z,A_z,A_y) \tag{\ref{eq:Weight_Z_Bridge_RW}} 
    &\coloneq\frac{\P_0(Z|A_z,A_y)}{\P_\blackdiamond(Z|A_z,A_y)} \\                                    
    &= \frac{\odds(G=0\text{ vs }\blackdiamond|Z,A_z,A_y)}{\odds(G=0\text{ vs }\blackdiamond|A_z,A_y)},\nonumber\\\nonumber\\
    \omega_{\diamond1^*}^{(N,A_z)}(N,A_z,A_y) \tag{\ref{eq:Weight_N_Bridge_RW}}
    &\coloneq\frac{\P_1(N|A_z,A_y)}{\P_\diamond(N|A_z,A_y)}\frac{\P_1(A_z|A_y)}{\P_\diamond(A_z|A_y)} \\ 
    &=\frac{\odds(G=1\text{ vs }\diamond|N,A_z,A_y)}{\odds(G=1\text{ vs }\diamond|A_y)},\nonumber\\\nonumber\\
    \omega_{01^*}^{(Z,N,A_z)}(Z,N,A_z,A_y) \tag{\ref{eq:Weight_Z_Model_RW}}
    &= \underbrace{\frac{\P_0(Z|A_z,A_y)}{\P_0(Z|N,A_z,A_y)}}_{:=\omega_{01^*}^{Z}(Z,N,A_z,A_y)}
     \underbrace{\frac{\P_1(N|A_z,A_y)}{\P_0(N|A_z,A_y)}\frac{\P_1(A_z|A_y)}{\P_0(A_z|A_y)}}_{:=\omega_{01^*}^{(N,A_z)}(N,A_z,A_y)}\\
     &=\frac{\P_0(Z|A_z,A_y)}{\P_0(Z|N,A_z,A_y)}\frac{\odds(G=1\text{ vs }0|N,A_z,A_y)}{\odds(G=1\text{ vs }0|A_y)},\nonumber\\\nonumber\\
    \omega_{01^*}^{(Z,N,A_z)}(Z,N,A_z,A_y)\tag{\ref{eq:Weight_N_Model_RW} }
    &=\underbrace{\frac{\P_1(N|A_z,A_y)}{\P_0(N|Z,A_z,A_y)}}_{:=\omega_{01^*}^N(Z,N,A_z,A_y)}\underbrace{\frac{\P_1(A_z|A_y)}{\P_0(A_z|A_y)}}_{:=\omega_{01}^{A_z}(A_z,A_y)}\\
    &=\frac{\P_1(N|A_z,A_y)}{\P_0(N|Z,A_z,A_y)}\frac{\odds(G=1\text{ vs }0|A_z,A_y)}{\odds(G=1\text{ vs }0|A_y)}.\nonumber\\\nonumber
\end{align}

Note also the inverse probability of censoring weights

\begin{align}
	\omega_{g}^{C}(C,Z,N,A_z,A_y) \coloneq \frac{\P_g(C=0)}{\P_g(C=0|Z,N,A_z,A_y)}.\tag{\ref{eq:Weight_IPCW}}
\end{align}

\newpage

\begin{table}[htbp]
  \caption{Estimator performance under incorrect nuisance models by outcome type and sample size.}
  \centering
    \begin{tabular}{lrrrrrrrrr}
\hline    
    \multicolumn{10}{c}{Sample Size = 5,000} \\
\hline    
          & \multicolumn{4}{c}{Continuous ($\theta_1^*=132.606$)} &       & \multicolumn{4}{c}{Binary ($\theta_1^*=.245$)} \\
          & \multicolumn{1}{l}{Bias} & \multicolumn{1}{l}{SE} & \multicolumn{1}{l}{RMSE} & \multicolumn{1}{l}{Coverage} &       & \multicolumn{1}{l}{Bias} & \multicolumn{1}{l}{SE} & \multicolumn{1}{l}{RMSE} & \multicolumn{1}{l}{Coverage} \\
\hline          
    Linear (E-OBD) & 1.077 & 0.280 & 1.113 & 0.026 &       & 0.044 & 0.010 & 0.045 & 0.006 \\
    Pure Weighting (PW) &       &       &       &       &       &       &       &       &  \\
    \multicolumn{1}{r}{Z-Model-PW} & 1.151 & 0.284 & 1.186 & 0.012 &       & 0.044 & 0.010 & 0.045 & 0.006 \\
    \multicolumn{1}{r}{N-Model-PW} & 1.215 & 0.283 & 1.247 & 0.007 &       & 0.045 & 0.010 & 0.046 & 0.007 \\
    \multicolumn{1}{r}{Z-Bridge-PW} & 1.067 & 0.286 & 1.105 & 0.030 &       & 0.043 & 0.010 & 0.044 & 0.007 \\
    \multicolumn{1}{r}{N-Bridge-PW} & 1.135 & 0.282 & 1.169 & 0.013 &       & 0.045 & 0.010 & 0.046 & 0.006 \\
    \multicolumn{10}{l}{Regress-then-Weight (RW)} \\
    \multicolumn{1}{r}{Z-Model-RW} & 1.100 & 0.282 & 1.136 & 0.022 &       & 0.043 & 0.010 & 0.045 & 0.006 \\
    \multicolumn{1}{r}{N-Model-RW} & 1.077 & 0.281 & 1.113 & 0.028 &       & 0.042 & 0.010 & 0.043 & 0.014 \\
    \multicolumn{1}{r}{Z-Bridge-RW} & 1.077 & 0.280 & 1.113 & 0.027 &       & 0.042 & 0.010 & 0.043 & 0.012 \\
    \multicolumn{1}{r}{N-Bridge-RW} & 1.801 & 0.281 & 1.823 & 0.000 &       & 0.060 & 0.011 & 0.061 & 0.000 \\
    \multicolumn{10}{l}{Sequential Regressions (SR)} \\
    \multicolumn{1}{r}{Z-Model-SR} & 1.077 & 0.290 & 1.116 & 0.030 &       & 0.042 & 0.010 & 0.043 & 0.015 \\
    \multicolumn{1}{r}{N-Model-SR} & 0.643 & 0.282 & 0.702 & 0.354 &       & 0.026 & 0.010 & 0.028 & 0.291 \\
    \multicolumn{1}{r}{Z-Bridge-SR} & 1.078 & 0.280 & 1.113 & 0.026 &       & 0.042 & 0.010 & 0.043 & 0.012 \\
    \multicolumn{1}{r}{N-Bridge-SR} & 1.078 & 0.280 & 1.113 & 0.026 &       & 0.042 & 0.010 & 0.043 & 0.012 \\
    \multicolumn{10}{l}{Sequential Weighted Regressions (SWR)} \\
    \multicolumn{1}{r}{Z-Model-SWR} & 1.129 & 0.293 & 1.166 & 0.021 &       & 0.042 & 0.011 & 0.044 & 0.015 \\
    \multicolumn{1}{r}{N-Model-SWR} & 0.780 & 0.290 & 0.832 & 0.194 &       & 0.029 & 0.011 & 0.031 & 0.187 \\
    \multicolumn{1}{r}{Z-Bridge-SWR} & 1.089 & 0.281 & 1.124 & 0.022 &       & 0.042 & 0.010 & 0.043 & 0.014 \\
    \multicolumn{1}{r}{N-Bridge-SWR} & 1.078 & 0.280 & 1.114 & 0.026 &       & 0.042 & 0.010 & 0.043 & 0.012 \\
\hline
    \multicolumn{10}{c}{Sample Size = 500} \\
\hline
          & \multicolumn{4}{c}{Continuous ($\theta_1^*=132.606$)} &       & \multicolumn{4}{c}{Binary ($\theta_1^*=.245$)} \\
          & \multicolumn{1}{l}{Bias} & \multicolumn{1}{l}{SE} & \multicolumn{1}{l}{RMSE} & \multicolumn{1}{l}{Coverage} &       & \multicolumn{1}{l}{Bias} & \multicolumn{1}{l}{SE} & \multicolumn{1}{l}{RMSE} & \multicolumn{1}{l}{Coverage} \\
\hline          
    Linear (E-OBD) & 1.082 & 0.866 & 1.386 & 0.741 &       & 0.044 & 0.032 & 0.054 & 0.724 \\
    \multicolumn{10}{l}{Pure Weighting (PW)} \\
    \multicolumn{1}{r}{Z-Model-PW} & 1.159 & 0.881 & 1.456 & 0.721 &       & 0.044 & 0.033 & 0.055 & 0.745 \\
    \multicolumn{1}{r}{N-Model-PW} & 1.218 & 0.872 & 1.498 & 0.702 &       & 0.045 & 0.033 & 0.056 & 0.728 \\
    \multicolumn{1}{r}{Z-Bridge-PW} & 1.066 & 0.889 & 1.388 & 0.767 &       & 0.043 & 0.032 & 0.054 & 0.752 \\
    \multicolumn{1}{r}{N-Bridge-PW} & 1.144 & 0.872 & 1.438 & 0.717 &       & 0.045 & 0.032 & 0.056 & 0.704 \\
    \multicolumn{10}{l}{Regress-then-Weight (RW)} \\
    \multicolumn{1}{r}{Z-Model-RW} & 1.107 & 0.873 & 1.410 & 0.734 &       & 0.043 & 0.032 & 0.054 & 0.729 \\
    \multicolumn{1}{r}{N-Model-RW} & 1.081 & 0.874 & 1.390 & 0.752 &       & 0.042 & 0.032 & 0.053 & 0.754 \\
    \multicolumn{1}{r}{Z-Bridge-RW} & 1.082 & 0.868 & 1.387 & 0.745 &       & 0.042 & 0.032 & 0.053 & 0.748 \\
    \multicolumn{1}{r}{N-Bridge-RW} & 1.810 & 0.854 & 2.001 & 0.443 &       & 0.061 & 0.033 & 0.069 & 0.565 \\
    \multicolumn{10}{l}{Sequential Regressions (SR)} \\
    \multicolumn{1}{r}{Z-Model-SR} & 1.072 & 0.899 & 1.399 & 0.760 &       & 0.041 & 0.032 & 0.053 & 0.765 \\
    \multicolumn{1}{r}{N-Model-SR} & 0.647 & 0.890 & 1.100 & 0.889 &       & 0.026 & 0.033 & 0.041 & 0.903 \\
    \multicolumn{1}{r}{Z-Bridge-SR} & 1.083 & 0.866 & 1.386 & 0.740 &       & 0.042 & 0.032 & 0.053 & 0.754 \\
    \multicolumn{1}{r}{N-Bridge-SR} & 1.083 & 0.866 & 1.386 & 0.740 &       & 0.042 & 0.032 & 0.053 & 0.753 \\
    \multicolumn{10}{l}{Sequential Weighted Regressions (SWR)} \\
    \multicolumn{1}{r}{Z-Model-SWR} & 1.133 & 0.909 & 1.453 & 0.759 &       & 0.042 & 0.034 & 0.054 & 0.773 \\
    \multicolumn{1}{r}{N-Model-SWR} & 0.771 & 0.887 & 1.175 & 0.852 &       & 0.028 & 0.033 & 0.044 & 0.891 \\
    \multicolumn{1}{r}{Z-Bridge-SWR} & 1.099 & 0.876 & 1.405 & 0.745 &       & 0.042 & 0.033 & 0.053 & 0.761 \\
    \multicolumn{1}{r}{N-Bridge-SWR} & 1.085 & 0.866 & 1.388 & 0.740 &       & 0.042 & 0.032 & 0.053 & 0.750 \\
\hline    
    \multicolumn{10}{l}{\footnotesize{All were nuisance models specified without interactions, higher order terms, and certain covariates $(N_2,A_{z1},A_{y1})$.}} \\
    \multicolumn{10}{l}{\footnotesize{The true outcome model contained a $Z\times A_z$ interaction.}} \\
    \multicolumn{10}{l}{\footnotesize{The lesser bias of N-Model-PW/RW/SWR stems from modeling both $N_1$ and $N_2$ for $\omega_{11^*}^{N}$ rather than only modeling $N_1$.}}
    \end{tabular}%
\label{tab:AllEstimators_Omit}
\end{table}%

\begin{table}[htpb]
  \caption{Performance of Regress-then-Weight Estimators under correct or flexible nuisance models by outcome type and sample size.}
  \centering
    \begin{tabular}{lrrrrrrrrr}
\hline    
    \multicolumn{10}{c}{Sample Size = 5,000} \\
\hline    
          & \multicolumn{4}{c}{Continuous ($\theta_1^*=132.606$)} &       & \multicolumn{4}{c}{Binary ($\theta_1^*=.245$)} \\
          & \multicolumn{1}{l}{Bias} & \multicolumn{1}{l}{SE} & \multicolumn{1}{l}{RMSE} & \multicolumn{1}{l}{Coverage} &       & \multicolumn{1}{l}{Bias} & \multicolumn{1}{l}{SE} & \multicolumn{1}{l}{RMSE} & \multicolumn{1}{l}{Coverage} \\
\hline          
    \multicolumn{1}{r}{Z-Model-RW} & 0.013 & 0.269 & 0.269 & 0.944 &       & 0.000 & 0.010 & 0.010 & 0.955 \\
    \multicolumn{1}{r}{N-Model-RW} & -0.086 & 0.297 & 0.310 & 0.946 &       & -0.003 & 0.011 & 0.011 & 0.948 \\
    \multicolumn{1}{r}{Z-Bridge-RW} & -0.087 & 0.267 & 0.281 & 0.932 &       & 0.005 & 0.010 & 0.011 & 0.918 \\
    \multicolumn{1}{r}{N-Bridge-RW} & 0.090 & 0.283 & 0.297 & 0.933 &       & -0.001 & 0.010 & 0.010 & 0.949 \\
\hline
    \multicolumn{10}{c}{Sample Size = 500} \\
\hline
          & \multicolumn{4}{c}{Continuous ($\theta_1^*=132.606$)} &       & \multicolumn{4}{c}{Binary ($\theta_1^*=.245$)} \\
          & \multicolumn{1}{l}{Bias} & \multicolumn{1}{l}{SE} & \multicolumn{1}{l}{RMSE} & \multicolumn{1}{l}{Coverage} &       & \multicolumn{1}{l}{Bias} & \multicolumn{1}{l}{SE} & \multicolumn{1}{l}{RMSE} & \multicolumn{1}{l}{Coverage} \\
\hline          
    \multicolumn{1}{r}{Z-Model-RW} & 0.204 & 1.665 & 1.677 & 0.979 &       & -0.009 & 0.047 & 0.048 & 0.979 \\
    \multicolumn{1}{r}{N-Model-RW} & -0.314 & 0.966 & 1.015 & 0.977 &       & -0.011 & 0.038 & 0.040 & 0.976 \\
    \multicolumn{1}{r}{Z-Bridge-RW} & -0.285 & 0.829 & 0.876 & 0.943 &       & 0.005 & 0.031 & 0.031 & 0.964 \\
    \multicolumn{1}{r}{N-Bridge-RW} & 0.121 & 0.911 & 0.919 & 0.970 &       & 0.000 & 0.035 & 0.035 & 0.969 \\
\hline    
    \multicolumn{10}{l}{\footnotesize{$^\dagger$The true outcome model contained a $Z\times A_z$ interaction.}} 
    \end{tabular}%
\label{tab:AllEstimators_CorrFlex_RW}
\end{table}%

\begin{table}[htbp]
  \centering
  \caption{Distribution of Weights used by CDA Estimators}
    \begin{tabular}{lrrr}
    \hline
    Weight & \multicolumn{1}{c}{Mean} & \multicolumn{1}{c}{Min} & \multicolumn{1}{c}{Max} \\
    \hline
    $\omega_{gT}^{A_y}$ & 0.998 & 0.318 & 2.637 \\ \\
    $\omega_{10}^{A_z}$ & 1.001 & 0.169 & 9.267 \\ \\
    $\omega_{11^*}^{(Z,N)}$ (\ref{eq:Weight_Z_Model_PW}) & 1.009 & 0.158 & 24.305 \\ \\
    $\omega_{11^*}^{(Z,N)}$ (\ref{eq:Weight_N_Model_PW}$^\dag$)  & 0.998 & 0.129 & 12.343 \\ \\
    $\omega_{11^*}^{(Z,N)}$ (\ref{eq:Weight_Z_Bridge_PW_2}$^\dag$)  & 1.004 & 0.596 & 1.573 \\ \\
    $\omega_{11^*}^{(Z,N)}$ (\ref{eq:Weight_N_Bridge_PW}$^\dag$)  & 1.000     & 0.566 & 1.699 \\ \\
    $\omega_{\blackdiamond 1^*}^Z $ (\ref{eq:Weight_Z_Bridge_RW})$^\dag$ & 1.016 & 0.000     & 2.869 \\ \\
    $\omega_{\diamond 1^*}^{(N,A_z)} $ (\ref{eq:Weight_N_Bridge_RW})$^\dag$ & 0.967 & 0.068 & 10.256 \\ \\
    $\omega_{1}^{C}$ & 1.011 & 0.089 & 9.682 \\ \\
    $\omega_{0}^{C}$ & 1.006 & 0.081 & 8.958 \\
    \hline
    \multicolumn{4}{l}{\footnotesize{$^\dag$Estimated using expressions involving odds}}
    \end{tabular}%
  \label{tab:Weight_Distribution}%
\end{table}%

\clearpage

\newpage

\subsection{Supplemental Figures}

\begin{figure}[h]
    \centering
    \includegraphics[width=0.5\linewidth]{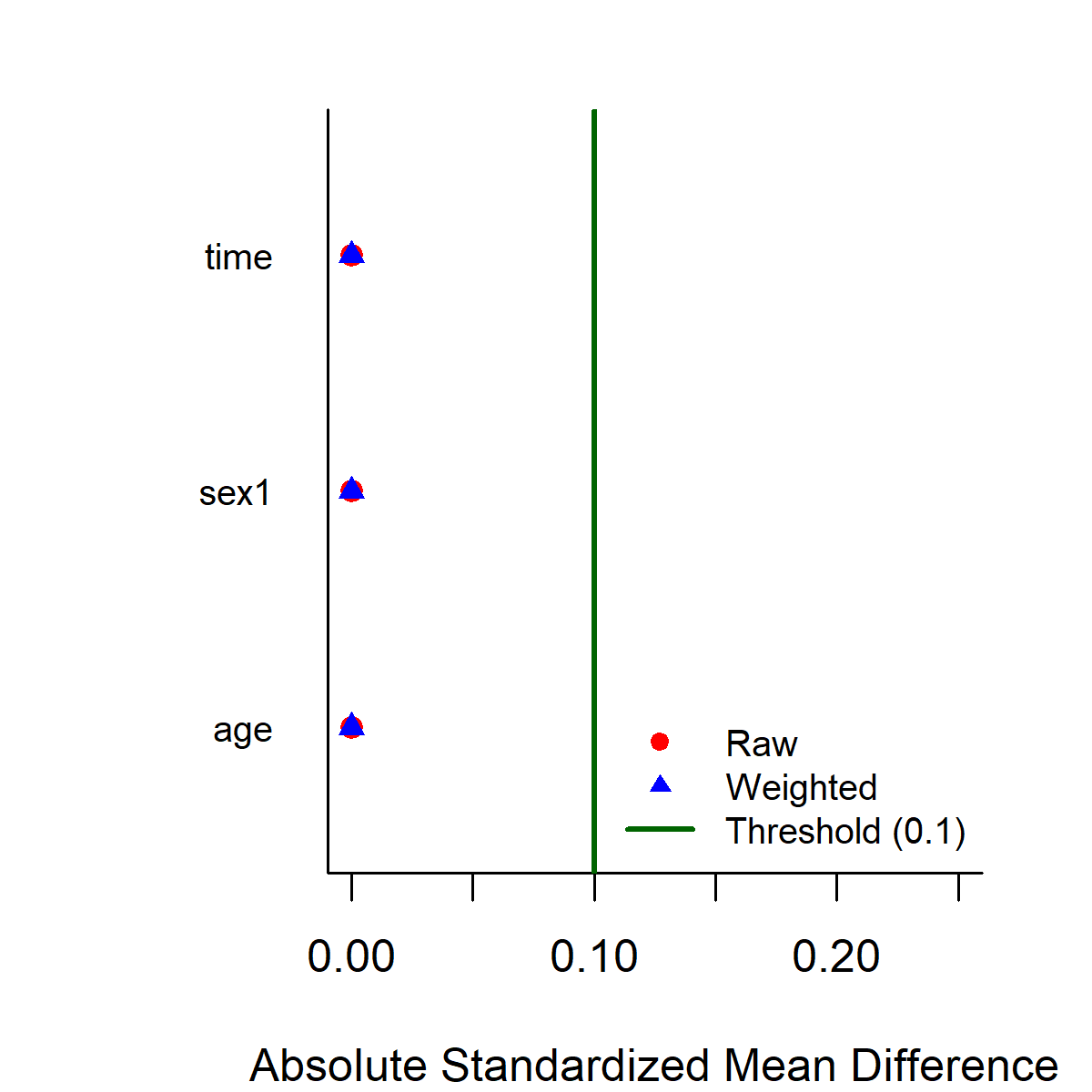}
    \caption{Target Covariate Balance for $\omega_{1T}^{A_y}$} (\ref{eq:Weight_Ay})
    \label{fig:Weight_Ay_1T}
\end{figure}

\begin{figure}[h]
    \centering
    \includegraphics[width=0.5\linewidth]{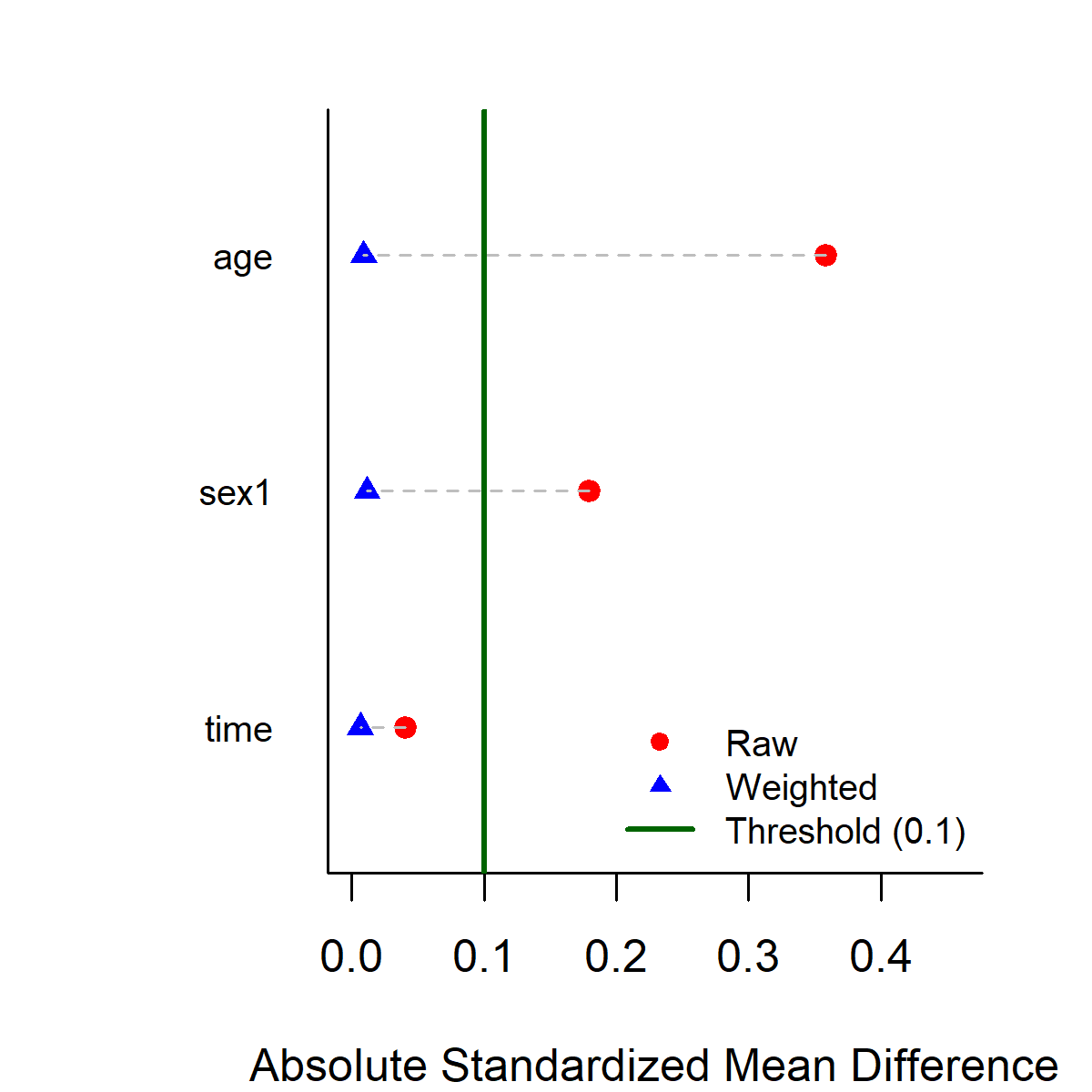}
    \caption{Target Covariate Balance for $\omega_{0T}^{A_y}$}  (\ref{eq:Weight_Ay})
    \label{fig:Weight_Ay_0T}
\end{figure}

\begin{figure}[h]
    \centering
    \includegraphics[width=0.5\linewidth]{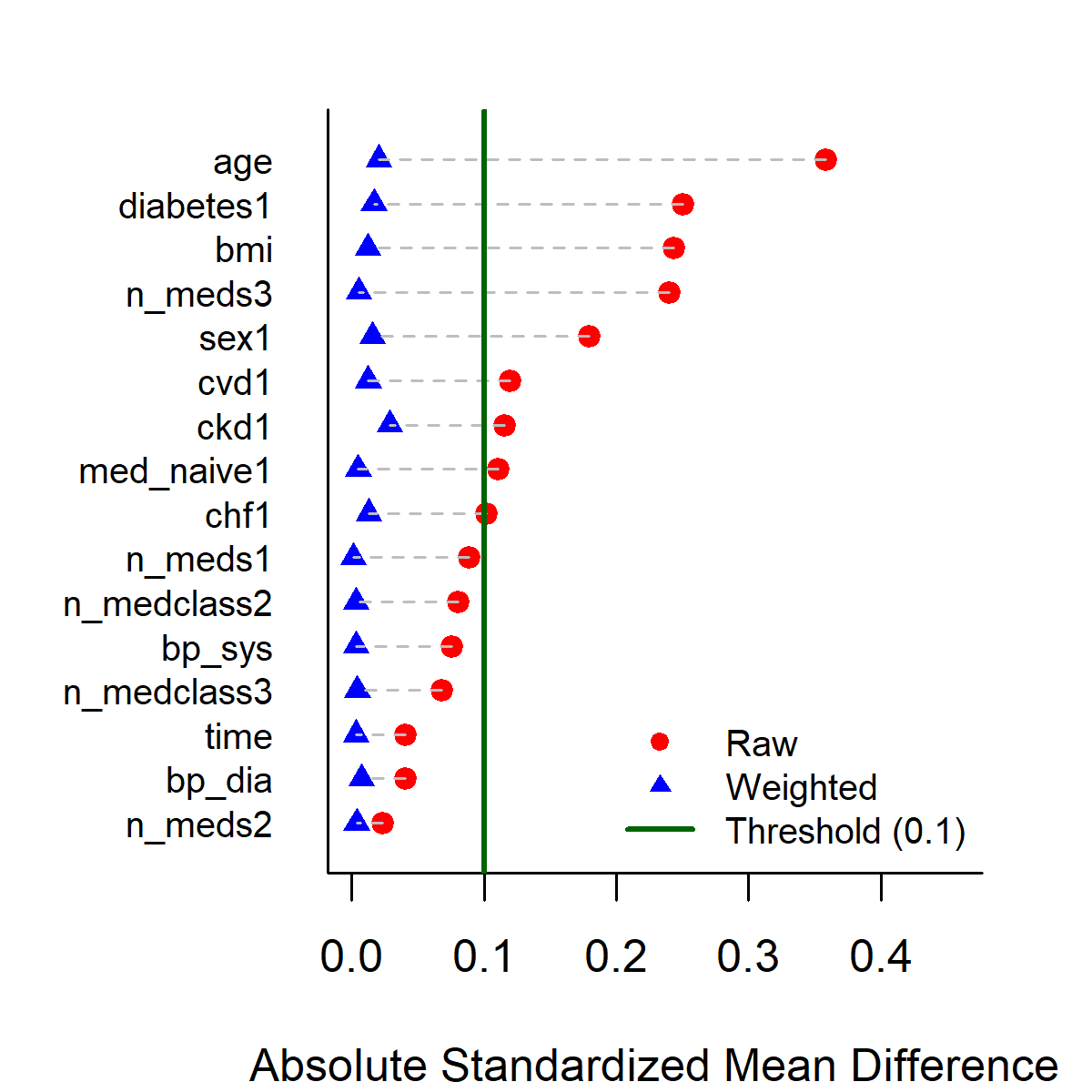}
    \caption{Target Covariate Balance for $\omega_{0T}^{A_z}$ (\ref{eq:Weight_Az}) as expressed in odds terms (numerator)}
    \label{fig:Weight_Az_01_part_i}
\end{figure}

\begin{figure}[h]
    \centering
    \includegraphics[width=0.5\linewidth]{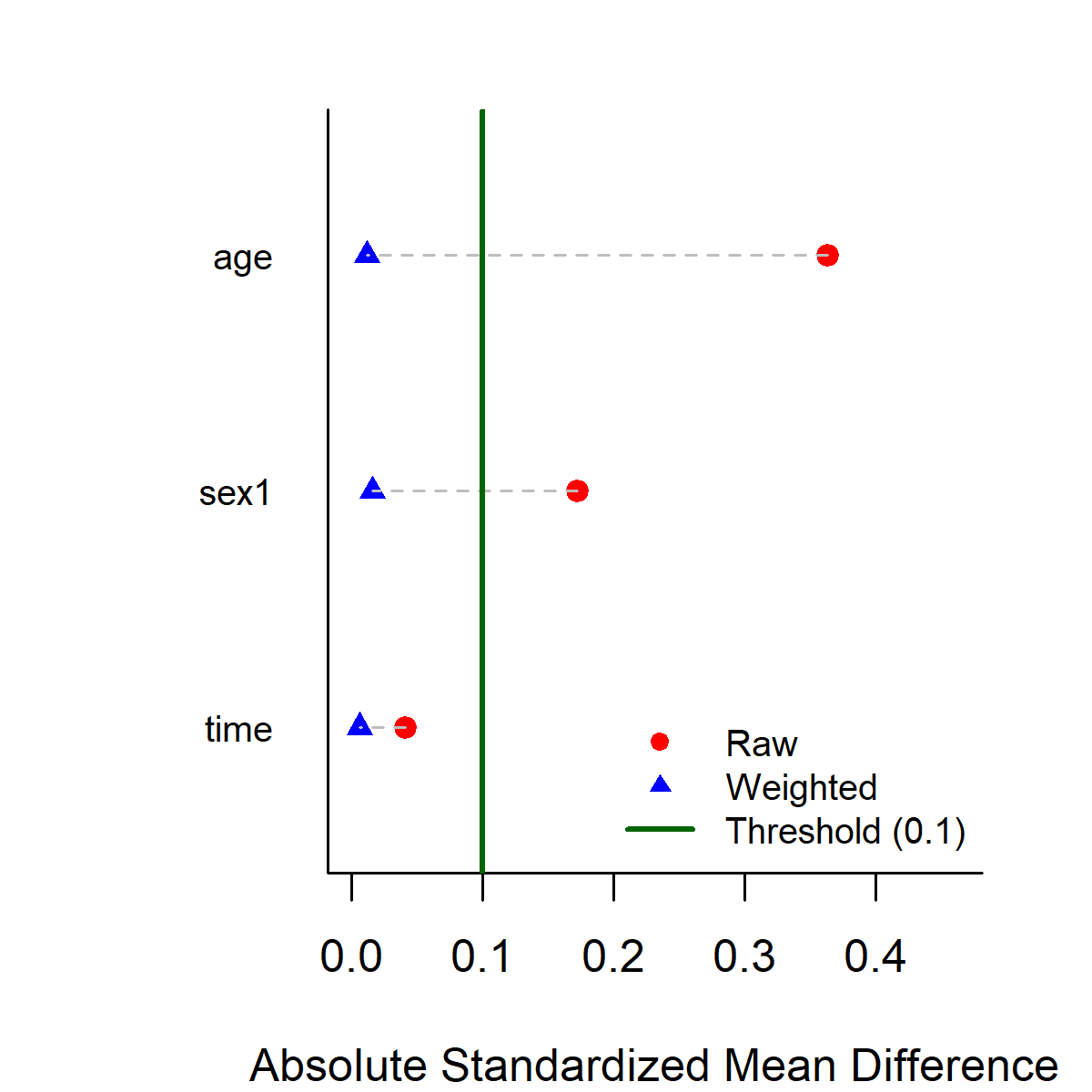}
    \caption{Target Covariate Balance for $\omega_{0T}^{A_z}$ (\ref{eq:Weight_Az}) as expressed in odds terms (denominator)}
    \label{fig:Weight_Az_01_part_ii}
\end{figure}

\begin{figure}[h]
    \centering
    \includegraphics[width=0.5\linewidth]{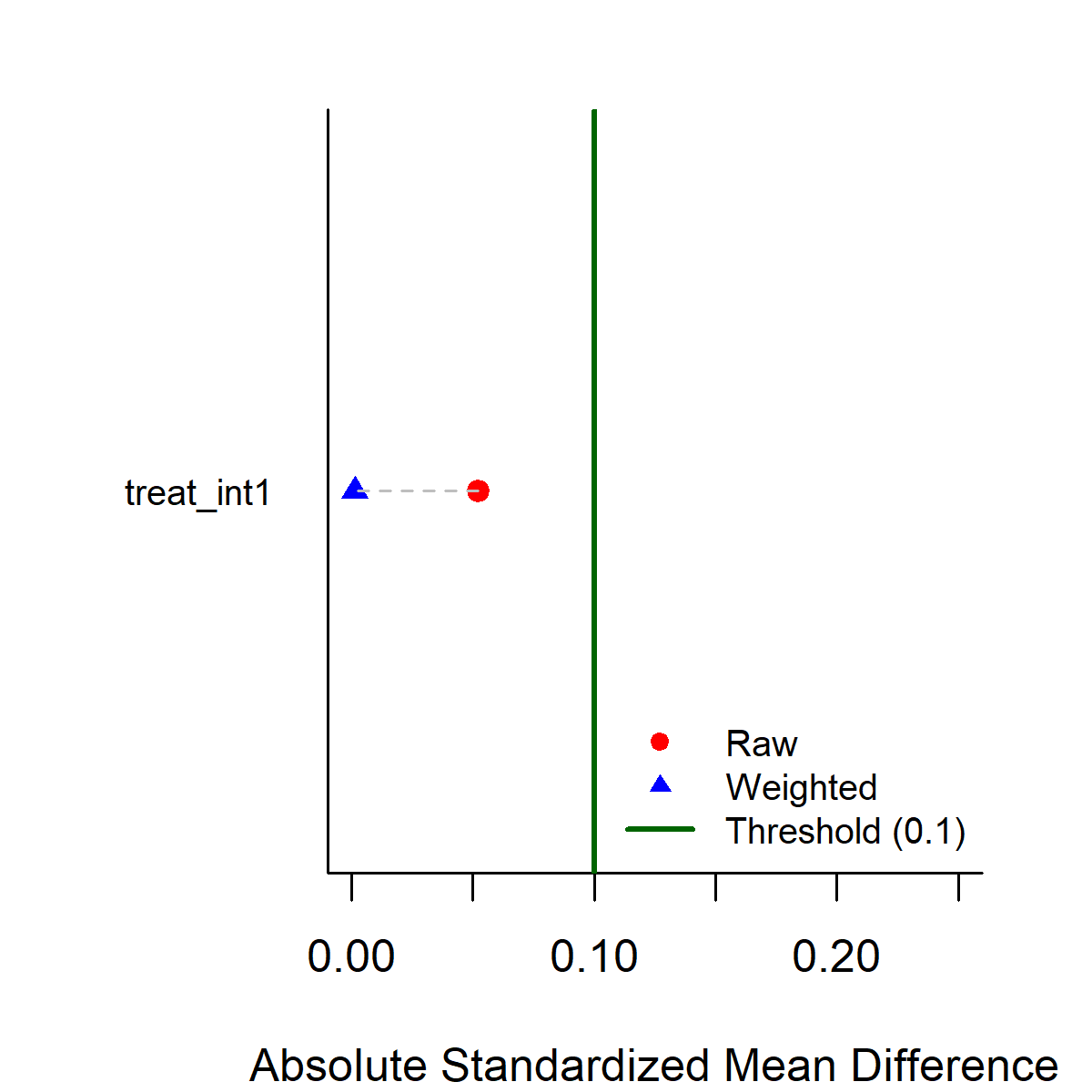}
    \caption{Target Covariate Balance for $\omega_{11^*}^{(Z,N)}$ (\ref{eq:Weight_Z_Model_PW})}
    \label{fig:Weight_ZN_1I_ZM}
\end{figure}

\begin{figure}[h]
    \centering
    \includegraphics[width=0.5\linewidth]{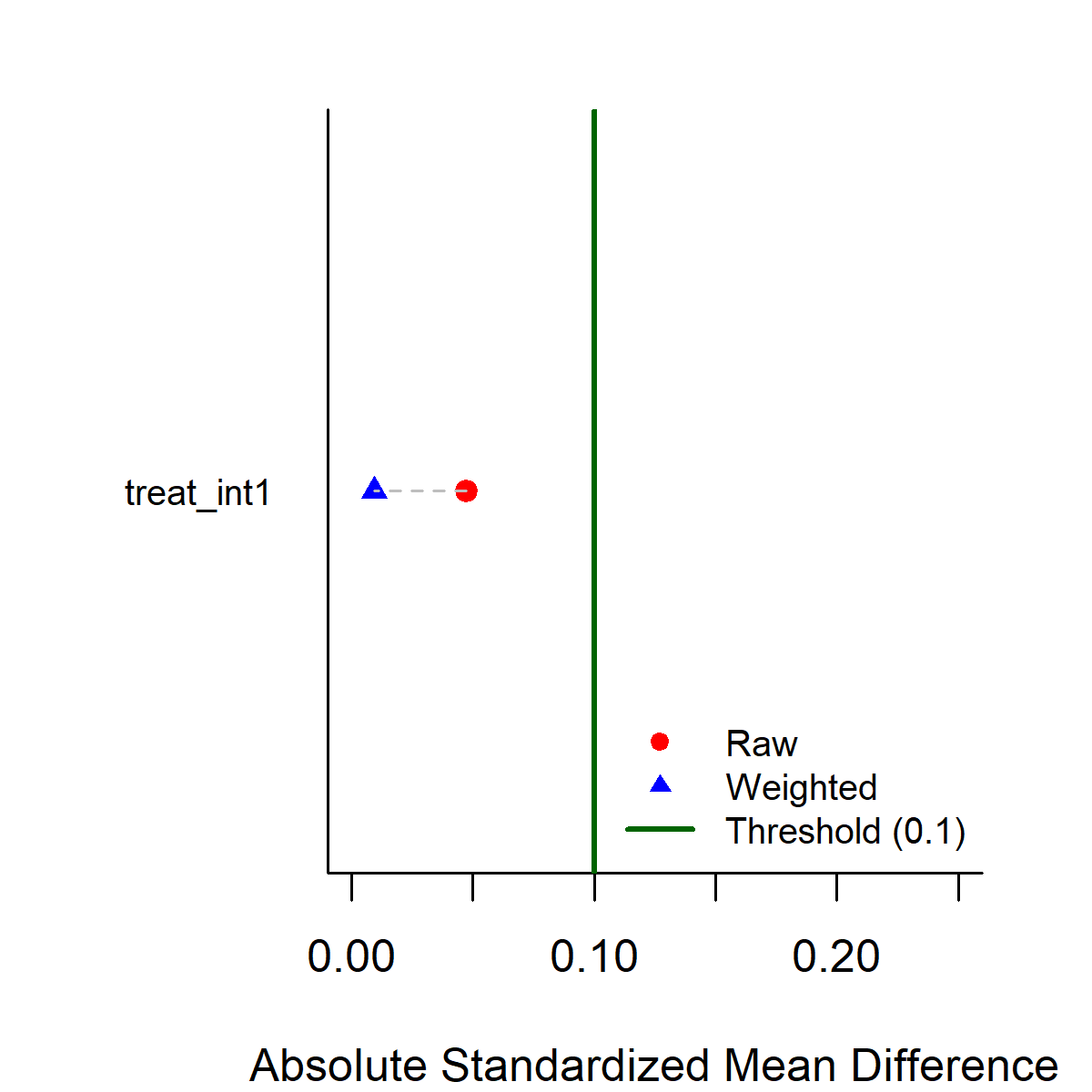}
    \caption{Target Covariate Balance for $\omega_{11^*}^{(Z,N)}$ (\ref{eq:Weight_N_Model_PW})}
    \label{fig:Weight_ZN_1I_NM}
\end{figure}

\begin{figure}[h]
    \centering
    \includegraphics[width=0.5\linewidth]{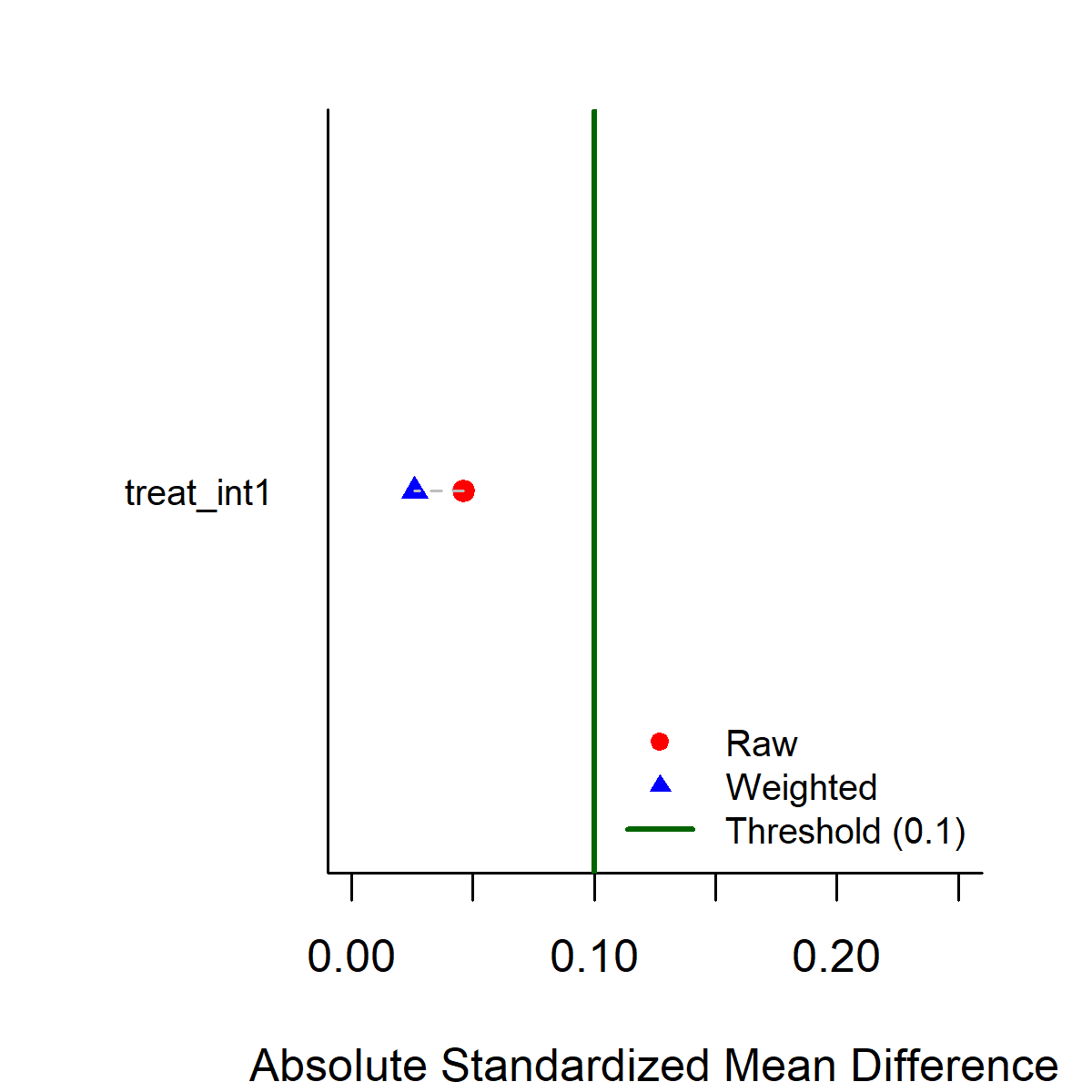}
    \caption{Target Covariate Balance for $\omega_{11^*}^{(Z,N)}$ (\ref{eq:Weight_Z_Bridge_PW_2})}
    \label{fig:Weight_ZN_1I_ZB}
\end{figure}

\begin{figure}[h]
    \centering
    \includegraphics[width=0.5\linewidth]{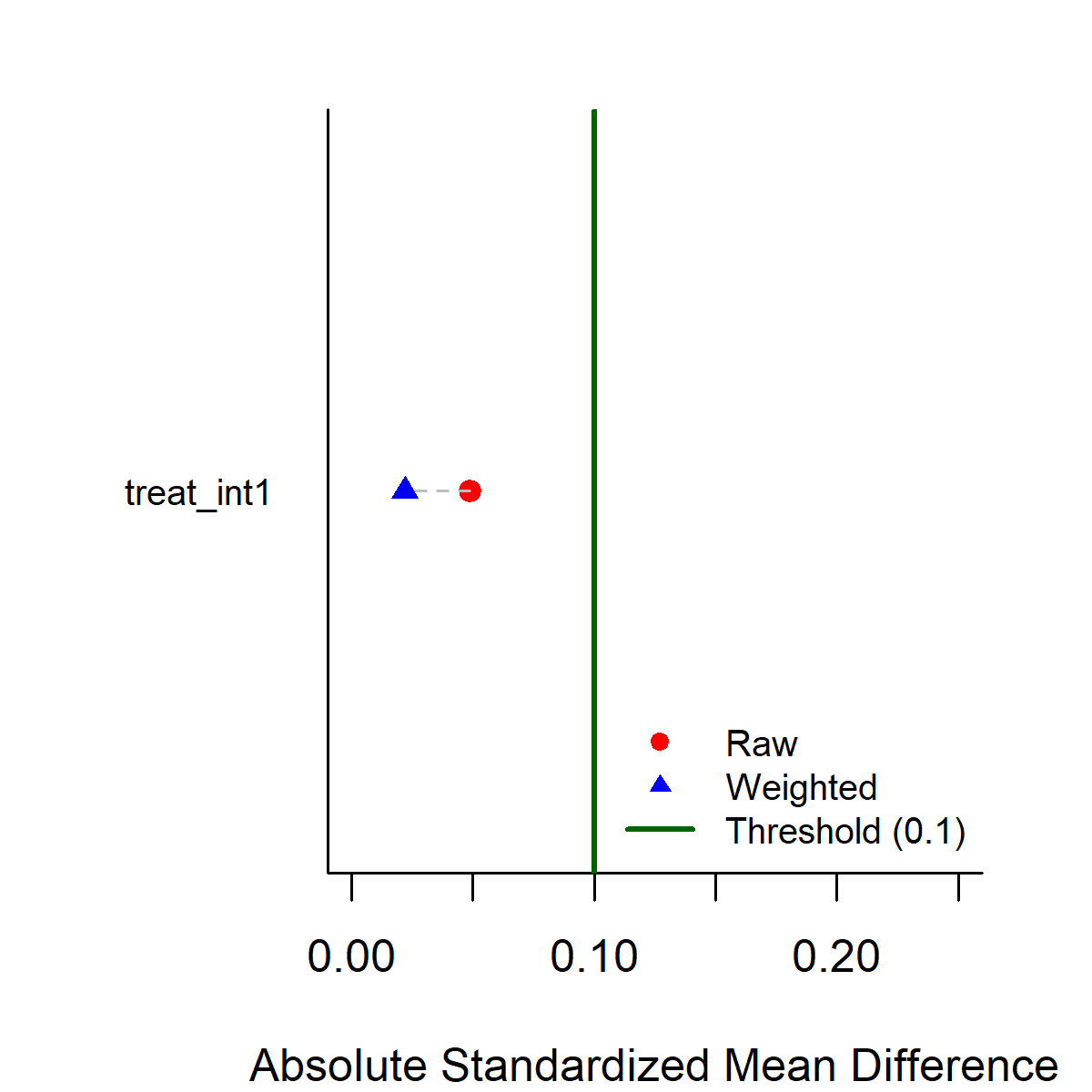}
    \caption{Target Covariate Balance for $\omega_{11^*}^{(Z,N)}$ (\ref{eq:Weight_N_Bridge_PW})}
    \label{fig:Weight_ZN_1I_NB}
\end{figure}

\begin{figure}[h]
    \centering
    \includegraphics[width=0.5\linewidth]{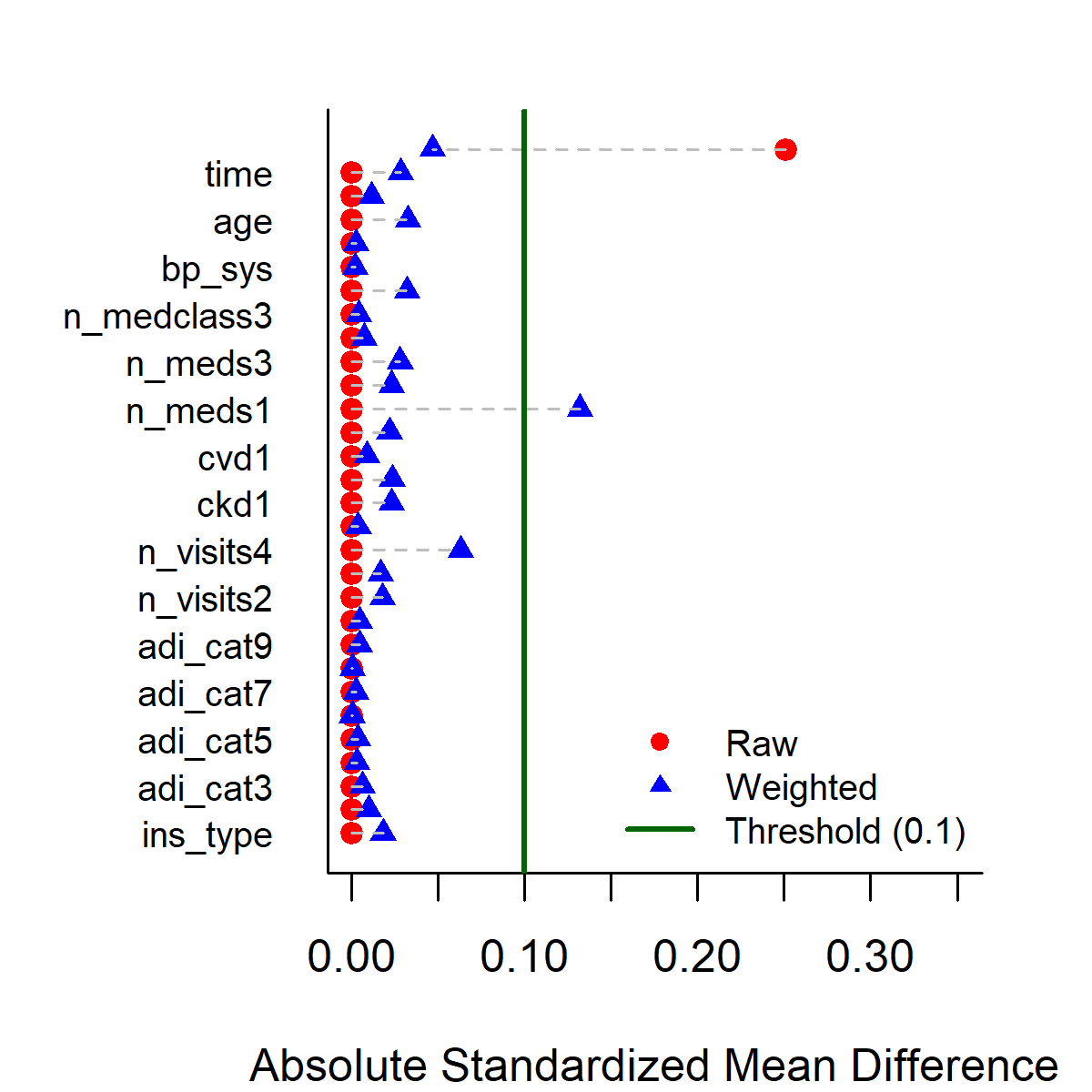}
    \caption{Target Covariate Balance for the first component of $\omega_{11^*}^{(Z,N)}$ 
    (\ref{eq:Weight_Z_Bridge_PW_2}) in odds terms}
    \label{fig:Weight_ZN_Zbridge_part_i}
\end{figure}

\begin{figure}[h]
    \centering
    \includegraphics[width=0.5\linewidth]{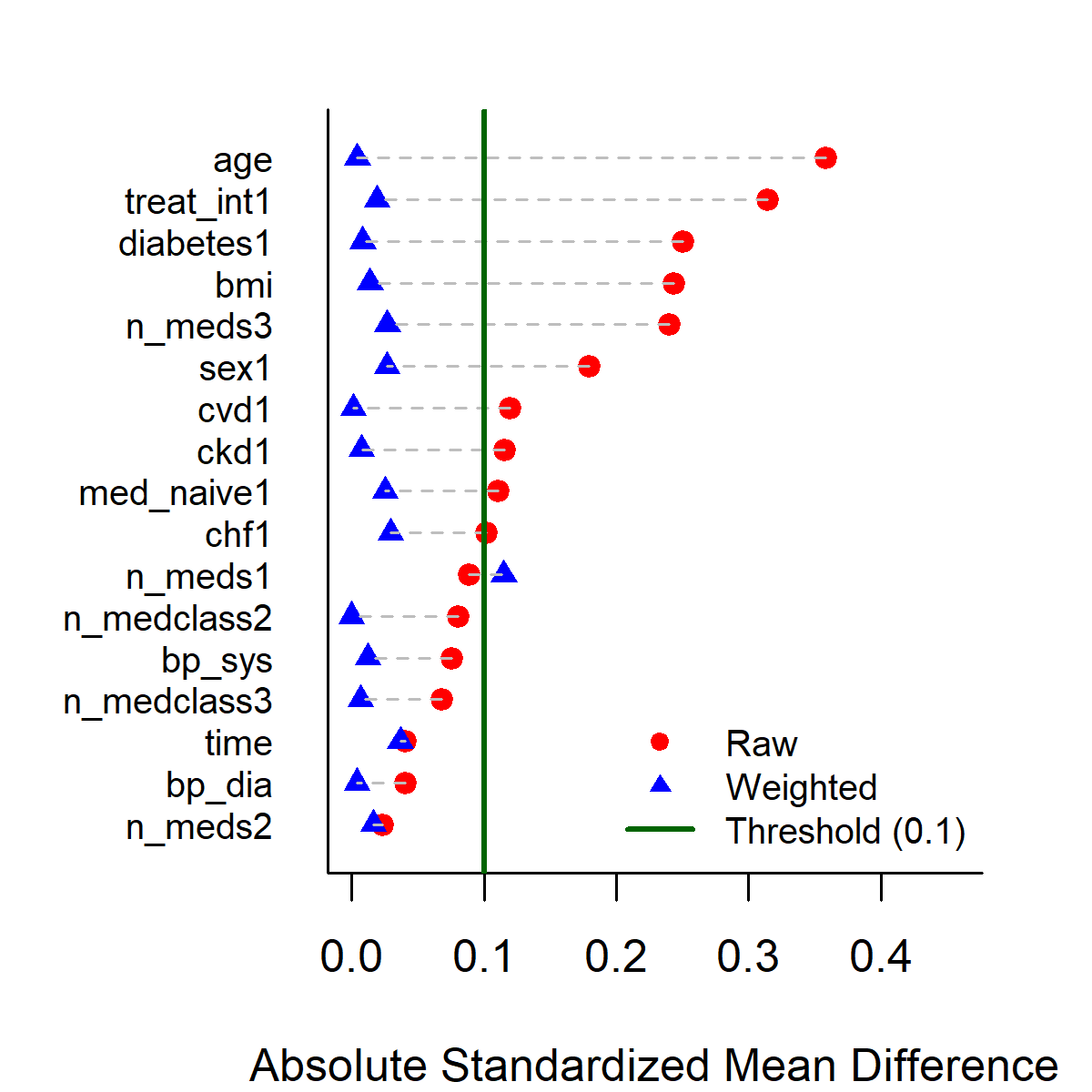}
    \caption{Target Covariate Balance for the second component (numerator) $\omega_{11^*}^{(Z,N)}$ (\ref{eq:Weight_Z_Bridge_PW_2}) in odds terms} 
    \label{fig:Weight_ZN_Zbridge_part_ii}
\end{figure}

\begin{figure}[h]
    \centering
    \includegraphics[width=0.5\linewidth]{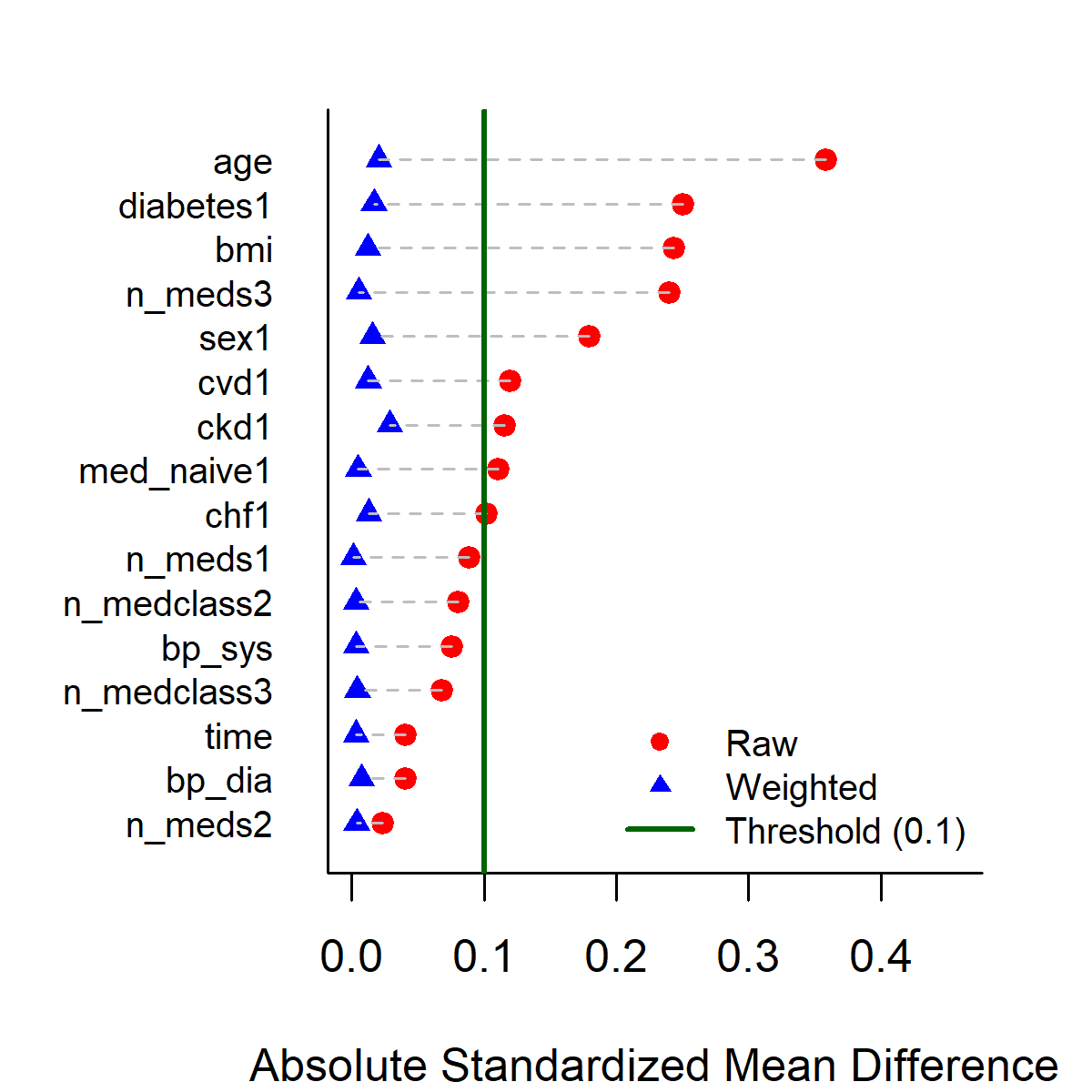}
    \caption{Target Covariate Balance for the second component (denominator) $\omega_{11^*}^{(Z,N)}$ (\ref{eq:Weight_Z_Bridge_PW_2}) in odds terms}
    \label{fig:Weight_ZN_Zbridge_part_iii}
\end{figure}

\begin{figure}[h]
    \centering
    \includegraphics[width=0.5\linewidth]{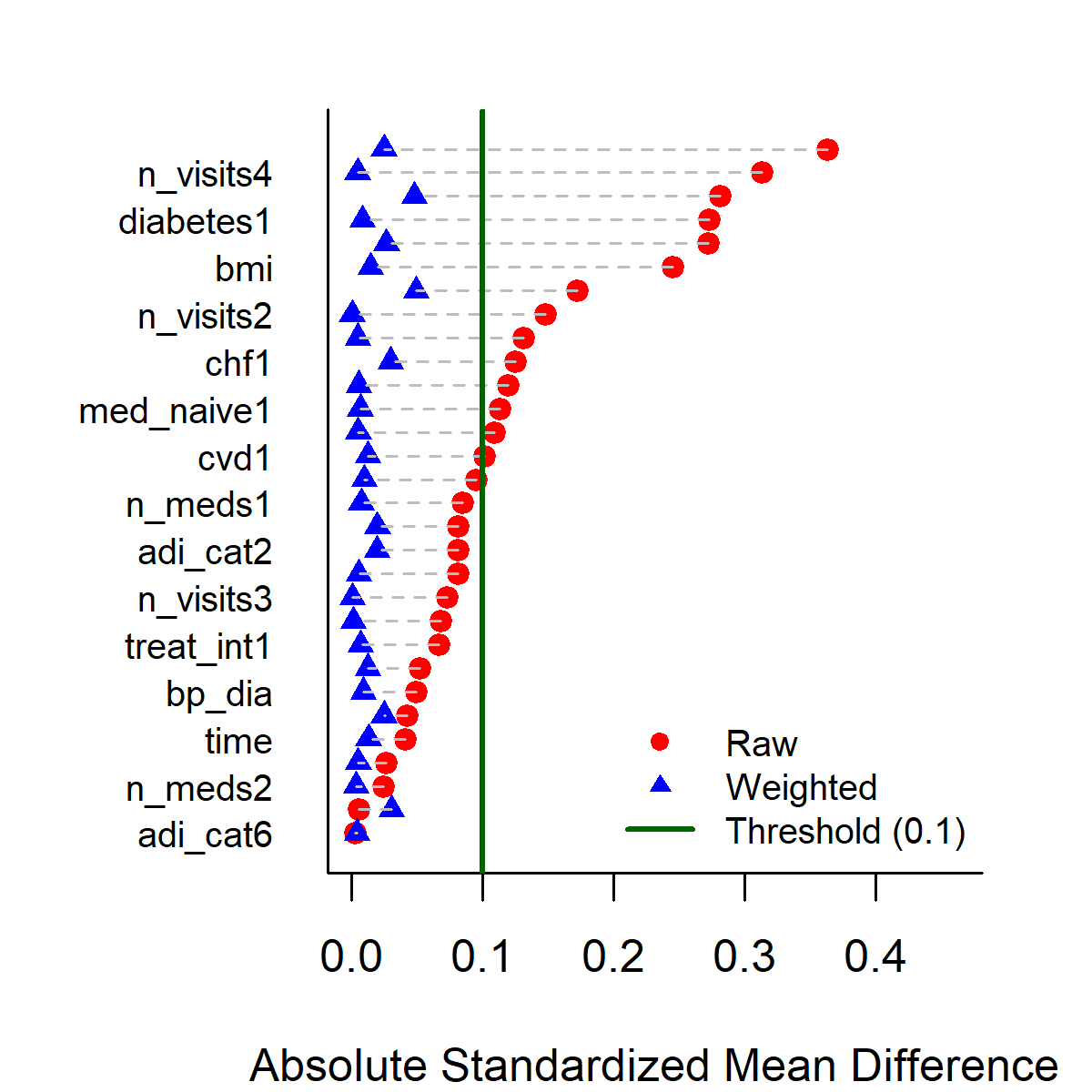}
    \caption{Target Covariate Balance for $\omega_{11^*}^{(Z,N)}$ (\ref{eq:Weight_N_Bridge_PW}) as expressed in odds terms (numerator)}
    \label{fig:Weight_ZN_Nbridge_part_i}
\end{figure}

\begin{figure}[h]
    \centering
    \includegraphics[width=0.5\linewidth]{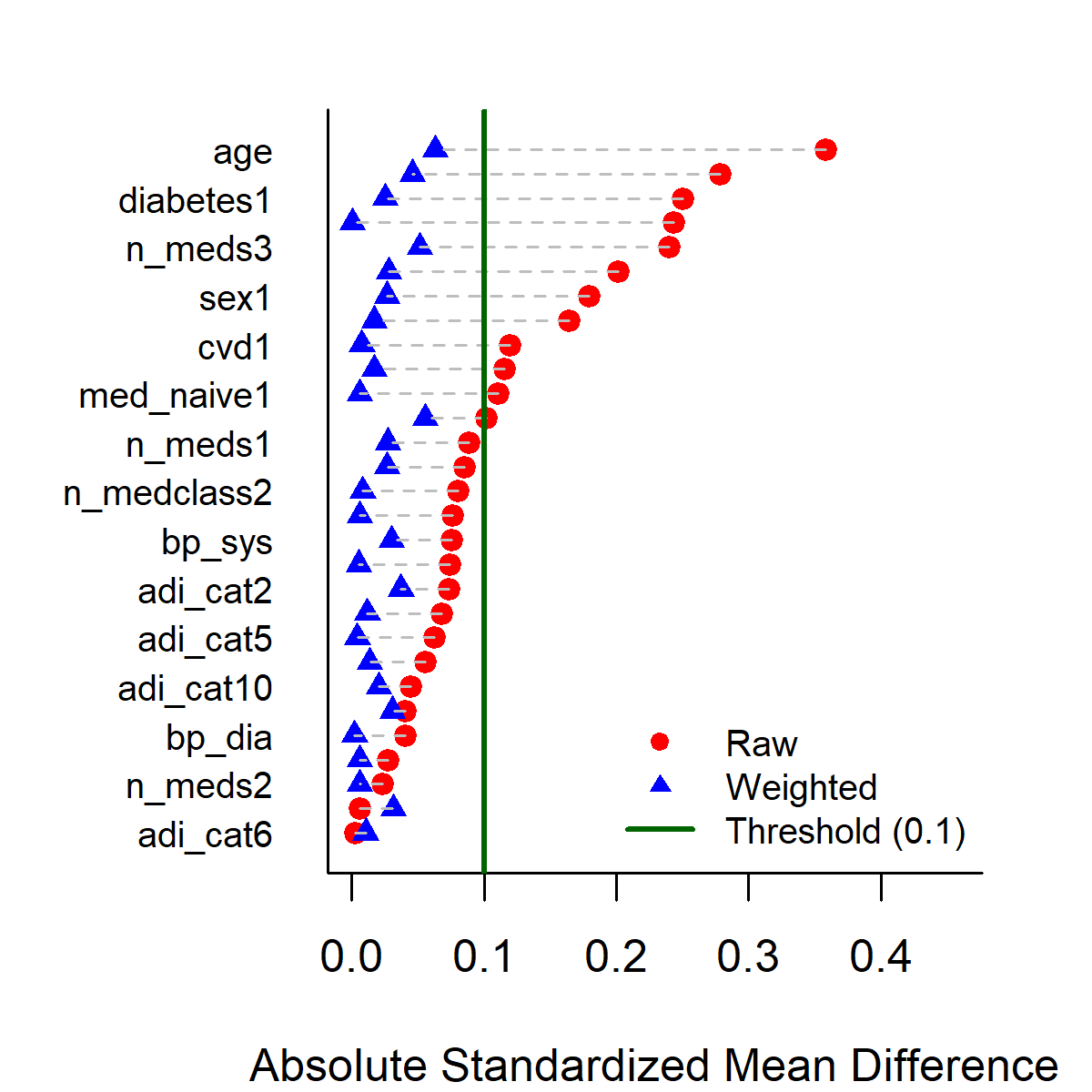}
    \caption{Target Covariate Balance for $\omega_{11^*}^{(Z,N)}$ (\ref{eq:Weight_N_Bridge_PW}) as expressed in odds terms (denominator)}
    \label{fig:Weight_ZN_Nbridge_part_ii}
\end{figure}

\begin{figure}[h]
    \centering
    \includegraphics[width=0.5\linewidth]{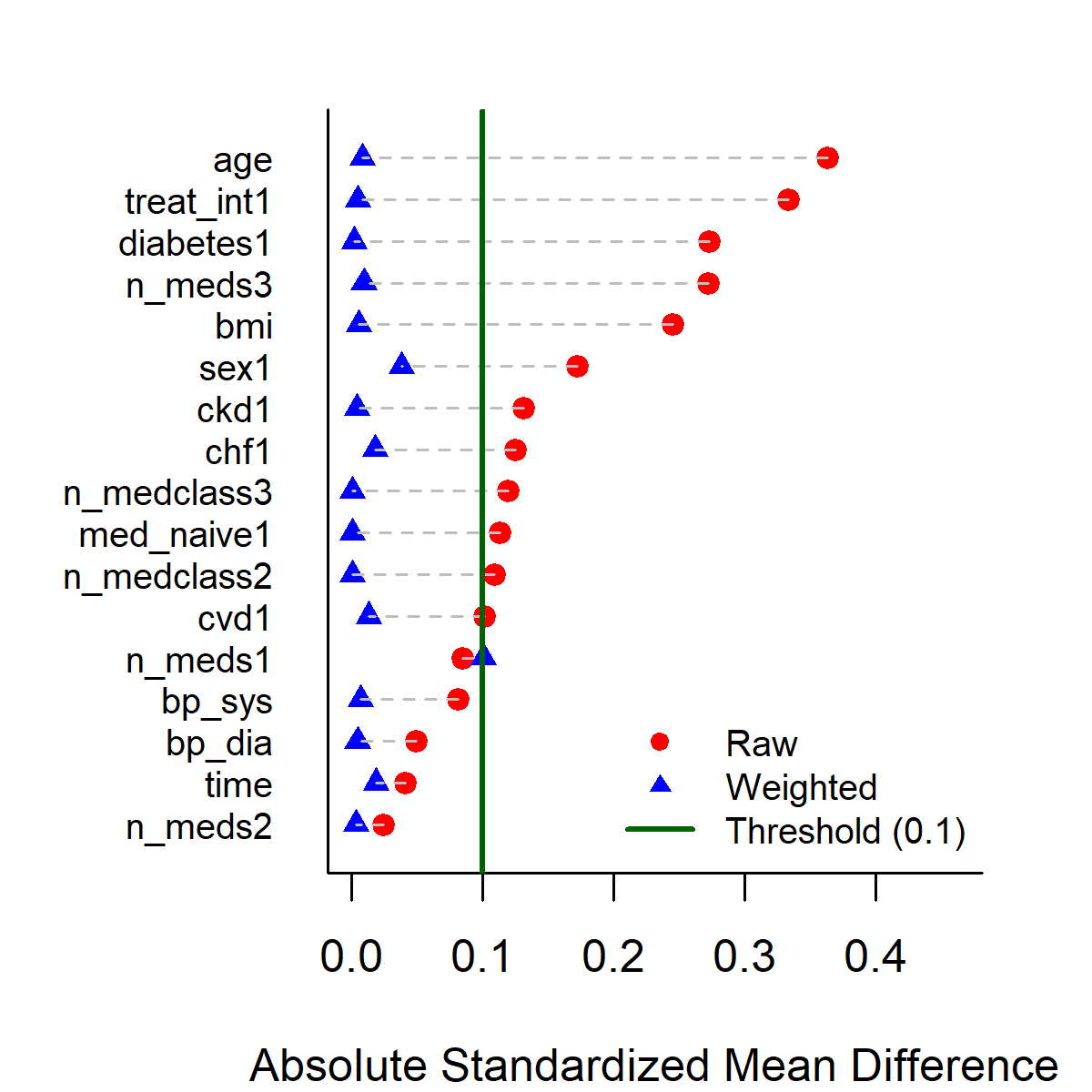}
    \caption{Target Covariate Balance for $\omega_{\blackdiamond 1^*}^{Z}$ (\ref{eq:Weight_Z_Bridge_RW}) as expressed in odds terms (numerator)}
    \label{fig:Weight_ZN_dI_part_i}
\end{figure}

\begin{figure}[h]
    \centering
    \includegraphics[width=0.5\linewidth]{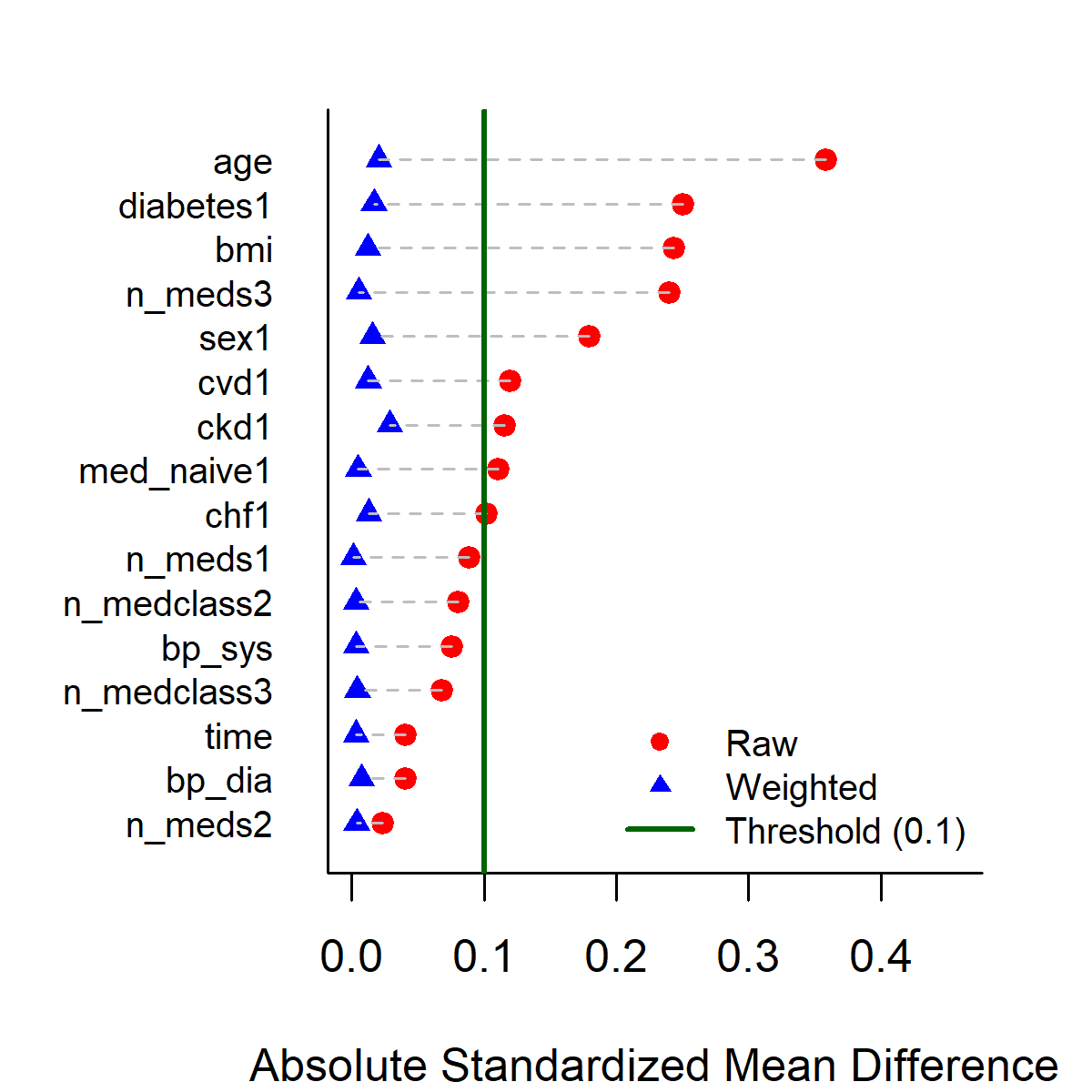}
\caption{Target Covariate Balance for $\omega_{\blackdiamond 1^*}^{Z}$ (\ref{eq:Weight_Z_Bridge_RW}) as expressed in odds terms (denominator)}
    \label{fig:Weight_ZN_dI_part_ii}
\end{figure}

\begin{figure}[h]
    \centering
    \includegraphics[width=0.5\linewidth]{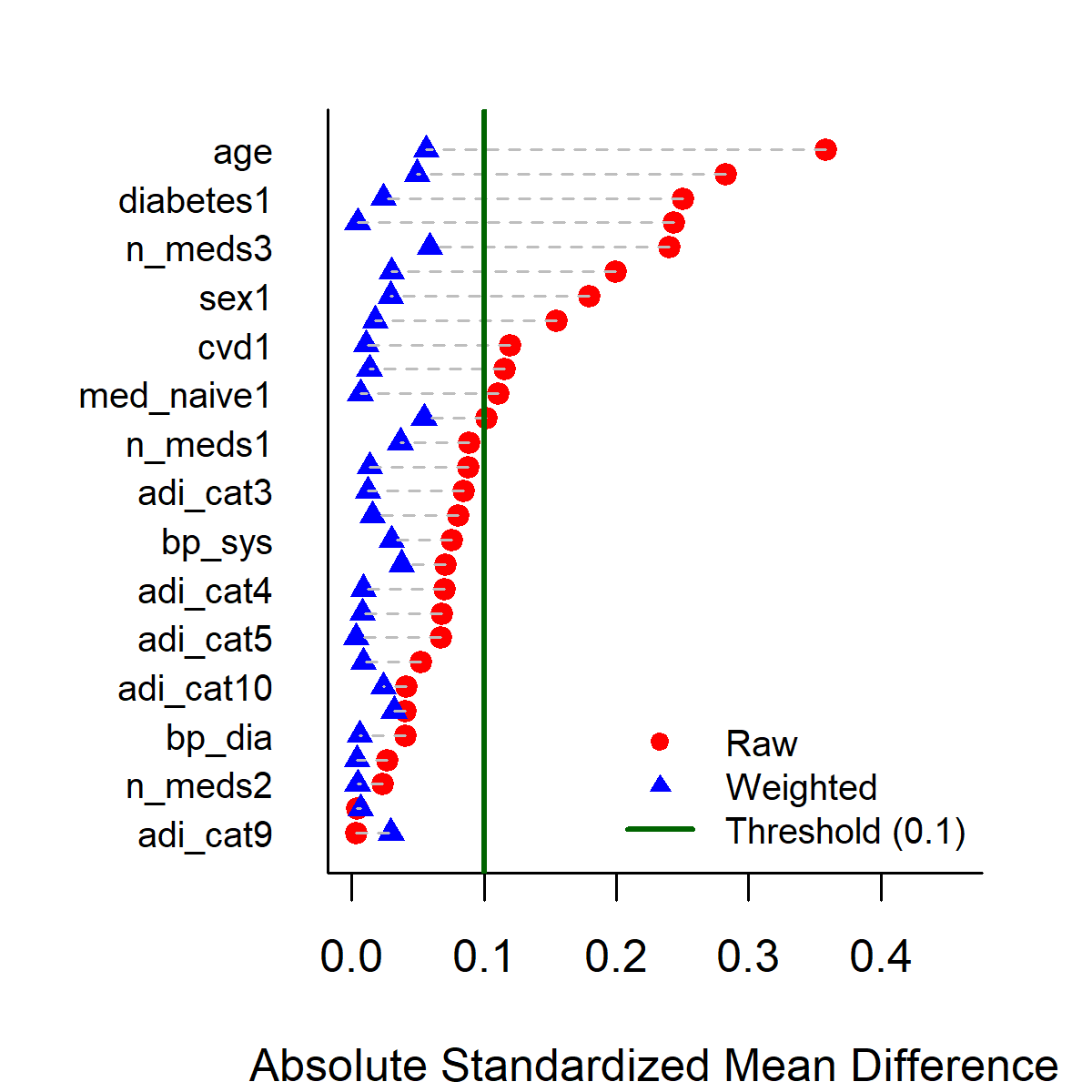}
    \caption{Target Covariate Balance for $\omega_{\diamond 1^*}^{(N,A_z)}$ (\ref{eq:Weight_N_Bridge_RW}) as expressed in odds terms (numerator)}
    \label{fig:Weight_ZNAz_dI_part_i}
\end{figure}

\begin{figure}[h]
    \centering
    \includegraphics[width=0.5\linewidth]{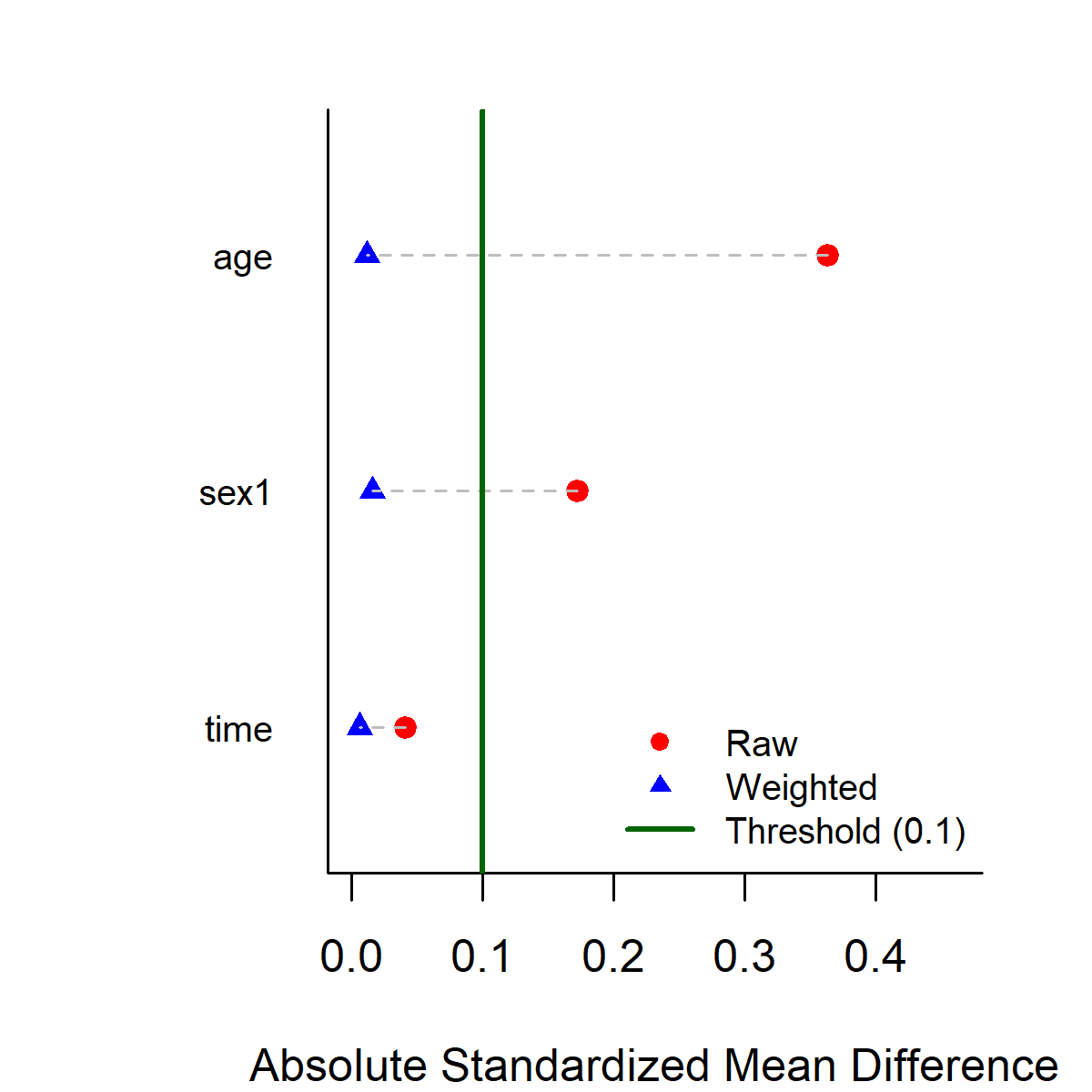}
    \caption{Target Covariate Balance for $\omega_{\diamond 1^*}^{(N,A_z)}$ (\ref{eq:Weight_N_Bridge_RW}) as expressed in odds terms (denominator)}
    \label{fig:Weight_ZNAz_dI_part_ii}
\end{figure}

\begin{figure}[h]
    \centering
    \includegraphics[width=0.5\linewidth]{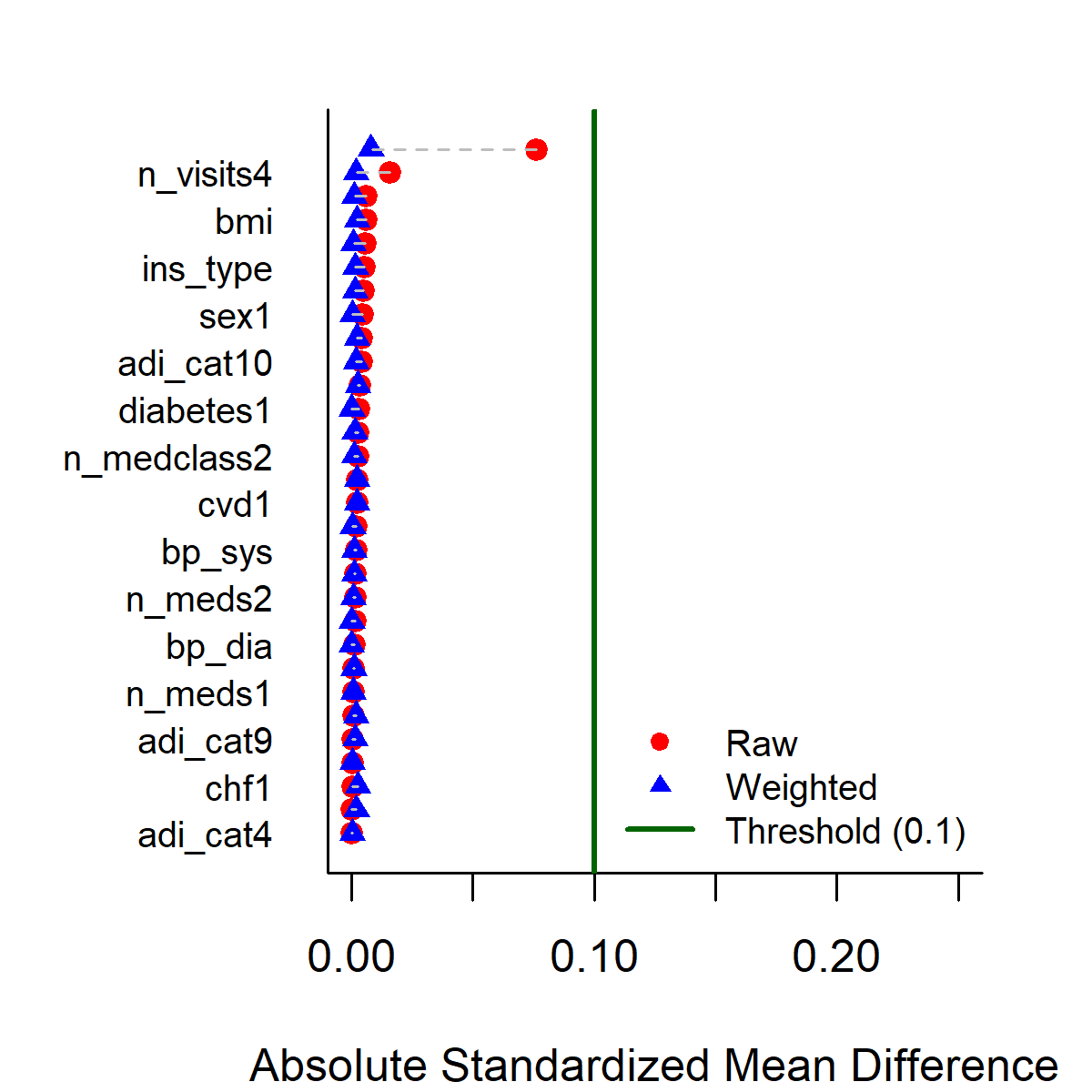}
    \caption{Target Covariate Balance for $\omega_{1}^C$ denominator}
    \label{fig:Weight_C_0_G1}
\end{figure}

\begin{figure}[h]
    \centering
    \includegraphics[width=0.5\linewidth]{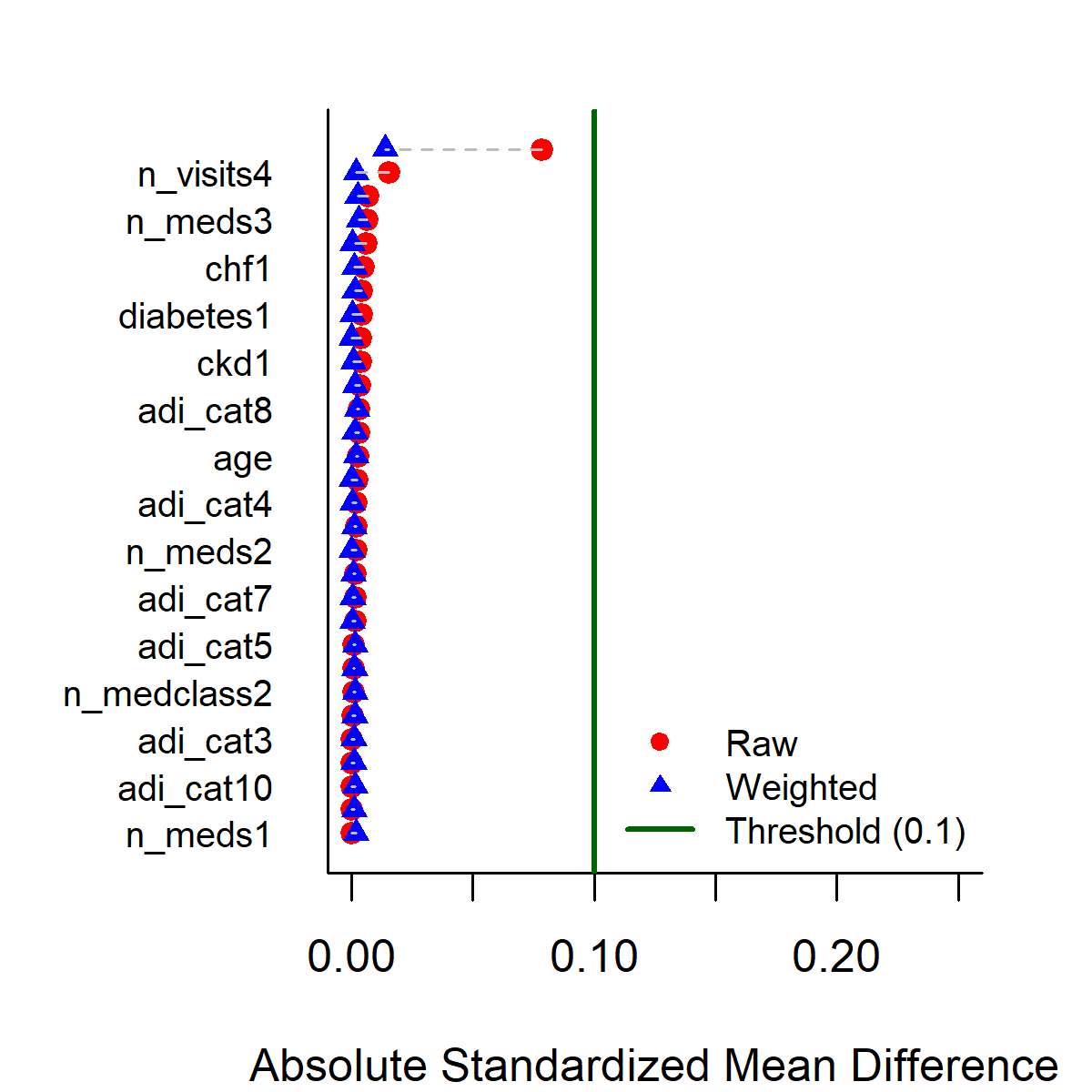}
    \caption{Target Covariate Balance for $\omega_{0}^C$ denominator}
    \label{fig:Weight_C_0_G0}
\end{figure}

\clearpage

\subsection{Proofs}

\subsubsection{Commonly Used Definitions} \label{subsubsec:Common_Definitions}

The proofs rely on the following shorthand notation, which express weighting functions:
\begin{align}
\omega_{11^*}^{(Z,N)}(Z,N,A_z,A_y)  &\coloneq \frac{\P_1^*(Z,N|A_z,A_y)}{\P_1(Z,N|A_z,A_y)} =\frac{\P_0(Z|A_z,A_y)\P_1(N_1|A_z,A_y)}{\P_1(Z,N|A_z,A_y)},\nonumber\\
\omega_{\blackdiamond 1^*}^{Z}(Z,A_z,A_y)&\coloneq\frac{\P_1^*(Z|A_z,A_y)}{\P_\blackdiamond(Z|A_z,A_y)}=\frac{\P_0(Z|A_z,A_y)}{\P_\blackdiamond(Z|A_z,A_y)}, \nonumber \\
\omega_{\diamond 1^*}^{(N,A_z)}(N,A_z,A_y)&\coloneq\frac{\P_1^*(N,A_z|A_y)}{\P_\diamond(N,A_z|A_y)}=\frac{\P_1(N,A_z|A_y)}{\P_\diamond(N,A_z|A_y)}, \nonumber \\
\omega_{01}^{A_z}(A_z,A_y) &\coloneq\frac{\P_1^*(A_z|A_y)}{\P_0(A_z|A_y)}=\frac{\P_1(A_z|A_y)}{\P_0(A_z|A_y)}, \nonumber \\
\omega_{gT}^{A_y}(A_y) &\coloneq\frac{\P_\std(A_y)}{\P_g(A_y)}, \\\nonumber 
\omega_{01^*}^{(Z,N,A_z)}(Z,N,A_z,A_y)&\coloneq\frac{\P_1^*(Z,N,A_z|A_y)}{\P_0(Z,N,A_z|A_y)}=\frac{\P_0(Z|A_z,A_y)\P_1(N,A_z|A_y)}{\P_0(Z,N,A_z|A_y)}. \nonumber 
\end{align}


The proofs also rely on the following shorthand notation, which express conditional sequential expectations:
\begin{align}
 \mu_1(Z,N,A_z,A_y) &\coloneq \E_1[Y|Z,N,A_z,A_y] \nonumber
\end{align}
\begin{align}
 \zeta_1^*(Z,A_z,A_y) 
 &\coloneq \E_1^*[\mu_1(Z,N,A_z,A_y)\mid Z,A_z,A_y]=\int\mu_1(Z,n,A_z,A_y)\P_1^*(n|Z,A_z,A_y)dn \nonumber \\
 &=\int\mu_1(Z,n,A_z,A_y)\P_1(n|Z,A_z,A_y)dn, \nonumber
\end{align}
\begin{align}
 \nu_1^*(N,A_z,A_y) 
 &\coloneq \E_1^*[\mu_1(Z,N,A_z,A_y)\mid N,A_z,A_y]=\int\mu_1(z,N,A_z,A_y)\P_1^*(z|N,A_z,A_y)dz \nonumber\\
 &=\int\mu_1(z,N,A_z,A_y)\P_0(z|A_z,A_y)dz, \nonumber
\end{align}
\begin{align}
 \kappa_1^*(A_z,A_y) 
 &\coloneq \E_1^*[\mu_1(Z,N,A_z,A_y)\mid A_z,A_y]=\int\mu_1(z,n,A_z,A_y)\P_1^*(z,n|A_z,A_y)dzdn \nonumber \\
 &=\int\mu_1(z,n,A_z,A_y)\P_0(z|A_z,A_y)\P_1(n|A_z,A_y)dzdn, \nonumber 
\end{align}
\begin{align}
 \eta_1^*(A_y) 
 &\coloneq \E_1^*[\mu_1(Z,N,A_z,A_y)\mid A_y]=\int\mu_1(z,n,a_z,A_y)\P_1^*(z,n,a_z|A_y)dzdnda_z \nonumber \\ 
 &=\int\mu_1(z,n,a_z,A_y)\P_0(z|a_z,A_y)\P_1(n,a_z|A_y)dzdnda_z.\nonumber 
\end{align}

At times, we will use $\P(b|c)$ to abbreviate $\P(B=b|C=c)$ and $\E[y|x]$ to abbreviate $\E[Y=y|X=x]$.

\newpage

\subsubsection{Proof of Linear Estimator (E-OBD)}

Given the linear causal model
\begin{align}
	\E_1[Y|Z,N,A_z,A_y]=\beta_0+\beta_z Z+\beta_n N+\beta_{a_z} A_z + \beta_{a_y}A_y \tag{\ref{eq:Y_Model_Linear}},
\end{align}
we have that 
\begin{align}
\theta_1&=\textstyle\sum_{y,z,n,a_z,a_y}y\P_1(y|z,n,z,a_z,a_y)\P_1(z|n,a_z,a_y)\P_1(n|a_z,a_y)\P_1(a_z|a_y)\P_\std(a_y) \nonumber\\
&=\textstyle\sum_{z,n,a_z,a_y}(\beta_0+\beta_z z+\beta_n n+\beta_{a_z}a_z+\beta_{a_y}a_y)\P_1(z|n,a_z,a_y)\P_1(n|a_z,a_y)\P_1(a_z|a_y)\P_\std(a_y)\frac{P_1(A_y=a_y)}{\P_1(A_y=a_y)} \nonumber \\
&=\beta_0+\beta_z\E_1[\omega_{1T}^{A_y}Z]+\beta_n\E_1[\omega_{1T}^{A_y}N]+\beta_{a_z}\E_1[\omega_{1T}^{A_y}A_z]+\beta_{a_y}\E_1[\omega_{1T}^{A_y}A_y] \nonumber
\end{align}
Note that we can also re-express $\theta_1$ as
\begin{align}
\theta_1&=\textstyle \sum_{y,a_y}y\P_1(y|a_y)\P_\std(a_y) = \textstyle \sum_{a_y}\E_1[Y|A_y=a_y]\P_1(a_y)\frac{\P_\std(a_y)}{\P_1(a_y)} = E_1[\omega_{1T}^{A_y}Y] \nonumber
\end{align}
We also have that
\begin{align}
\theta_1^*&=\textstyle\sum_{y,z,n,a_z,a_y}y\P_1(y|z,n,a_z,a_y)\P_0(z|a_z,a_y)\P_1(n|a_z,a_y)\P_1(a_z|a_y)\P_\std(a_y) \nonumber\\
&=\textstyle\sum_{z,n,a_z,a_y}(\beta_0+\beta_z z+\beta_n n+\beta_{a_z}a_z+\beta_{a_y}a_y)\P_0(z|a_z,a_y)\P_1(n|a_z,a_y)\P_1(a_z|a_y)\frac{\P_0(a_z|a_y)}{\P_0(a_z|a_y)}\P_\std(a_y)\frac{\P_1(a_y)}{\P_1(a_y)}\frac{\P_0(a_y)}{\P_0(a_y)} \nonumber \\ 
&=\beta_0+\beta_z\E_0[\omega_{01}^{A_z}\omega_{0T}^{A_y}Z]+\beta_n\E_1[\omega_{1T}^{A_y}N]+\beta_{a_z}\E_1[\omega_{1T}^{A_y}A_z]+\beta_{a_y}\E_1[\omega_{1T}^{A_y}A_y] \nonumber
\end{align}
Thus,
\begin{align}
\theta_1^* - \theta_1 = \beta_z(\E_0[\omega_{01}^{A_z}\omega_{0T}^{A_y}Z]-\E_1[\omega_{1T}^{A_y}Z]). \nonumber
\end{align}
Because $\theta_1=\E_1[\omega_{1T}^{A_y}Y]$, it follows that
\begin{align}
  \theta_1^*&=\E_1[\omega_{1T}^{A_y}Y]+\beta_z(\E_0[\omega_{01}^{A_z}\omega_{0T}^{A_y}Z]-\E_1[\omega_{1T}^{A_y}Z]). \nonumber
\end{align}

There is an alternate linear estimator, different from the main text, that can be derived upon nothing that
\begin{align}
\theta_1^*&=\textstyle\sum_{y,z,n,a_z,a_y}y\P_1(y|z,n,a_z,a_y)\P_0(z|a_z,a_y)\P_1(n|a_z,a_y)\P_1(a_z|a_y)\P_\std(a_y) \nonumber\\
&=\textstyle\sum_{z,n,a_z,a_y}(\beta_0+\beta_z z+\beta_n n+\beta_{a_z}a_z+\beta_{a_y}a_y)\P_0(z|a_z,a_y)\frac{\P_1(z|a_z,a_y)}{\P_1(z|a_z,a_y)}\P_1(n|a_z,a_y)\P_1(a_z|a_y)\P_\std(a_y)\frac{\P_1(a_y)}{\P_1(a_y)} \nonumber \\ 
&=\beta_0+\beta_z\E_1[\omega_{11^*}^{Z}\omega_{1T}^{A_y}Z]+\beta_n\E_1[\omega_{1T}^{A_y}N]+\beta_{a_z}\E_1[\omega_{1T}^{A_y}A_z]+\beta_{a_y}\E_1[\omega_{1T}^{A_y}A_y] \nonumber
\end{align}
Then, the alternate linear estimator would be based on the following expression of $\theta_1^*$:
\begin{align}
  \theta_1^*&=\E_1[\omega_{1T}^{A_y}Y]+\beta_z(\E_1[\omega_{11^*}^{Z}\omega_{1T}^{A_y}Z]-\E_1[\omega_{1T}^{A_y}Z]). \nonumber
\end{align}

Because $\omega_{11^*}^{Z}(Z,A_z,A_y)=\frac{\P_0(z|a_z,a_y)}{\P_1(z|a_z,a_y)}$ can be expressed in terms of odds, i.e., $\omega_{11^*}^{Z}(Z,A_z,A_y)=\frac{\odds(G=1|A_z,A_y)}{\odds(G=1|Z,A_z,A_y)}$, this suggests an alternate linear estimator that, like the linear estimator of the main text, does not require modeling the density of $Z$. Unlike the linear estimator of the main text, this alternative does not reduce to a causal version of the detailed Oaxaca-Blinder decomposition when there are no allowable covariates $A_z$ or $A_y$.

\newpage

\subsubsection{Proof of Pure Weighting Estimator Expressions}

\begin{align}
    \theta_1^*&=\textstyle \sum_{y,z,n,a_z,a_y}y\P_1(y | z,n,a_z,a_y)\P_0(z|a_z,a_y)\P_1(n|a_z,a_y)\P_1(a_z|a_y) \P_\std(a_y) \nonumber\\
    &=\textstyle \sum_{y,z,n,a_z,a_y}y\P_1(y | z,n,a_z,a_y)\P_1(z|n,a_z,a_y)\P_1(n|a_z,a_y)\P_1(a_z|a_y) \P_1(a_y)\nonumber \\ &~~~~~~~~~~~~~~~~~~~~~~\times \frac{\P_0(z|a_z,a_y)\P_1(n|a_z,a_y)}{\P_1(z|n,a_z,a_y)\P_1(n|a_z,a_y)}\times\frac{\P_\std(a_y)}{\P_1(a_y)}\nonumber\\
    &=\textstyle \sum_{y,z,n,a_z,a_y}y\P_1(y | z,n,a_z,a_y)\P_1(z,n|a_z,a_y)\P_1(a_z|a_y) \P_1(a_y) \nonumber \\ &~~~~~~~~~~~~~~~~~~~~~~\times \frac{\P_0(z|a_z,a_y)\P_1(n|a_z,a_y)}{\P_1(z,n|a_z,a_y)}\times\frac{\P_\std(a_y)}{\P_1(a_y)}\nonumber\\
    &=\E_1\Biggl[\omega_{11^*}^{(Z,N)}\omega_{1T}^{A_y} Y \Bigg] \nonumber
\end{align}
where 
\begin{align}
\omega_{11^*}^{(Z,N)}(Z,N,A_z,A_y) &\coloneq \frac{\P_1^*(Z,N|A_z,A_y)}{\P_1(Z,N|A_z,A_y)}\nonumber \\&=\frac{\P_0(Z|A_z,A_y)\P_1(N|A_z,A_y)}{\P_1(Z,N|A_z,A_y)} \nonumber
\end{align}
and $\omega_{1T}^{A_y}(A_y)$ is defined in Section \ref{subsubsec:Common_Definitions} above.

\textbf{Z-Model-PW}

The Z-Model-PW expression follows by substituting in the derivations of the weights $\omega_{11^*}^{(Z,N)}$ found in (\ref{eq:Weight_Z_Model_PW}). 

\textbf{N-Model-PW}

The N-Model-PW expression follows by substituting in the derivations of the weights $\omega_{11^*}^{(Z,N)}$ found in (\ref{eq:Weight_N_Model_PW}). 

\textbf{Z-Bridge-PW}

The Z-Bridge-PW expressions follow by substituting in the derivations of the weights $\omega_{11^*}^{(Z,N)}$ found in (\ref{eq:Weight_Z_Bridge_PW_2}). 

\textbf{N-Bridge-PW}

The N-Bridge-PW expression follows by substituting in the derivations of the weights $\omega_{11^*}^{(Z,N)}$ found in (\ref{eq:Weight_N_Bridge_PW}).

\newpage

\subsubsection{Proof of Regress-then-Weight Estimator Expressions}

\textbf{Z-Model-RW} (version 1)
\begin{align}
    \theta_1^*&=\textstyle \sum_{y,z,n,a_z,a_y}y\P_1(y | z,n,a_z,a_y)\P_0(z|a_z,a_y)\P_1(n|a_z,a_y)\P_1(a_z|a_y) \P_\std(a_y) \nonumber\\
    &=\textstyle \sum_{z,n,a_z,a_y}\mu_1(z,n,a_z,a_y)\P_1(z|n,a_z,a_y)\P_1(n|a_z,a_y)\P_1(a_z|a_y) \P_1(a_y)\nonumber \\ &~~~~~~~~~~~~~~~~~~~~~~\times \frac{\P_0(z|a_z,a_y)\P_1(n|a_z,a_y)}{\P_1(z|n,a_z,a_y)\P_1(n|a_z,a_y)}\times\frac{\P_\std(a_y)}{\P_1(a_y)}\nonumber\\
    &=\textstyle \sum_{z,n,a_z,a_y}\mu_1(z,n,a_z,a_y)\P_1(z,n|a_z,a_y)\P_1(a_z|a_y) \P_1(a_y) \nonumber \\ &~~~~~~~~~~~~~~~~~~~~~~\times \frac{\P_0(z|a_z,a_y)\P_1(n|a_z,a_y)}{\P_1(z,n|a_z,a_y)}\times\frac{\P_\std(a_y)}{\P_1(a_y)}\nonumber\\
    &=\E_1\Biggl[\omega_{11^*}^{(Z,N)}\omega_{1T}^{A_y} \mu_1(Z,N,A_z,A_y) \Bigg] \nonumber
\end{align}
where $\mu_1(Z,N,A_z,A_y)$ and the weighting functions $\omega_{11^*}^{(Z,N)}(Z,N,A_z,A_y)$ and $\omega_{1T}^{A_y}$ are defined in Section \ref{subsubsec:Common_Definitions} above. See (\ref{eq:Weight_Z_Model_PW}), (\ref{eq:Weight_N_Model_PW}), (\ref{eq:Weight_Z_Bridge_PW_2}), and (\ref{eq:Weight_N_Bridge_PW}) for alternative expressions of $\omega_{11^*}^{(Z,N)}(Z,N,A_z,A_y)$.

\textbf{Z-Model-RW} (version 2) and \textbf{N-Model-RW}
\begin{align}
    \theta_1^*&=\textstyle \sum_{y,z,n,a_z,a_y}y\P_1(y | z,n,a_z,a_y)\P_0(z|a_z,a_y)\P_1(n|a_z,a_y)\P_1(a_z|a_y) \P_\std(a_y) \nonumber\\
    &=\textstyle \sum_{z,n,a_z,a_y}\mu_1(z,n,a_z,a_y)\P_0(z|n,a_z,a_y)\P_0(n|a_z,a_y)\P_0(a_z|a_y) \P_0(a_y)\nonumber \\ &~~~~~~~~~~~~~~~~~~~~~~\times \frac{\P_0(z|a_z,a_y)\P_1(n|a_z,a_y)\P_1(a_z|a_y)}{\P_0(z|n,a_z,a_y)\P_0(n|a_z,a_y)\P_0(a_z|a_y)}\times\frac{\P_\std(a_y)}{\P_0(a_y)}\nonumber\\
    &=\textstyle \sum_{z,n,a_z,a_y}\mu_1(z,n,a_z,a_y)\P_0(z,n|a_z,a_y)\P_0(a_z|a_y) \P_0(a_y) \nonumber \\ &~~~~~~~~~~~~~~~~~~~~~~\times \frac{\P_0(z|a_z,a_y)\P_1(n|a_z,a_y)\P_1(a_z|a_y)}{\P_0(z,n,a_z|a_y)}\times\frac{\P_\std(a_y)}{\P_0(a_y)}\nonumber\\
    &=\E_1\Biggl[\omega_{01^*}^{(Z,N,A_z)}\omega_{0T}^{A_y} \mu_1(Z,N,A_z,A_y) \Bigg] \nonumber
\end{align}
where 
\begin{align}
\omega_{01^*}^{(Z,N,A_z)}(Z,N,A_z,A_y) &\coloneq \frac{\P_1^*(Z,N,A_z|A_y)}{\P_0(Z,N,A_z|A_y)}\nonumber \\&=\frac{\P_0(Z|A_z,A_y)\P_1(N|A_z,A_y)\P_1(A_z|A_y)}{\P_0(Z,N,A_z|A_y)} \nonumber
\end{align}
and $\mu_1(Z,N,A_z,A_y)$ and $\omega_{0T}^{A_y}(A_y)$ are defined in Section \ref{subsubsec:Common_Definitions} above.

The expression for Z-Model-RW (version 2) follows by substituting in the derivations of the weights $\omega_{01^*}^{(Z,N,A_z)}(Z,N,A_z,A_y)$ found in (\ref{eq:Weight_Z_Model_RW}).

The N-Model-RW estimator follows by substituting in the derivations of the weights $\omega_{01^*}^{(Z,N,A_z)}(Z,N,A_z,A_y)$ found in (\ref{eq:Weight_N_Model_RW}).

\newpage

\textbf{Z-Bridge-RW}
\begin{align}
    \theta_1^*&=\textstyle \sum_{y,z,n,a_z,a_y}y\P_1(y | z,n,a_z,a_y)\P_0(z|a_z,a_y)\P_1(n|a_z,a_y)\P_1(a_z|a_y) \P_\std(a_y) \nonumber\\
    &=\textstyle \sum_{z,n,a_z,a_y}\mu_1(z,n,a_z,a_y)\P_\blackdiamond(z|n,a_z,a_y)\P_\blackdiamond(n|a_z,a_y)\P_\blackdiamond(a_z|a_y) \P_\blackdiamond(a_y)\nonumber \\ &~~~~~~~~~~~~~~~~~~~~~~\times \frac{\P_0(z|a_z,a_y)\P_1(n|a_z,a_y)}{\P_\blackdiamond(z|n,a_z,a_y)\P_\blackdiamond(n|a_z,a_y)}\times\frac{\P_1(A_z|A_y)}{\P_\blackdiamond(A_z|A_y)}\times\frac{\P_\std(a_y)}{\P_{\blackdiamond}(a_y)}\nonumber\\
    &=\textstyle \sum_{z,n,a_z,a_y}\mu_1(z,n,a_z,a_y)\P_\blackdiamond(z,n|a_z,a_y)\P_\blackdiamond(a_z|a_y) \P_\blackdiamond(a_y) \nonumber \\ &~~~~~~~~~~~~~~~~~~~~~~\times \frac{\P_0(z|a_z,a_y)\P_1(n|a_z,a_y)}{\P_\blackdiamond(z,n|a_z,a_y)}\times\frac{\P_\std(a_y)}{\P_\blackdiamond(a_y)}\nonumber\\
    &=\E_\blackdiamond\Biggl[\omega_{\blackdiamond 1^*}^{Z}\omega_{\blackdiamond T}^{A_y} \mu_1(Z,N,A_z,A_y) \Bigg] \nonumber
\end{align}
where 
\begin{align}
\omega_{\blackdiamond 1^*}^{Z}(Z,A_z,A_y) &\coloneq \frac{\P_1^*(Z,N|A_z,A_y)}{\P_\blackdiamond(Z,N|A_z,A_y)}\nonumber \\&=\frac{\P_0(Z|A_z,A_y)\P_1(N|A_z,A_y)}{\P_\blackdiamond(Z,N|A_z,A_y)} \nonumber \\ &=\frac{\P_0(Z|A_z,A_y)}{\P_\blackdiamond(Z|A_z,A_y)} \nonumber ,
\end{align}
and $\omega_{\blackdiamond T}^{A_y}(A_y):=\frac{\P_\std(A_y)}{\P_\blackdiamond(A_y)}$ is proportional to $\omega_{1T}^{A_y}(A_y)$ which is defined above in Section \ref{subsubsec:Common_Definitions}. The Z-Bridge-RW estimator follows by substituting in the derivation of the weights $\omega_{\blackdiamond 1^*}^{Z}(Z,A_z,A_y)$ as found in (\ref{eq:Weight_Z_Bridge_RW}).

\textbf{N-Bridge-RW}
\begin{align}
    \theta_1^*&=\textstyle \sum_{y,z,n,a_z,a_y}y\P_1(y | z,n,a_z,a_y)\P_0(z|a_z,a_y)\P_1(n|a_z,a_y)\P_1(a_z|a_y) \P_\std(a_y) \nonumber\\
    &=\textstyle \sum_{z,n,a_z,a_y}\mu_1(z,n,a_z,a_y)\P_\diamond(z|n,a_z,a_y)\P_\diamond(n|a_z,a_y)\P_\diamond(a_z|a_y) \P_\diamond(a_y)\nonumber \\ &~~~~~~~~~~~~~~~~~~~~~~\times \frac{\P_0(z|a_z,a_y)\P_1(n|a_z,a_y)\P_1(a_z|a_y)}{\P_\diamond(z|n,a_z,a_y)\P_\diamond(n|a_z,a_y)\P_\diamond(a_z|a_y)}\times\frac{\P_\std(a_y)}{\P_\diamond(a_y)}\nonumber\\
    &=\textstyle \sum_{z,n,a_z,a_y}\mu_1(z,n,a_z,a_y)\P_\diamond(z,n|a_z,a_y)\P_\diamond(a_z|a_y) \P_\diamond(a_y) \nonumber \\ &~~~~~~~~~~~~~~~~~~~~~~\times \frac{\P_0(z|a_z,a_y)\P_1(n|a_z,a_y)\P_1(a_z|a_y)}{\P_\diamond(z,n,a_z|a_y)}\times\frac{\P_\std(a_y)}{\P_\diamond(a_y)}\nonumber\\
    &=\E_1\Biggl[\omega_{\diamond1^*}^{(N,A_z)}\omega_{\diamond T}^{A_y} \mu_1(Z,N,A_z,A_y) \Bigg] \nonumber
\end{align}
where 
\begin{align}
\omega_{\diamond1^*}^{(N,A_z)}(N,A_z,A_y) &\coloneq \frac{\P_1^*(Z,N,A_z|A_y)}{\P_\diamond(Z,N,A_z|A_y)}\nonumber \\&=\frac{\P_0(Z|A_z,A_y)\P_1(N|A_z,A_y)\P_1(A_z|A_y)}{\P_\diamond(Z,N,A_z|A_y)} \nonumber \\&=\frac{\P_1(N|A_z,A_y)\P_1(A_z|A_y)}{\P_\diamond(N,A_z|A_y)},
\end{align}
and $\omega_{\diamond T}^{A_y}(A_y):=\frac{\P_\std(A_y)}{\P_\diamond(A_y)}$ is proportional to $\omega_{0T}^{A_y}(A_y)$ which is defined above in Section \ref{subsubsec:Common_Definitions}. The N-Bridge-RW estimator follows by substituting in the derivation of the weights $\omega_{\diamond 1^*}^{(N,A_z)}(N,A_z,A_y)$ as found in (\ref{eq:Weight_N_Bridge_RW}).

\newpage

\subsubsection{Proof of Sequential Regression Expressions}

\textbf{Proof of Z-Model-SR}
\begin{align}
    \theta_1^*&=\textstyle \sum_{y,z,n,a_z,a_y}y\P_1(y | z,n,a_z,a_y)\P_0(z|a_z,a_y)\P_1(n|a_z,a_y)\P_1(a_z|a_y) \P_\std(a_y) \nonumber\\
    &=\textstyle \sum_{a_y}\underbrace{\textstyle\sum_{z,n,a_z}\mu_1(z,n,a_z,a_y)P_0(z|a_z,a_y)\P_1(n|a_z,a_y)\P_1(a_z|a_y)}_{\eta_1^*(a_y)}\P_\std(a_y)\nonumber \\
    &=\textstyle \sum_{a_y}\underbrace{\E_1^*\bigl[\mu_1(Z,N,A_z,a_y)|a_y\bigr]}_{\eta_1^*(a_y)}\P_\std(a_y)\nonumber \\
    &=\E_\std \bigl[\underbrace{\E_1^*[\mu_1(Z,N,A_z,A_y)|A_y]}_{\eta_1^*(A_y)}\bigr] \tag{\ref{eq:SeqExp_Z_Model_SR}}
\end{align}

\textbf{Proof of N-Model-SR}
\begin{align}
    \theta_1^*&=\textstyle \sum_{y,z,n,a_z,a_y}y\P_1(y | z,n,a_z,a_y)\P_0(z|a_z,a_y)\P_1(n|a_z,a_y)\P_1(a_z|a_y) \P_\std(a_y)\nonumber \\
    &=\textstyle \sum_{a_y} \underbrace{\textstyle \sum_{a_z}\underbrace{\textstyle\sum_{z,n}\mu_1(z,n,a_z,a_y)\P_0(z|a_z,a_y)\P_1(n|a_z,a_y)}_{\kappa_1^*(a_z,a_y)}\P_1(a_z|a_y)}_{\eta_1^*(a_y)}\P_\std(a_y)\nonumber \\
    &=\textstyle \sum_{a_y}\underbrace{\textstyle \sum_{a_z}\underbrace{\E_1^*\bigl[\mu_1( Z,N,a_z,a_y)|a_z,a_y\bigr]}_{\kappa_1^*(a_z,a_y)}\P_1(a_z|a_y)}_{\eta_1^*(a_y)}\P_\std(a_y)\nonumber \\
    &=\textstyle \sum_{a_y}\underbrace{\E_1\Bigl[\underbrace{\E_1^*\bigl[\mu_1( Z,N,A_z,a_y)|A_z,a_y\bigr]}_{\kappa_1^*(A_z,a_y)}\Big|a_y\Bigr]}_{\eta_1^*(a_y)}\P_\std(a_y)\nonumber \\
    &=\E_\std\biggl[\underbrace{\E_1\Bigl[\underbrace{\E_1^*\bigl[\mu_1(Z,N,A_z,A_y)|A_z,A_y\bigr]}_{\kappa_1^*(A_z,A_y)}\Big|A_y\Bigr]}_{\eta_1^*(A_y)}\biggr] \tag{\ref{eq:SeqExp_N_Model_SR}}
\end{align}

\textbf{Proof of Z-Bridge-SR}
\begin{align}
    \theta_1^*&=\textstyle \sum_{y,z,n,a_z,a_y}y\P_1(y | z,n,a_z,a_y)\P_0(z|a_z,a_y)\P_1(n|a_z,a_y)\P_1(a_z|a_y) \P_\std(a_y)  \nonumber\\
    &=\textstyle \sum_{a_y}\underbrace{\textstyle \sum_{a_z}\underbrace{\textstyle \sum_{z}\underbrace{ \textstyle \sum_{n}\mu_1( z,n,a_z,a_y)\P_1(n|a_z,a_y)}_{\zeta_1^*(z,a_z,a_y)}\P_0(z|a_z,a_y)}_{\kappa_1^*(a_z,a_y)}\P_1(a_z|a_y)}_{\eta_1^*(a_y)}\P_\std(a_y) \nonumber\\
    &=\textstyle \sum_{a_y}\underbrace{\textstyle \sum_{a_z}\underbrace{\textstyle \sum_{z}\underbrace{\E_\blackdiamond\bigl[\mu_1( z,N,a_z,a_y)|z,a_z,a_y\bigr]}_{\zeta_1^*(z,a_z,a_y)}\P_0(z|a_z,a_y)}_{\kappa_1^*(a_z,a_y)}\P_1(a_z|a_y)}_{\eta_1^*(a_y)}\P_\std(a_y) \nonumber\\
    &=\textstyle \sum_{a_y}\underbrace{\textstyle \sum_{a_z}\underbrace{\E_0\Bigl[\underbrace{\E_\blackdiamond\bigl[\mu_1( Z,N,A_z,A_y)|Z,a_z,a_y\bigr]}_{\zeta_1^*(Z,a_z,a_y)}\Big|a_z,a_y\Bigr]}_{\kappa_1^*(a_z,a_y)}\P_1(a_z|a_y)}_{\eta_1^*(a_y)}\P_\std(a_y) \nonumber\\
    &=\textstyle \sum_{a_y}\underbrace{\E_1\biggl[\underbrace{\E_0\Bigl[\underbrace{\E_\blackdiamond\bigl[\mu_1( Z,N,A_z,a_y)|Z,A_z,a_y\bigr]}_{\zeta_1^*(Z,A_z,a_y)}\Big|A_z,a_y\Bigr]}_{\kappa_1^*(A_z,a_y)}\biggr|a_y\biggr]}_{\eta_1^*(a_y)}\P_\std(a_y) \nonumber \\
    &=\E_\std\Biggl[\underbrace{\E_1\biggl[\underbrace{\E_0\Bigl[\underbrace{\E_\blackdiamond\bigl[\mu_1( Z,N,A_z,A_y)|Z,A_z,A_y\bigr]}_{\zeta_1^*(Z,A_z,A_y)}\Big|A_z,A_y\Bigr]}_{\kappa_1^*(A_z,A_y)}\biggr|A_y\biggr]}_{\eta_1^*(A_y)}\Biggr] \tag{\ref{eq:Seq_Exp_Z_Bridge_SR}}
\end{align}

The third equality follows by definition of the artificial group $G=\blackdiamond$ in Definition \ref{def:Group_Z_Bridge}.

\textbf{Proof of N-Bridge-SR}
\begin{align}
    \theta_1^*&=\textstyle \sum_{y,z,n,a_z,a_y}y\P_1(y | z,n,a_z,a_y)\P_0(z|a_z,a_y)\P_1(n|a_z,a_y)\P_1(a_z|a_y) \P_\std(a_y) \nonumber\\
    &=\textstyle \sum_{a_y}\underbrace{\textstyle \sum_{n,a_z} \underbrace{\textstyle \sum_{z}\mu_1(z,n,a_z,a_y)\P_0(z|a_z,a_y)}_{\nu_1^*(n,a_z,a_y)}\P_1(n|a_z,a_y)\P_1(a_z|a_y)}_{\eta_1^*(a_y)}\P_\std(a_y) \nonumber\\
    &=\textstyle \sum_{a_y}\underbrace{\textstyle \sum_{n,a_z}\textstyle \underbrace{\E_\diamond\bigl[\mu_1(Z,n,a_z,a_y)|n,a_z,a_y\bigr]}_{\nu_1^*(n,a_z,a_y)}\P_1(n|a_z,a_y)\P_1(a_z|a_y)}_{\eta_1^*(a_y)}\P_\std(a_y) \nonumber\\
    &=\textstyle \sum_{a_y}\underbrace{\E_1\Bigl[\underbrace{\E_\diamond\bigl[\mu_1(Z,N,A_z,a_y)|N,A_z,a_y\bigr]}_{\nu_1^*(N,A_z,a_y)}\Big|a_y\Bigr]}_{\eta_1^*(a_y)}\P_\std(a_y) \nonumber\\
    &=\E_\std\biggr[\underbrace{\E_1\Bigl[\underbrace{\E_\diamond\bigl[\mu_1(Y | Z,N,A_z,A_y)|N,A_z,A_y\bigr]}_{\nu_1^*(N,A_z,A_y)}\Big|A_y\Bigr]}_{\eta_1^*(A_y)}\biggl] \tag{\ref{eq:Seq_Exp_N_Bridge_SR}}
\end{align}

The third equality follows by definition of the artificial group $G=\diamond$ in Definition \ref{def:Group_N_Bridge}.

\newpage

\subsubsection{Influence functions for $\theta_g$ and $\theta_1^*$}  \label{sec:if-1*}

The form of the influence functions for these parameters depends on how the standard $A_y$ distribution is defined. Relevant to the context of this paper is the case where it is the $A_y$ distribution of a subpopulation marked with $T=1$ where $T$ is coded using a known binary deterministic function $t(G,A_y)$. This allows options such as $T=G$ (standardizing to the $A_y$ distribution of the disadvantaged group), or if age is part of $A_y$, $T=G\cdot(\text{age}>26)$ (standardizing to the $A_y$ distribution of those in the disadvantaged group above age 26). 

We first state the theorem below which gives the influence functions for this $T=t(G,A_y)$ case of interest, assuming we observe $n$ iid copies of $O=\{G,A_y,A_z,N,Z,Y\}$. Then we comment on two other cases where the influence functions are slightly different.

\begin{theorem}[$T=t(G,A_y)$ case influence functions]\label{thm:if}
In this case, the influence functions of $\theta_g$ and $\theta_1^*$ are given as:
\begin{align*}
    \varphi_{\theta_g}(O)
    &=\frac{\I(G=g)}{\P(G=g)}\frac{\P_\std(A_y)}{\P_g(A_y)}[Y-\eta_g(A_y)]+
    \\
    &~~~~\frac{T}{\P(T=1)}[\eta_g(A_y)-\theta_g],\tag{\ref{eq:if-theta.g}}
    \\
    \varphi_{\theta_1^*}(O)
    &=\frac{G}{\P(G=1)}\frac{\P_\std(A_y)}{\P_1(A_y)}\frac{\P_0(Z\mid A_z,A_y)\P_1(N\mid A_z,A_y)}{\P_1(Z,N\mid A_z,A_y)}[Y-\mu_1(Z,N,A_z,A_y)]+
    \\
    &~~~~\frac{1-G}{\P(G=0)}\frac{\P_\std(A_y)}{\P_0(A_y)}\frac{\P_1(A_z\mid A_y)}{\P_0(A_z\mid A_y)}[\zeta_1^*(Z,A_z,A_y)-\kappa_1^*(A_z,A_y)]+
    \\
    &~~~~\frac{G}{\P(G=1)}\frac{\P_\std(A_y)}{\P_1(A_y)}[\nu_1^*(N,A_z,A_y)-\eta_1^*(A_y)]+
    \\
    &~~~~\frac{T}{\P(T=1)}[\eta_1^*(A_y)-\theta_1^*],\tag{\ref{eq:if-theta.1*}}
\end{align*}
where
\begin{align*}
    \eta_g(A_y)
    &\coloneq \E_g[Y\mid A_y],
    \\
    \mu_1(Z,N,A_z,A_y) 
    &\coloneq \E_1[Y\mid Z,N,A_z,A_y],
    \\
    \zeta_1^*(Z,A_z,A_y) 
    &\coloneq \E_1^*[Y\mid Z,A_z,A_y]=\int\mu_1(Z,n,A_z,A_y)\P_1(n\mid A_z,A_y)dn,
    \\
    \nu_1^*(N,A_z,A_y) 
    &\coloneq \E_1^*[Y\mid N,A_z,A_y]=\int\mu_1(z,N,A_z,A_y)\P_0(z\mid A_z,A_y)dz,
    \\
    \kappa_1^*(A_z,A_y) 
    &\coloneq \E_1^*[Y\mid A_z,A_y]=\int\mu_1(z,n,A_z,A_y)\P_0(z\mid A_z,A_y)\P_1(n\mid A_z,A_y)dzdn,
    \\
    \eta_1^*(A_y) 
    &\coloneq \E_1^*[Y\mid A_y]=\int\mu_1(z,n,a_z,A_y)\P_0(z\mid a_z,A_y)\P_1(n,a_z\mid A_y)dzdnda_z.
\end{align*}

\end{theorem}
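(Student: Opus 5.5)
We derive both influence functions with the pathwise-derivative (Gateaux) approach. Take a regular parametric submodel $\P_\epsilon$ through the true law, with score $S(O)$. Write the observed-data likelihood in the sequential factorization
\[
p(O)=p(G,A_y)\,p(A_z\mid G,A_y)\,p(N\mid A_z,G,A_y)\,p(Z\mid N,A_z,G,A_y)\,p(Y\mid Z,N,A_z,G,A_y).
\]
The score then splits into mean-zero pieces $S_{G,A_y}$, $S_{A_z}$, $S_N$, $S_Z$, $S_Y$. We will compute $\partial_\epsilon\theta(\P_\epsilon)|_{\epsilon=0}$ and exhibit a mean-zero $\varphi$ with $\partial_\epsilon\theta=\E[\varphi S]$. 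Because $T=t(G,A_y)$ is a known deterministic function, $\P_\std(A_y)=\P(A_y\mid T=1)$ depends only on the $(G,A_y)$ factor. This is the source of the final terms $\frac{T}{\P(T=1)}[\eta(A_y)-\theta]$: differentiate $\theta=\E[T\eta(A_y)]/\P(T=1)$ with $\eta$ held fixed.

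\textbf{The case $\theta_g$.} Here $\theta_g=\sum_{a_y}\E_g[Y\mid a_y]\P_\std(a_y)$. Perturbing $\P_\std$ gives the $T$-term. Perturbing $\E_g[Y\mid A_y]$, which involves all conditionals of $(A_z,N,Z,Y)$ given $(G=g,A_y)$, gives
\[
\E\!\left[\frac{\I(G=g)}{\P(G=g\mid A_y)}\P_\std(A_y)\,\frac{1}{\P(A_y)}[Y-\eta_g(A_y)]\,S\right].
\]
This is the standard conditional-mean derivative. Rewriting $\P_\std(A_y)/[\P(G=g\mid A_y)\P(A_y)]=\omega_{gT}^{A_y}/\P(G=g)$ yields (\ref{eq:if-theta.g}).

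\textbf{The case $\theta_1^*$.} Start from the identification formula
\[
\theta_1^*=\int\mu_1\,\P_0(z\mid a_z,a_y)\,\P_1(n\mid a_z,a_y)\,\P_1(a_z\mid a_y)\,\P_\std(a_y).
\]
Apply the product rule, perturbing one factor at a time.
\begin{enumerate}
\item Perturbing $\mu_1$ gives the derivative $\E_1[(Y-\mu_1)S_Y\mid Z,N,A_z,A_y]$. Reweighting from the $G=1$ observed law of $(Z,N,A_z,A_y)$ to the target law $\P_0(z\mid\cdot)\P_1(n\mid\cdot)\P_1(a_z\mid\cdot)\P_\std$ produces term I, with weight $\omega_{11^*}^{(Z,N)}\omega_{1T}^{A_y}$.
\item Perturbing $\P_0(z\mid a_z,a_y)$ differentiates a conditional density of $Z$ given $(A_z,A_y,G=0)$. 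Note that it is \emph{not} conditional on $N$. The integrand, after integrating out $n$ against $\P_1(n\mid a_z,a_y)$, is $\zeta_1^*(z,a_z,a_y)$, so this piece contributes
\[
\frac{1-G}{\P(G=0)}\,\omega_{0T}^{A_y}\omega_{01}^{A_z}\bigl[\zeta_1^*-\E_0(\zeta_1^*\mid A_z,A_y)\bigr].
\]
Since $\E_0[\zeta_1^*\mid A_z,A_y]=\kappa_1^*$, this is term II.
\item Perturbing $\P_1(n\mid a_z,a_y)$ and $\P_1(a_z\mid a_y)$ together is the perturbation of $\P_1(n,a_z\mid a_y)$. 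The integrand after integrating $z$ is $\nu_1^*$, which gives $\frac{G}{\P(G=1)}\omega_{1T}^{A_y}[\nu_1^*-\eta_1^*]$, i.e.\ term III.
\item Perturbing $\P_\std$ gives term IV, exactly as in the $\theta_g$ case.
\end{enumerate}

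Each piece is a centered function times the appropriate conditional score. Because of the nested centering, it is orthogonal to the other score components; for example, $\zeta_1^*-\kappa_1^*$ has $\E_0$-mean zero given $(A_z,A_y)$. Hence $\E[\varphi S]$ recovers the full derivative. Summing the four terms gives (\ref{eq:if-theta.1*}). We finish by checking that $\E[\varphi]=0$ and, since the model is nonparametric, that $\varphi$ is the unique influence function.

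\textbf{Main obstacle.} Step 2 needs the most care. The factor $\P_0(Z\mid A_z,A_y)$ marginalizes over $N$ within $G=0$, so its score is $\E[S_N+S_Z\mid Z,A_z,A_y,G=0]$ minus its conditional mean, not $S_Z$ alone. We must verify that multiplying $\zeta_1^*-\kappa_1^*$ by the full score $S$ and taking expectations over the $G=0$ subpopulation isolates exactly this derivative; contributions from $S_Y$ and from the $(G,A_y)$ and $A_z$ score components must vanish by iterated expectations. The change of measure must also produce precisely the weight $\omega_{0T}^{A_y}\omega_{01}^{A_z}$ with the $1/\P(G=0)$ normalization. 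A similar bookkeeping issue arises in Step 3, where two $G=1$ factors are combined. We would handle both steps by writing out the discrete-sum version, as in the paper's other supplementary proofs, and differentiating it directly.
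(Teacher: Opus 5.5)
Your proposal is correct. It uses the same pathwise-derivative method as the paper: one influence-function term per factor of the identifying formula, each obtained by centering plus the change of measure of Lemma \ref{lm}.

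The real difference is the likelihood factorization. The paper orders the factors differently by group: $(A_z,N)$ is lumped for $G=1$, and $Z$ is placed before $(N,Y)$ for $G=0$. With that choice, $\P_0(Z\mid A_z,A_y)$ and $\P_1(N,A_z\mid A_y)$ are themselves likelihood factors, and each term of $\varphi_{\theta_1^*}$ lies in its own orthogonal tangent subspace ($\mathcal{H}_1,\mathcal{H}_2,\mathcal{H}_4,\mathcal{H}_6$). Your ``main obstacle'' therefore never arises in the paper's proof.

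Under your single sequential ordering, the obstacle costs only one iterated-expectation step. The score of the $(Z\mid A_z,A_y,G=0)$ margin is $\E[S_N+S_Z\mid Z,A_z,A_y,G=0]$. So for $h=\zeta_1^*-\kappa_1^*$, the derivative of $\int\zeta_1^*(z,A_z,A_y)\,\P_{0,\epsilon}(z\mid A_z,A_y)\,dz$ equals $\E_0[h(S_N+S_Z)\mid A_z,A_y]=\E_0[hS\mid A_z,A_y]$. In that last equality, the $S_{G,A_y}$ and $S_{A_z}$ contributions vanish by centering, and the $S_Y$ contribution vanishes by conditioning on $(Z,N,A_z,A_y,G)$.

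This also means term II is not orthogonal to your $S_N$. Your summary sentence about orthogonality to ``the other score components'' should be read as orthogonality to all components except those generating the perturbed margin.

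Your route keeps one estimand-agnostic factorization that would transfer unchanged to other interventional functionals. The paper's tailored factorization buys cleaner bookkeeping with no marginalization step.
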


\bigskip

Before proving the theorem, we note two other cases. In the first case, $\P_\std(A_y)$ is a known distribution. Here we do not need a $T$ indicator to mark a population whose $A_y$ distribution is used as the standard distribution, and there is no uncertainty due to learning this distribution from data. In this case, the two influence functions remain mostly the same, except the last term involving $T$ is dropped.
In the second case, $\P_\std(A_y)$ is equated to the $A_y$ distribution of a separate population, represented by an external sample. Now let $T=1$ denote membership in the external sample and $T=0$ membership in the main sample, and expand $O$ to cover both, specifically, $O=\Big\{T,G(T=0),A_y,A_z(T=0),N(T=0),Z(T=0),Y(T=0)\Big\}$. In this case, the two influence functions look mostly unchanged, except the factors $\displaystyle\frac{\I(G=g)}{\P(G=g)}$ (for $g=0,1$) turn into $\displaystyle\frac{\I(G=g,T=0)}{\P(G=g,T=0)}$, and the $\P_g$ and $\E_g$ notation now indicates conditioning on $(G=g,T=0)$. The proofs for these two cases are simple adaptations of the proof of Theorem \ref{thm:if}, thus left out.

\bigskip

The proof of Theorem \ref{thm:if} makes repeated use of the following identity, which we state as a lemma.

\begin{lemma}\label{lm}
$$\textup{E}[A\mid B,D=d]=\textup{E}\left[\frac{\textup{I}(D=d)}{\textup{P}(D=d\mid B)}A\mid B\right]=\textup{E}\left[\frac{\textup{I}(D=d)}{\textup{P}(D=d)}\frac{\textup{P}(B)}{\textup{P}(B\mid D=d)}A\mid B\right].\label{eq:lemma}$$
\end{lemma}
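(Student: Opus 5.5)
The plan is to prove the two equalities of Lemma~\ref{lm} directly from the definition of conditional expectation. No result from earlier in the paper is needed. I would treat the discrete case with sums and note that the continuous case follows verbatim by replacing sums with integrals against the dominating measure. Throughout I assume positivity, $\P(D=d\mid B)>0$ on the relevant support of $B$, so that all ratios are well defined. This is exactly what the overlap Assumptions~\ref{def:Assumption_Ay_Overlap} and \ref{def:Assumption_Az_Overlap} deliver when the lemma is later applied with $D=G$ and $B\in\{A_y,(A_z,A_y),\dots\}$.

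\emph{First equality.} Condition on $B=b$ and expand the right-hand side over the joint law of $(A,D)$ given $B=b$:
\begin{align*}
\E\Big[\tfrac{\I(D=d)}{\P(D=d\mid b)}A\,\Big|\,b\Big]
=\textstyle\sum_{a,d'}\tfrac{\I(d'=d)}{\P(D=d\mid b)}\,a\,\P(a\mid d',b)\P(d'\mid b)
=\textstyle\sum_a a\,\P(a\mid D=d,b),
\end{align*}
which is $\E[A\mid B=b,D=d]$. Equivalently, I would pull the $B$-measurable factor $1/\P(D=d\mid B)$ outside the conditional expectation and then use $\E[\I(D=d)A\mid B]=\P(D=d\mid B)\,\E[A\mid B,D=d]$, which is the tower property applied after conditioning additionally on $D$.

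\emph{Second equality.} By Bayes' rule,
\[
\P(D=d\mid B)=\P(B\mid D=d)\,\P(D=d)/\P(B),
\]
so
\[
\frac{1}{\P(D=d\mid B)}=\frac{1}{\P(D=d)}\,\frac{\P(B)}{\P(B\mid D=d)}.
\]
The weights in the second and third expressions are therefore identical as functions of $(D,B)$, and the equality is immediate. Here $\P(B)$ and $\P(B\mid D=d)$ are read as probability mass functions or densities evaluated at the realized $B$.

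The only step requiring care is this bookkeeping. First, the ratio $\P(B)/\P(B\mid D=d)$ must be understood as a density ratio, that is, a Radon--Nikodym derivative when $B$ is continuous. Second, the support restriction from positivity must be recorded, so that the conditional expectation $\E[A\mid B,D=d]$ is defined wherever the left side is evaluated. I expect no substantive obstacle beyond stating this domain condition cleanly, since it is what licenses the lemma's repeated use in the proof of Theorem~\ref{thm:if}.
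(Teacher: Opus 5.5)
Your proposal is correct and takes essentially the same route as the paper. The paper proves the first equality by the tower property: it conditions additionally on $D$, pulls out $\textup{E}[A\mid B,D=d]$, and uses $\textup{E}[\textup{I}(D=d)/\textup{P}(D=d\mid B)\mid B]=1$, which is your ``equivalently'' argument; it proves the second by the same Bayes-rule refactoring of the weight, and your explicit positivity and support caveat simply states a condition the paper leaves implicit.
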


The proof of this lemma is simple so we re-create it here. It has appeared previously in Appendix A of \cite{nguyen2024sensitivity}.

\begin{proof}[Proof of Lemma \ref{lm}]
    The RHS is a simple re-expression of the expectation in the middle by multiplying and dividing the integrand by $\P(B)$ and re-factoring the denominator.

    To obtain the first equal sign, we start with the expression in the middle:
    \begin{align*}
        \E\left[\frac{\I(D=d)}{\P(D=d\mid B)}A\mid B\right]
        &=\E\left\{\E\left[\frac{\I(D=d)}{\P(D=d\mid B)}A\mid B,D\right]\mid B\right\}
        \\
        &=\E\left\{\frac{\I(D=d)}{\P(D=d\mid B)}\E[A\mid B,D=d]\mid B\right\}
        \\
        &=\E\left[\frac{\I(D=d)}{\P(D=d\mid B)}\mid B\right]\E[A\mid B,D=d]
        \\
        &=\E[A\mid B,D=d]=\mathrm{LHS}.
    \end{align*}
\end{proof}

Now we are ready to derive the two influence functions in Theorem \ref{thm:if}. 

\begin{proof}[\textbf{Proof of Theorem \ref{thm:if}, part 1: the IF of }$\theta_g$]\hfill

    For this part of the proof only, let $O=\{G,A_y,Y\}$ and ignore $A_z,N,Z$. Assume we have $n$ iid copies of $O$. 

    Due to symmetry, we can derive the IF for $\theta_1$ and extrapolate to $\theta_0$.
    
    Our estimand is:
    \begin{align*}
        \theta_1
        &=\E_\text{std}\{\underbrace{\E[Y\mid A_y,G=1]}_{\eta_1(A_y)}\}
        \\
        &=\E[\eta_1(A_y)\mid T=1]
        =\E\left[\frac{T}{\P(T=1)}\eta_1(A_y)\right]
        =\frac{\E\{t(G,A_y)\E[Y\mid A_y,G=1]\}}{\E[t(G,A_y)]}.
    \end{align*}

    Consider the following factorization of the joint density of $O$:
    \begin{align*}
        \P(O)={\color{magenta}\P(G,A_y)}{\color{blue}\P(Y\mid G=1,A_y)^G}\P(Y\mid G=0,A_y)^{1-G}.
    \end{align*}
    The observed-data Hilbert space (ie the space of mean-zero finite variance 1-dimensional functions of $O$ equipped with the covariance inner product) is the direct sum of three subspaces:
    $$\mathcal{H}={\color{magenta}\mathcal{H}_1}\oplus{\color{blue}\mathcal{H}_2}\oplus\mathcal{H}_3,$$
    where 
    \begin{align*}
        {\color{magenta}\mathcal{H}_1}&={\color{magenta}\{m(G,A):\E[m(G,A)]=0\}},
        \\
        {\color{blue}\mathcal{H}_2}&={\color{blue}\{Gm(Y,A_y):\E[m(Y,A_y)\mid G=1,A_y]=0\}},
        \\
        \mathcal{H}_3&=\{(1-G)m(Y,A_y):\E[m(Y,A_y)\mid G=0,A_y]=0\}.
    \end{align*}
    Consider a parametric submodel (of the nonparametric model) of $O$ based on the factorization above:
    \begin{align*}
        f(O,\beta)={\color{magenta}f_1(G,A_y,\beta_1)}{\color{blue}f_2(A_y,Y,\beta_2)^G}f_3(A_y,Y,\beta_3)^{1-G}.
    \end{align*}
    We want to write the implied model $\theta_1(\beta)$ for $\theta_1$, and find the function $\varphi_{\theta_1}(O)\in\mathcal{H}$ whose covariance with the score of the parametric submodel is equal to the derivative of $\theta_1(\beta)$ wrt. $\beta$, evaluated at the truth (denoted $\beta^0$). This function is the IF for $\theta_1$.

    Based on the above expression of $\theta_1$, we have
    $$\theta_1(\beta)=\frac{\int\!\!\int t(g,a_y)\int y{\color{blue}f_2(y,a_y,\beta_2)}dy\,{\color{magenta}f_1(g,a_y,\beta_1)}da_y\,dg}{\int\!\!\int t(g,a_y){\color{magenta}f_1(g,a_y,\beta_1)}da_y\,dg},$$
    which involves $\beta_1$ and $\beta_2$ (but not $\beta_3$), so the IF for $\theta_1$ is the sum of a term in $\mathcal{H}_1$ and a term in $\mathcal{H}_2$. We call these two terms $\varphi_{\theta_1,1}(O)$ and $\varphi_{\theta_1,2}(O)$, respectively.
    To derive these terms, we will make use of
    \begin{align*}
        \theta_1(\beta_1)
        &=\frac{\int\!\!\int t(g,a_y)\eta_1(a_y){\color{magenta}f_1(g,a_y,\beta_1)}da_y\,d_g}{\int\!\!\int t(g,a_y){\color{magenta}f_1(g,a_y,\beta_1)}da_y\,d_g},
        \\
        \theta_1(\beta_2)
        &=\E_\std\left[\int y{\color{blue}f_2(y,a_y,\beta_2)}dy\right],
    \end{align*}
    which are obtained from $\theta_1(\beta)$ by setting $\beta_2$ and by setting $\beta_1$, respectively, to the truth. 
    
    Taking derivatives and manipulating the derivatives, we have the following.
    (Throughout, we assume that regularity conditions hold for interchanging derivatives and integrals.)
    \begin{align*}
        \frac{\partial\theta_1(\beta)}{\partial\beta_1}\Big|_{\beta=\beta^0}
        &=\frac{\partial\theta_1(\beta_1)}{\partial\beta_1}\Big|_{\beta_1=\beta_1^0}
        \\
        &=\frac{\partial}{\partial\beta_1}\frac{\int\!\!\!\int t(g,a_y)\eta_1(a_y)f_1(g,a_y,\beta_1)da_y\,d_g}{\int\!\!\!\int t(g,a_y)f_1(g,a_y,\beta_1)da_y\,d_g}\Big|_{\beta_1=\beta_1^0}
        \\
        &=\frac{\frac{\partial}{\partial\beta_1}\int\!\!\!\int t(g,a_y)\eta_1(a_y)f_1(g,a_y,\beta_1)da_y\,d_g\Big|_{\beta_1=\beta_1^0}}{\P(T=1)}-\theta_1\frac{\frac{\partial}{\partial\beta_1}\int\!\!\!\int t(g,a_y)f_1(g,a_y,\beta_1)da_y\,d_g\Big|_{\beta_1=\beta_1^0}}{\P(T=1)}
        \\
        &=\frac{\int\!\!\!\int t(g,a_y)\eta_1(a_y)\frac{\partial}{\partial\beta_1}f_1(g,a_y,\beta_1)\Big|_{\beta_1=\beta_1^0}da_y\,d_g}{\P(T=1)}-\theta_1\frac{\int\!\!\!\int t(g,a_y)\frac{\partial}{\partial\beta_1}f_1(g,a_y,\beta_1)\Big|_{\beta_1=\beta_1^0}da_y\,d_g}{\P(T=1)}
        \\
        &=\frac{\int\!\!\!\int t(g,a_y)[\eta_1(a_y)-\theta_1]\frac{\partial}{\partial\beta_1}f_1(g,a_y,\beta_1)\Big|_{\beta_1=\beta_1^0}da_y\,d_g}{\P(T=1)}
        \\
        &=\frac{\int\!\!\!\int t(g,a_y)[\eta_1(a_y)-\theta_1]S_1(g,a_y,\beta_1^0)\P(g,a_y)da_y\,d_g}{\P(T=1)}~~~~(\text{$S_1()$ is the score function w.r.t. $\beta_1$})
        \\
        &=\frac{\E\{S_1(G,A_y,\beta_1^0)T[\eta_1(A_y)-\theta_1]\}}{\P(T=1)}
        \\
        &=\E\Big\{S_1(G,A_y,\beta_1^0)\underbrace{\frac{T}{\P(T=1)}[\eta_1(A_y)-\theta_1]}_{\textstyle\varphi_{\theta_1,1}(O)\in\mathcal{H}_1}\Big\}.
    \end{align*}
    (Note that $\frac{T}{\P(T=1)}[\eta_1(A_y)-\theta_1]$ belongs in $\mathcal{H}_1$ because it is a function of $G,A_y$, and its expectation is zero because $\theta_1=\E[\eta_1(A_y)\mid T=1]$.)
    
    \begin{align*}
        \frac{\partial\theta_1(\beta)}{\partial\beta_2}\Big|_{\beta=\beta^0}
        &=\frac{\partial\theta_1(\beta_2)}{\partial\beta_2}\Big|_{\beta_2=\beta_2^0}
        \\
        &=\E_\text{std}\left[\int y\frac{\partial f_2(y,A_y,\beta_2)}{\partial\beta_2}\Big|_{\beta_2=\beta_2^0}dy\right]
        \\
        &=\E_\text{std}\left[\int y S_2(y,A_y,\beta_2^0)\P(y\mid A_y,G=1)dy\right]~~~~(\text{$S_2()$ is the score function w.r.t. $\beta_2$})
        \\
        &=\E_\text{std}\left\{\E[S_2(Y,A_y,\beta_2^0)Y\mid A_y,G=1]\right\}
        \\
        &=\E_\text{std}\Big(\E\Big\{S_2(Y,A_y,\beta_2^0)[Y-\eta_1(A_y)]\mid A_y,G=1\Big\}\Big)~~~~(\text{because}~\E[S_2()\mid A_y,G=1]=0)
        \\
        &=\E_\text{std}\left(\E\left\{S_2(Y,A_y,\beta_2^0)\frac{G}{\P(G=1)}\frac{\P(A_y)}{\P(A_y\mid G=1)}[Y-\eta_1(A_y)]\mid A_y\right\}\right)~~~~(\text{by Lemma \ref{lm}})
        \\
        &=\E\left(\frac{\P_\text{std}(A_y)}{\P(A_y)}\E\left\{S_2(Y,A_y,\beta_2^0)\frac{G}{\P(G=1)}\frac{\P(A_y)}{\P(A_y\mid G=1)}[Y-\eta_1(A_y)]\mid A_y\right\}\right)
        \\
        &=\E\bigg\{S_2(Y,A_y,\beta_2^0)\underbrace{\frac{G}{\P(G=1)}\frac{\P_\text{std}(A_y)}{\P(A_y\mid G=1)}[Y-\eta_1(A_y)]}_{\textstyle\varphi_{\theta_1,2}(O)\in\mathcal{H}_2}\bigg\}.
    \end{align*}

    Putting the two terms together,
    \begin{align*}
        \varphi_{\theta_1}(O)=
        \underbrace{\frac{G}{\P(G=1)}\frac{\P_\std(A_y)}{\P(A_y\mid G=1)}[Y-\eta_1(A_y)]}_{\textstyle\varphi_{\theta_1,2}(O)}
        +
        \underbrace{\frac{T}{\P(T=1)}[\eta_1(A_y)-\theta_1]}_{\textstyle\varphi_{\theta_1,1}(O)},
    \end{align*}
    and leveraging symmetry, we have
    \begin{align*}
        \varphi_{\theta_g}(O)=\frac{\I(G=g)}{\P(G=g)}\frac{\P_\std(A_y)}{\P(A_y\mid G=g)}[Y-\eta_g(A_y)]+\frac{T}{\P(T=1)}[\eta_g(A_y)-\theta_g].\tag{\ref{eq:if-theta.g}}
    \end{align*}
    
\end{proof}

\begin{proof}[\textbf{Proof of Theorem \ref{thm:if}, part 2: the IF of }$\theta_1^*$]\hfill
    
    We are back to considering the full observed data vector $O=\{G,A_y,A_z,N,Z,Y\}$, with $n$ iid copies of $O$.

    Our estimand is 
    \begin{align*}
        \theta_1^*
        &=\E_\text{std}\bigg[\int\!\!\!\int\!\!\!\int\E[Y\mid z,n,a_z,A_y,G=1]\P(z\mid a_z,A_y,G=0)\P(n,a_z\mid A_y,G=1)dz\,dn\,da_z\bigg]
        \\
        &=\E\bigg[\int\!\!\!\int\!\!\!\int\E[Y\mid z,n,a_z,A_y,G=1]\P(z\mid a_z,A_y,G=0)\P(n,a_z\mid A_y,G=1)dz\,dn\,da_z \mid T=1\bigg]
        \\
        &=\frac{\E\left\{t(G,A_y)\int\!\!\!\int\!\!\!\int\E[Y\mid z,n,a_z,A_y,G=1]\P(z\mid a_z,A_y,G=0)\P(n,a_z\mid A_y,G=1)dz\,dn\,da_z\right\}}{\E[t(G,A_y)]}.
    \end{align*}
    
    Consider the following factorization of the joint density of $O$:
    \begin{align*}
        \P(O)=~
        &{\color{magenta}\P(G,A_y)}\times
        \\
        &\big[{\color{teal}\P(A_z,N\mid G=1,A_y)}\P(Z\mid G=1,A_y,A_z,N){\color{blue}\P(Y\mid G=1,A_y,A_z,N,Z)}\big]^G\times
        \\
        &\big[\P(A_z\mid G=0,A_y){\color{red}\P(Z\mid G=0,A_y,A_z)}\P(N,Y\mid G=0,A_y,A_z,Z)\big]^{1-G}.
    \end{align*}
    Note the reverse order of $Z,N$ and $N,Z$ used in the factorization for the two groups, and also the lumping of $(A_z,N)$ for the $G=1$ group. This is deliberate and it is tailored to the estimand.
    
    The observed-data Hilbert space (i.e., the space of mean-zero finite-variance 1-dimensional functions of $O$ equipped with the covariance inner product) is thus the direct sum of the following subspaces:
    \begin{align*}
        \mathcal{H}={\color{magenta}\mathcal{H}_1}\oplus{\color{teal}\mathcal{H}_2}\oplus\mathcal{H}_3\oplus{\color{blue}\mathcal{H}_4}\oplus\mathcal{H}_5\oplus{\color{red}\mathcal{H}_6}\oplus\mathcal{H}_7,
    \end{align*}
    where
    \begin{align*}
        {\color{magenta}\mathcal{T}_1}&={\color{magenta}\{m(G,A_y):\E[(G,A_y)]=0\}},
        \\
        {\color{teal}\mathcal{T}_2}&={\color{teal}\{Gm(N,A_z,A_y):\E[m(N,A_z,A_y)\mid G=1,A_y]=0\}},
        \\
        \mathcal{T}_3&=\{Gm(Z,N,A_z,A_y):\E[m(Z,N,A_z,A_y)\mid G=1,A_y,A_z,N]=0\},
        \\
        {\color{blue}\mathcal{T}_4}&={\color{blue}\{Gm(Y,Z,N,A_z,A_y):\E[m(Y,Z,N,A_z,A_y)\mid G=1,A_y,A_z,N,Z]=0\}},
        \\
        \mathcal{T}_5&=\{(1-G)m(A_z,A_y):\E[m(A_z,A_y)\mid G=0,A_y]=0\},
        \\
        {\color{red}\mathcal{T}_6}&={\color{red}\{(1-G)m(Z,A_z,A_y):\E[m(N,A_z,A_y)\mid G=0,A_y,A_z]=0\}},
        \\
        \mathcal{T}_7&=\{(1-G)m(Y,Z,N,A_z,A_y):\E[m(Y,Z,N,A_z,A_y)\mid G=0,A_y,A_z,Z]=0\}.
    \end{align*}
    Consider a parametric submodel (of the nonparametric model) of $O$ based on the factorization above:
    \begin{align*}
        f(O,\beta)=~
        &{\color{magenta}f_1(G,A_y,\beta_1)}\times
        \\
        &\big[
        {\color{teal}f_2(N,A_z,A_y,\beta_2)}
        f_3(Z,N,A_z,A_y,\beta_3)
        {\color{blue}f_4(Y,Z,N,A_z,A_y,\beta_4)}
        \big]^G\times
        \\
        &\big[
        f_5(A_z,A_y,\beta_5)
        {\color{red}f_6(Z,A_z,A_y,\beta_6)}
        f_7(Y,N,Z,A_z,A_y,\beta_7)
        \big]^{1-G}.
    \end{align*}

    Based on the above expression of $\theta_1^*$, we have
    \begin{align*}
        \theta_1^*(\beta)=
        \frac{\int\!\!\!\int t(g,a_y)\left[
        \int\!\!\!\int\!\!\!\int\!\!\!\int
        y
        {\color{blue}f_4(y,z,n,a_z,a_y,\beta_4)}
        {\color{red}f_6(z,a_z,a_y,\beta_6)}
        {\color{teal}f_2(n,a_z,a_y,\beta_2)}
        dy\,dz\,dn\,da_z
        \right]{\color{magenta}f_1(g,a_y,\beta_1)}da_y,d_g}
        {\int\!\!\!\int t(g,a_y){\color{magenta}f_1(g,a_y,\beta_1)}da_y,d_g},
    \end{align*}
    which involves four elements of $\beta$. The IF for $\theta_1^*$ thus is the sum of four terms that reside in $\mathcal{H}_1$, $\mathcal{H}_2$, $\mathcal{H}_4$ and $\mathcal{H}_6$. To derive these terms, we will make use of
    \begin{align*}
        \theta_1^*(\beta_1)
        &=\frac{\int\!\!\!\int t(g,a_y)\eta_1^*(a_y){\color{magenta}f_1(g,a_y,\beta_1)}da_y\,dg}{\int\!\!\!\int t(g,a_y){\color{magenta}f_1(g,a_y,\beta_1)}da_y\,dg}
        \\
        \theta_1^*(\beta_2)
        &=\E_\std\left[\int\!\!\!\int\nu_1^*(n,a_z,A_y){\color{teal}f_2(n,a_z,A_y,\beta_2)}dn\,da_z\right]
        \\
        \theta_1^*(\beta_4)
        &=\E_\std\left\{\E^*\left[\int y{\color{blue}f_4(y,Z,N,A_z,A_y,\beta_4)}dy\mid A_y,G=1\right]\right\}
        \\
        \theta_1^*(\beta_6)
        &=\E_\std\left\{\E\left[\int\zeta_1^*(z,A_z,A_y){\color{red}f_6(z,A_z,A_y,\beta_6)}dz\mid A_y,G=1\right]\right\},
    \end{align*}
    which are obtained from $\theta_1(\beta)$ by keeping one element of $\beta$ at a time and setting the others to the truth.

    Now we manipulate derivatives to derive the four terms. (Again, we assume that regularity conditions hold for interchanging derivatives and integrals.)

    For the term in $\mathcal{H}_1$, we use similar reasoning as in the first part of the proof to obtain
    \begin{align*}
        \frac{\partial\theta_1^*(\beta)}{\partial\beta_1}\Big|_{\beta=\beta^0}=\frac{\partial\theta_1^*(\beta_1)}{\partial\beta_1}\Big|_{\beta_1=\beta_1^0}=\E\bigg\{S_1(G,A_y,\beta_1^0)\underbrace{\frac{T}{\P(T=1)}[\eta_1^*(A_y)-\theta_1^*]}_{\textstyle\varphi_{\theta_1^*,1}(O)\in\mathcal{H}_1}\bigg\}.
    \end{align*}
    The other terms are derived as follows.
    \begin{align*}
    \frac{\partial\theta_1^*(\beta)}{\partial\beta_2}\Big|_{\beta=\beta^0}
    &=\frac{\partial\theta_1^*(\beta_2)}{\partial\beta_2}\Big|_{\beta_2=\beta_2^0}
    \\
    &=\E_\text{std}\left[\int\!\!\!\int\nu_1^*(n,a_z,A_y)\frac{\partial f_2(n,a_z,A_y,\beta_2)}{\partial\beta_2}\Big|_{\beta_2=\beta_2^0}dn\,da_z\right]
    \\
    &=\E_\text{std}\left[\int\!\!\!\int\nu_1^*(n,a_z,A_y)S_2(n,a_z,A_y,\beta_2^0)\P(n,a_z\mid A_y,G=1)dn\,da_z\right]~~~~(\text{$S_2()$ is the score function w.r.t. $\beta_2$})
    \\
    &=\E_\text{std}\{\E[S_2(N,A_z,A_y,\beta_2^0)\nu_1^*(N,A_z,A_y)\mid A_y,G=1]\}
    \\
    &=\E_\text{std}\bigg(\E\bigg\{S_2(N,A_z,A_y,\beta_2^0)[\nu_1^*(N,A_z,A_y)-\eta_1^*(A_y)]\mid A_y,G=1\bigg\}\bigg)~~~~(\text{because }\E[S_2()\mid A_y,G=1]=0)
    \\
    &=\E_\text{std}\bigg(\E\bigg\{S_2(N,A_z,A_y,\beta_2^0)\frac{G}{\P(G=1)}\frac{\P(A_y)}{\P(A_y\mid G=1)}[\nu_1^*(N,A_z,A_y)-\eta_1^*(A_y)]\mid A_y\bigg\}\bigg)~~~~(\text{by Lemma \ref{lm}})
    \\
    &=\E\bigg(\frac{\P_\text{std}(A_y)}{\P(A_y)}\E\bigg\{S_2(N,A_z,A_y,\beta_2^0)\frac{G}{\P(G=1)}\frac{\P(A_y)}{\P(A_y\mid G=1)}[\nu_1^*(N,A_z,A_y)-\eta_1^*(A_y)]\mid A_y\bigg\}\bigg)
    \\
    &=\E\bigg\{S_2(N,A_z,A_y,\beta_2^0)\underbrace{\frac{G}{\P(G=1)}\frac{\P_\text{std}(A_y)}{\P(A_y\mid G=1)}[\nu_1^*(N,A_z,A_y)-\eta_1^*(A_y)]}_{\textstyle\varphi_{\theta_1^*,2}(O)\in\mathcal{H}_2}\bigg\}.
\end{align*}
\begin{align*}
    \frac{\partial\theta_1^*(\beta)}{\partial\beta_4}\Big|_{\beta=\beta^0}
    &=\frac{\partial\theta_1^*(\beta_4)}{\partial\beta_4}\Big|_{\beta_4=\beta_4^0}
    \\
    &=\E_\text{std}\left\{\E^*\left[\int y\frac{\partial f_4(y,Z,N,A_z,A_y,\beta_4)}{\partial\beta_4}\Big|_{\beta_4=\beta_4^0}dy\mid A_y,G=1\right]\right\}
    \\
    &=\E_\text{std}\left\{\E^*\left[\int y\, S_4(y,Z,N,A_z,A_y,\beta_4^0)\P(y\mid Z,N,A_z,A_y,G=1)dy\mid A_y,G=1\right]\right\}
    \\
    &~~~~~~~~~~~~~~~~~~~~~~~~~~~~~~~~~~~~~~~~~~~~~~~~~~~~~~~~~~~~~~~~~~~~~~~~~~~~~~~~~(\text{$S_4()$ is the score function w.r.t. $\beta_4$})
    \\
    &=\E_\text{std}\big(\E^*\big\{\E[S_4(Y,Z,N,A_z,A_y,\beta_4^0)Y\mid Z,N,A_z,A_y,G=1]\mid A_y,G=1\big\}\big)
    \\
    &=\E_\text{std}\bigg[\E^*\bigg(\E\bigg\{S_4(Y,Z,N,A_z,A_y,\beta_4^0)[Y-\mu_1(Z,N,A_z,A_y)]\mid Z,N,A_z,A_y,G=1\bigg\}\mid A_y,G=1\bigg)\bigg]
    \\
    &~~~~~~~~~~~~~~~~~~~~~~~~~~~~~~~~~~~~~~~~~~~~~~~~~~~~~~~~~~~~~~~~~~~~~~~~~~~~~~~~~(\text{because }\E[S_4()\mid Z,N,A_z,A_y,G=1]=0)
    \\
    &=\E_\text{std}\bigg[\E\bigg(\frac{\P(Z\mid A_z,A_y,G=0)}{\P(Z\mid N,A_z,A_y,G=1)}\E\bigg\{S_4(Y,Z,N,A_z,A_y,\beta_4^0)[Y-\mu_1(Z,N,A_z,A_y)]\mid Z,N,A_z,A_y,G=1\bigg\}
    \\
    &~~~~~~~~~~~~~~~~~~\mid A_y,G=1\bigg)\bigg]~~~(\text{density ratio weighting to swap distribution over which expectation is taken})
    \\
    &=\E_\text{std}\bigg(\E\bigg\{S_4(Y,Z,N,A_z,A_y,\beta_4^0)\frac{\P(Z\mid A_z,A_y,G=0)}{\P(Z\mid N,A_z,A_y,G=1)}[Y-\mu_1(Z,N,A_z,A_y)]\mid A_y,G=1\bigg\}\bigg)
    \\
    &=\E_\std\bigg(\E\bigg\{S_4(Y,Z,N,A_z,A_y,\beta_4^0)\frac{G}{\P(G=1)}\frac{\P(A_y)}{\P(A_y\mid G=1)}\frac{\P(Z\mid A_z,A_y,G=0)}{\P(Z\mid N,A_z,A_y,G=1)}[Y-\mu_1(Z,N,A_z,A_y)]\mid A_y\bigg\}\bigg)
    \\
    &~~~~~~~~~~~~~~~~~~~~~~~~~~~~~~~~~~~~~~~~~~~~~~~~~~~~~~~~~~~~~~~~~~~~~~~~~~~~~~~~~~~~~~~~~~~~~~~~~~(\text{by Lemma \ref{lm}})
    \\
    &=\E\bigg(\frac{\P_\std(A_y)}{\P(A_y)}\E\bigg\{S_4(Y,Z,N,A_z,A_y,\beta_4^0)\frac{G}{\P(G=1)}\frac{\P(A_y)}{\P(A_y\mid G=1)}\frac{\P(Z\mid A_z,A_y,G=0)}{\P(Z\mid N,A_z,A_y,G=1)}[Y-\mu_1(Z,N,A_z,A_y)]
    \\
    &~~~~~~~~~~~~~~~~~~\mid A_y\bigg\}\bigg)
    \\
    &=\E\bigg\{S_4(Y,Z,N,A_z,A_y,\beta_4^0)\underbrace{\frac{G}{\P(G=1)}\frac{\P_\text{std}(A_y)}{\P(A_y\mid G=1)}\frac{\P(Z\mid A_z,A_y,G=0)}{\P(Z\mid N,A_z,A_y,G=1)}[Y-\mu_1(Z,N,A_z,A_y)]}_{\textstyle\varphi_{\theta_1^*,4}(O)\in\mathcal{H}_4}\bigg\}.
\end{align*}
\begin{align*}
    \frac{\partial\theta_1^*(\beta)}{\partial\beta_6}\Big|_{\beta=\beta^0}
    &=\frac{\partial\theta_1^*(\beta_6)}{\partial\beta_6}\Big|_{\beta_6=\beta_6^0}
    \\
    &=\E_\text{std}\left\{\E\left[\int\zeta_1^*(z,A_z,A_y)\frac{\partial}{\partial\beta_6}f_6(z,A_z,A_y,\beta_6)\Big|_{\beta_6=\beta_6^0}dz\mid A_y,G=1\right]\right\}
    \\
    &=\E_\text{std}\left\{\E\left[\int\zeta_1^*(z,A_z,A_y)S_6(z,A_z,A_y,\beta_6^0)\P(z\mid A_z,A_y,G=0)dz\mid A_y,G=1\right]\right\}
    \\
    &~~~~~~~~~~~~~~~~~~~~~~~~~~~~~~~~~~~~~~~~~~~~~~~~~~~~~~~~~~~~~~~~~~~~~~~~~~~~~~~~~(\text{$S_6()$ is the score function w.r.t. $\beta_6$})
    \\
    &=\E_\text{std}\big(\E\big\{\E[S_6(Z,A_z,A_y,\beta_6^0)\zeta_1^*(Z,A_z,A_y)\mid A_z,A_y,G=0]\mid A_y,G=1\big\}\big)
    \\
    &=\E_\text{std}\bigg[\E\bigg(\E\bigg\{S_6(Z,A_z,A_y,\beta_6^0)[\zeta_1^*(Z,A_z,A_y)-\kappa_1^*(A_z,A_y)]\mid A_z,A_y,G=0\bigg\}\mid A_y,G=1\bigg)\bigg]
    \\
    &~~~~~~~~~~~~~~~~~~~~~~~~~~~~~~~~~~~~~~~~~~~~~~~~~~~~~~~~~~~~~~~~~~~~~~~~~~~~~~~~~(\text{because }\E[S_6()\mid A_z,A_y,G=0]=0)
    \\
    &=\E_\text{std}\left[\E\left(\E\left\{S_6(Z,A_z,A_y,\beta_6^0)\frac{1-G}{\P(G=0)}\frac{\P(A_z,A_y)}{\P(A_z,A_y\mid G=0)}[\zeta_1^*(Z,A_z,A_y)-\kappa_1^*(A_z,A_y)]\mid A_z,A_y\right\}\mid A_y,G=1\right)\right]
    \\
    &~~~~~~~~~~~~~~~~~~~~~~~~~~~~~~~~~~~~~~~~~~~~~~~~~~~~~~~~~~~~~~~~~~~~~~~~~~~~~~~~~~~~~~~~~~~~~~~~~~~~~(\text{by Lemma \ref{lm}})
    \\
    &=\E_\std\bigg[\E\bigg(\frac{\P(A_z\mid A_y,G=1)}{\P(A_z\mid A_y)}\E\bigg\{S_6(Z,A_z,A_y,\beta_6^0)\frac{1-G}{\P(G=0)}\frac{\P(A_z,A_y)}{\P(A_z,A_y\mid G=0)}[\zeta_1^*(Z,A_z,A_y)-\kappa_1^*(A_z,A_y)]
    \\
    &~~~~~~~~~~~~~~~~~~\mid A_z,A_y\bigg\}\mid A_y\bigg)\bigg]~~~(\text{density ratio weighting to swap distribution over which expectation is taken})
    \\
    &=\E_\std\left(\E\left\{S_6(Z,A_z,A_y,\beta_6^0)\frac{1-G}{\P(G=0)}\frac{\P(A_y)\P(A_z\mid A_y,G=1)}{\P(A_z,A_y\mid G=0)}[\zeta_1^*(Z,A_z,A_y)-\kappa_1^*(A_z,A_y)]\mid A_y\right\}\right)
    \\
    &=\E\left(\frac{\P_\std(A_y)}{\P(A_y)}\E\left\{S_6(Z,A_z,A_y,\beta_6^0)\frac{1-G}{\P(G=0)}\frac{\P(A_y)\P(A_z\mid A_y,G=1)}{\P(A_z,A_y\mid G=0)}[\zeta_1^*(Z,A_z,A_y)-\kappa_1^*(A_z,A_y)]\mid A_y\right\}\right)
    \\
    &=\E\bigg\{S_6(Z,A_z,A_y,\beta_6^0)\underbrace{\frac{1-G}{\P(G=0)}\frac{\P_\text{std}(A_y)\P(A_z\mid A_y,G=1)}{\P(A_z,A_y\mid G=0)}[\zeta_1^*(Z,A_z,A_y)-\kappa_1^*(A_z,A_y)]}_{\textstyle\varphi_{\theta_1^*,6}(O)\in\mathcal{H}_6}\bigg\}.
\end{align*}

Putting the terms together, we have the IF for $\theta_1^*$:
\begin{align*}
    \varphi_{\theta_1^*}(O)
    &=\frac{G}{\P(G=1)}\frac{\P_\std(A_y)}{\P(A_y\mid G=1)}\frac{\P(Z\mid A_z,A_y,G=0)}{\P(Z\mid N,A_z,A_y,G=1)}[Y-\mu_1(Z,N,A_z,A_y)]+ && (\varphi_{\theta_1^*,4}(O))
    \\
    &~~~~\frac{1-G}{\P(G=0)}\frac{\P_\text{std}(A_y)\P(A_z\mid A_y,G=1)}{\P(A_z,A_y\mid G=0)}[\zeta_1^*(Z,A_z,A_y)-\kappa_1^*(A_z,A_y)]+ && (\varphi_{\theta_1^*,6}(O))
    \\
    &~~~~\frac{G}{\P(G=1)}\frac{\P_\std(A_y)}{\P(A_y\mid G=1)}[\nu_1^*(N,A_z,A_y)-\eta_1^*(A_y)]+ && (\varphi_{\theta_1^*,2}(O))
    \\
    &~~~~\frac{T}{\P(T=1)}[\eta_1^*(A_y)-\theta_1^*]. && (\varphi_{\theta_1^*,1}(O))\tag{\ref{eq:if-theta.1*}}
\end{align*}

\end{proof}

\subsubsection{Proof for Robustness of Sequential Weighted Regression (SWR) estimators}

\textbf{General reasoning}

As this section proves robustness properties of all the sequential weighted regression (SWR) estimators, we provide a general reasoning used for all estimators. It involves three steps:
\begin{enumerate}[topsep=0em]
    \item Start with the set of estimating equations that the nuisance estimators and the estimator of $\theta_1^*$ based on them (referred to generically as $\hat\theta_1^*$) solve.
    
    For regression models (models of outcome means), the estimating equation is typically vector-valued. We require that all these models are mean-recovering, and will use only the element of the estimating equation that reflects this mean-recovering feature.
    
    \item Assume regularlity conditions hold such that the nuisance estimators and $\hat\theta_1^*$ converge to certain probability limits (indicated with a $^\dagger$ superscript), and the estimating equations imply a set of equalities involving the probability limits.

    For details on the regularity conditions, see \cite{boos2013EssentialStatisticalInference}, theorem 7.1. We assume these regularity conditions hold throughout, and will keep this implicit below for conciseness.
    
    \item Consider different cases where certain regression models are correctly specified and/or certain weighting functions are consistently estimated (replacing their probability limits with the corresponding true functionals) and show that in that case the probability limit $\theta_1^{*\dagger}$ coincide with the true value ($\theta_1^*$).

    When we say a regression model is correctly specified here, we mean it in a local sense, that the model is correctly specified for the conditional expectation of the probability limit of the dependent variable. For example, regarding a step regressing $\hat\kappa_1^*(A_z,A_y)$ on $A_y,G=1$ to estimate $\eta_1^*(A_y)$, by correct specification we mean that the model is correct for $\E[\kappa_1^{*\dagger}(A_z,A_y)\mid A_y,G=1]$, and do not mean that the model is correct for $\eta_1^*(A_y)$. Of course the latter is true if $\kappa_1^{*\dagger}()=\kappa_1^*()$, ie $\kappa_1^*()$ is consistently estimated.
\end{enumerate}

\textbf{Two common pieces}

Before addressing each of the SWR estimators, we note that two pieces of reasoning are used with all of them to obtain the same equalities. To avoid repetition, we bring them upfront.
\begin{itemize}[topsep=0em]
    \item One piece concerns the common last step of all the SWR estimators: averaging $\hat\eta_1^*(A_y)$ over the standard distribution of $A_y$ to estimate $\theta_1^*$. 
        Here $\hat\theta_1^*$ solves the estimating equation
        $$\P_n\{T[\hat\eta_1^*(A_y)-{\color{red}\theta_1^*}]=0,$$
        so we have the following equality involving the probability limits of $\hat\eta_1^*()$ and $\hat\theta_1^*$:
    \begin{align}
        \theta_1^{*\dagger}=\E[\eta_1^{*\dagger}\mid T=1]=\E_\text{std}[\eta_1^{*\dagger}(A_y)].\label{ee:0}
    \end{align}

    \item The second piece concerns the first step of all the SWR estimators: estimating $\mu_1()$ by regressing $Y$ on $Z,N,A_z,A_y$ in the $G=1$ sample weighted by $\hat\omega_{11^*}^{(Z,N)}()$. Here $\hat\mu_1()$ solves a vector-valued estimating equation, the mean-recovering element of which is
    $$\P_n\Big\{G\,\hat\omega_{11^*}^{(Z,N)}(Z,N,A_z,A_y)\hat\omega_{1T}^{A_y}(A_y)[Y-{\color{red}\mu_1(Z,N,A_z,A_y)}]\Big\}=0,$$
    which implies the following equality involving the probability limits:
    $$\E\Big\{\omega_{11^*}^{(Z,N)\dagger}(Z,N,A_z,A_y)\omega_{1T}^{A_y\dagger}(A_y)[Y-\mu_1^\dagger(Z,N,A_z,A_y)]\mid G=1\Big\}=0,$$
    which can be re-expressed as
    \begin{align}
        \E\Big\{\omega_{11^*}^{(Z,N)\dagger}(Z,N,A_z,A_y)\omega_{1T}^{A_y\dagger}(A_y)[\mu_1(Z,N,A_z,A_y)-\mu_1^\dagger(Z,N,A_z,A_y)]\mid G=1\Big\}=0.\label{ee:1}
    \end{align}
    Consider two cases: if the outcome regression model is correctly specified then (\ref{ee:1}) implies
    \begin{align}
        \mu_1^\dagger(Z,N,A_z,A_y)=\mu_1(Z,N,A_z,A_y);\tag{\ref{ee:1}a}
    \end{align}
    and if the weighting functions are consistently estimated then (\ref{ee:1}) implies
    \begin{align}
        \theta_1^*=\E_\text{std}\Big\{\E^*[\mu_1^\dagger(Z,N,A_z,A_y)\mid A_y,G=1]\Big\}.\tag{\ref{ee:1}b}
    \end{align}
\end{itemize}

Now we turn to the individual estimators.

\textbf{Z-Model-SWR}
\label{ssec:proof_robust_Z_Model_SWR}

This estimator involves estimating $\mu_1()$, then $\eta_1^*()$, then $\theta_1^*$.

The $\mu_1()$ and $\theta_1^*$ steps give us the equalities (\ref{ee:1}) and (\ref{ee:0}).

In the $\eta_1^*()$ step, $\hat\eta_1^*()$ solves
\begin{align*}
    \P_n\Big\{G\,\hat\omega_{1T}^{A_y}(A_y)\Big[\hat\mu_1(\overset{\blacktriangle}{Z},N,A_z,A_y)-{\color{red}\eta_1^*(A_y)}\Big]\Big\}&=0,
\end{align*}
where $(N,A_z,A_y)$ are data from the $G=1$ sample (hence the appearance of $G$ in the equation), and $\overset{\blacktriangle}{Z}$ is the $Z$ value simulated to create the artificial sample $G=\blacktriangle$, and this simulation is from $\hat\lambda_0^Z()$.

Here we \textbf{assume consistent estimation of the density $\lambda_0^Z()$}, ie $\lambda_0^{Z\dagger}()=\lambda_0^Z()$. This implies the equality
\begin{align*}
    \E\Big\{\omega_{1T}^{A_y\dagger}(A_y)\Big[\mu_1^\dagger(\overset{\blacktriangle}{Z},N,A_z,A_y)-\eta_1^{*\dagger}(A_y)\Big]\mid G=1\Big\}&=0
\end{align*}
where the conditional density of $\overset{\blacktriangle}{Z}$ given $(N,A_z,A_y)$ is $\lambda_0^Z()$. This implies%
\footnote{This is obtained by iterated expectation first conditioning on $N,A_z,A_y,G=1$ then on $A_y,G=1$, then on $G=1$, and noting that
\begin{align*}
    \E\{\E[\mu_1^\dagger(\overset{\blacktriangle}{Z},N,A_z,A_y)\mid N,A_z,A_y,G=1]\mid A_y,G=1\}
    &=\E\left\{\int\mu_1^\dagger(z,N,A_z,A_y)\P(z\mid A_z,A_y,G=0)dz\mid A_y,G=1\right\}
    \\
    &=\E\{\E^*[\mu_1^\dagger(Z,N,A_z,A_y)\mid N,A_z,A_y,G=1]\mid A_y,G=1\}
    \\
    &=\E^*[\mu_1^\dagger(Z,N,A_z,A_y)\mid A_y,G=1].
\end{align*}%
}
\begin{align}
    \E\Big(\omega_{1T}^{A_y\dagger}(A_y)\Big\{\E^*[\mu_1^\dagger(Z,N,A_z,A_y)\mid A_y,G=1]-\eta_1^{*\dagger}(A_y)\Big\}\mid G=1\Big)&=0.\label{ee:2-Zsim}
\end{align}

Now we rely on the combination of (\ref{ee:1}), (\ref{ee:2-Zsim}) and (\ref{ee:0}). For each of the two weighted regression steps, we consider two cases: \textbf{either the regression is correctly specified
or the weighting function is consistently estimated}. These lead to different implications of (\ref{ee:1}) and (\ref{ee:2-Zsim}), marked a and b below. We can mix and match these, and combine them with (\ref{ee:0}) to obtain $\theta_1^{*\dagger}=\theta_1^*$.

\resizebox{\textwidth}{!}{%
\begin{tabular}{cc|c}
    \textbf{a) outcome regression correctly specified}
    &
    \textbf{b) weighting function consistent}
    & weight. fun.
    \\\hline
    \parbox{8.5cm}{\begin{align}
        \mu_1^\dagger(Z,N,A_z,A_y)=\mu_1(Z,N,A_z,A_y)\tag{\ref{ee:1}a}
    \end{align}}
    & \parbox{10.5cm}{\begin{align}
        \theta_1^*=\E_\text{std}\Big\{\E^*[\mu_1^\dagger(Z,N,A_z,A_y)\mid A_y,G=1]\Big\}\tag{\ref{ee:1}b}
    \end{align}}
    & $\omega_{11^*}^{ZN}()\omega_{1T}^{A_y}()$
    \\\hline
    \parbox{8.5cm}{\begin{align}
        \eta_1^{*\dagger}(A_y)=\E^*[\mu_1^\dagger(Z,N,A_z,A_y)\mid A_y,G=1]\tag{\ref{ee:2-Zsim}a}
    \end{align}} 
    & \parbox{10.5cm}{\begin{align}
        \E_\text{std}\Big\{\E^*[\mu_1^\dagger(Z,N,A_z,A_y)\mid A_y,G=1]-\eta_1^{*\dagger}(A_y)\Big\}=0\tag{\ref{ee:2-Zsim}b}
    \end{align}} 
    & $\omega_{1T}^{A_y}()$
    \\\hline
    \multicolumn{2}{c}{%
    \parbox{5cm}{\begin{align}
        \theta_1^{*\dagger}=\E_\text{std}[\eta_1^{*\dagger}(A_y)]\tag{\ref{ee:0}}
    \end{align}}}
    \\\hline
\end{tabular}%
}

\textbf{N-Model-SWR}

This estimator involves estimating $\mu_1()$, then $\kappa_1^*()$, then $\eta_1^*()$, then $\theta_1^*$.  

The $\mu_1()$ and $\theta_1^*$ give us equalities (\ref{ee:1}) and (\ref{ee:0}). 

In the $\eta_1^*()$ step, $\hat\eta_1^*()$ solves
\begin{align*}
    \P_n\Big\{G\,\hat\omega_{1T}^{A_y}(A_y)\Big[\hat\kappa_1^*(A_z,A_y)-{\color{red}\eta_1^*(A_y)}\Big]\Big\}&=0,
\end{align*}
which gives us the equality
\begin{align}
    \E\Big(\omega_{1T}^{A_y\dagger}(A_y)\Big\{\E[\kappa_1^{*\dagger}(A_z,A_y)\mid A_y,G=1]-\eta_1^{*\dagger}(A_y)\Big\}\mid G=1\Big)=0.\label{ee:3-Zcsim.hwt}
\end{align}

In the $\kappa_1^*()$ step, $\hat\kappa_1^*()$ solves
$$\P_n\Big\{(1-G)\hat\omega_{1T}^{A_y}(A_y)\hat\omega_{01}^{A_z,A_y}(A_z,A_y)\Big[\hat\mu_1(Z,\overset{\triangle}{N},A_z,A_y)-{\color{red}\kappa_1^*(A_z,A_y)}\Big]\Big\}=0,$$
where $(Z,A_z,A_y)$ are data from the $G=0$ sample (hence the $(1-G)$ in the equation), and $\overset{\triangle}{N}$ is the $N$ value simulated to create the artificial sample $G=\triangle$, and this simulation is from $\hat\lambda_1^N()$.

Here we \textbf{assume consistent estimation of the density $\lambda_1^N()$}, ie $\lambda_1^{N\dagger}()=\lambda_1^N()$. This implies the equality
$$\E\Big\{\omega_{0T}^{A_y\dagger}(A_y)\omega_{01}^{A_z\dagger}(A_z,A_y)\Big[\mu_1^\dagger(Z,\overset{\triangle}{N},A_z,A_y)-\kappa_1^{*\dagger}(A_z,A_y)\Big]\mid G=0\Big\}=0,$$
where the conditional density of $\overset{\triangle}{N}$ given $(Z,A_z,A_y)$ is $\lambda_1^N()$. 
This can be re-expressed%
\footnote{This is obtained by iterated expectation first conditioning on $A_z,A_y,G=0$ and then on $G=0$, and noting that
\begin{align*}
    \E_\triangle[\mu_1^\dagger(Z,\overset{\triangle}{N},A_z,A_y)\mid A_z,A_y,G=0]
    &=\int\!\!\!\int\mu_1^\dagger(z,n,A_z,A_y)\P_\triangle(n\mid z,A_z,A_y)\P(z\mid A_z,A_y,G=0)dn\,dz
    \\
    &=\E^*[\mu_1^\dagger(Z,N,A_z,A_y)\mid A_z,A_y,G=0].
\end{align*}}
as 
\begin{align}
    \E\Big(\omega_{0T}^{A_y\dagger}(A_y)\omega_{01}^{A_z\dagger}(A_z,A_y)\Big\{\E^*[\mu_1^\dagger(Z,N,A_z,A_y)\mid A_z,A_y,G=1]-\kappa_1^{*\dagger}(A_z,A_y)\Big\}\mid G=0\Big)=0,\label{ee:2-Nwt}
\end{align}

Thus we have the combination of (\ref{ee:1}), (\ref{ee:2-Nwt}), (\ref{ee:3-Zcsim.hwt}) and (\ref{ee:0}).

For each of the three weighted regression steps, we consider two cases: \textbf{either the regression is correctly specified or the weighting function is consistently estimated}. These lead to different implications of (\ref{ee:1}), (\ref{ee:2-Nwt}), (\ref{ee:3-Zcsim.hwt}), marked a and b below. We can mix and match these, and combine them with (\ref{ee:0}) to obtain $\theta_1^{*\dagger}=\theta_1^*$. 

\resizebox{\textwidth}{!}{%
\begin{tabular}{@{}cc|c}
    \textbf{a) outcome regression correctly specified}
    & \textbf{b) weighting function consistent}
    & weight. fun.
    \\\hline
    \parbox{10cm}{\begin{align}
        \mu_1^{*\dagger}()=\mu_1^*()\tag{\ref{ee:1}a}
    \end{align}}
    & \parbox{11.5cm}{\begin{align}
        \theta_1^*=\E_\text{std}\Big\{\E^*[\mu_1^\dagger(Z,N,A_z,A_y)\mid A_y,G=1\Big\}\tag{\ref{ee:1}b}
    \end{align}}
    & $\omega_{11^*}^{(Z,N)}()\omega_{1T}^{A_y}()$
    \\\hline
    \parbox{10cm}{\begin{align}
        \kappa_1^{*\dagger}(A_z,A_y)=\E^*[\mu_1^{*\dagger}(Z,N,A_z,A_y)\mid A_z,A_y,G=1]\tag{\ref{ee:2-Nwt}a}
    \end{align}}
    & \parbox{11.5cm}{\begin{align}
        \E_\text{std}\Big\{\E^*\Big[\mu_1^\dagger(Z,N,A_z,A_y)-\kappa_1^{*\dagger}(A_z,A_y)\mid A_y,G=1\Big]\Big\}=0\tag{\ref{ee:2-Nwt}b}
    \end{align}}
    & $\omega_{01}^{A_z}()\omega_{0T}^{A_y}()$
    \\\hline
    \parbox{10cm}{\begin{align}
        \eta_1^{*\dagger}(A_y)=\E[\kappa_1^{*\dagger}(A_z,A_y)\mid A_y,G=1]\tag{\ref{ee:3-Zcsim.hwt}a}
    \end{align}}
    & \parbox{11.5cm}{\begin{align}
        \E_\text{std}\Big\{\E[\kappa_1^{*\dagger}(A_z,A_y)\mid A_y,G=1]-\eta_1^{*\dagger}(A_y)\Big\}=0\tag{\ref{ee:3-Zcsim.hwt}b}
    \end{align}}
    & $\omega_{1T}^{A_y}()$
    \\\hline
    \multicolumn{2}{c}{%
    \parbox{5cm}{\begin{align}
        \theta_1^{*\dagger}=\E_\text{std}[\eta_1^{*\dagger}(A_y)]\tag{\ref{ee:0}}
    \end{align}}}
    \\\hline
\end{tabular}%
}

\textbf{Z-Bridge-SWR}

This estimator involves estimating $\mu_1()$, then $\zeta_1^*()$, then $\kappa_1^*()$, then $\eta_1^*()$, then $\theta_1^*$.

The $\mu_1()$, $\eta_1^*()$ and $\theta_1^*$ steps give us the equalities (\ref{ee:1}), (\ref{ee:3-Zcsim.hwt}),  and (\ref{ee:0}).

In the $\kappa_1^*()$ step, $\hat\kappa_1^*()$ solves
$$\P_n\Big\{(1-G)\hat\omega_{0T}^{A_y}(A_y)\hat\omega_{01}^{A_z}(A_z,A_y)[\hat\zeta_1^*(Z,A_z,A_y)-{\color{red}\kappa_1^*(A_z,A_y)}]\Big\}=0,$$
which gives us the equality
$$\E\Big\{\omega_{0T}^{A_y\dagger}(A_y)\omega_{01}^{A_z\dagger}(A_z,A_y)[\zeta_1^{*\dagger}(Z,A_z,A_y)-\kappa_1^{*\dagger}(A_z,A_y)]\mid G=0\Big\}=0,$$
which can be re-expressed%
\footnote{This is obtained by iterated expectation, first conditioning on $A_z,A_y,G=0$, then on $G=0$ and noting that
\begin{align*}
    \E[\zeta_1^{*\dagger}(Z,A_z,A_y)\mid A_z,A_y,G=0]
    &=\int\zeta_1^{*\dagger}(z,A_z,A_y)\P(z\mid A_z,A_y,G=0)dz
    \\
    &=\int\zeta_1^{*\dagger}(z,A_z,A_y)\P^*(z\mid A_z,A_y,G=1)dz
    \\
    &=\E^*[\zeta_1^{*\dagger}(Z,A_z,A_y)\mid A_z,A_y,G=1].
\end{align*}
}
as
\begin{align}
    \E\Big(\omega_{0T}^{A_y\dagger}(A_y)\omega_{01}^{A_z\dagger}(A_z,A_y)\Big\{\E^*[\zeta_1^{*\dagger}(Z,A_z,A_y)\mid A_z,A_y,G=1]-\kappa_1^{*\dagger}(A_z,A_y)\Big\}\mid G=0\Big)=0.\label{ee:3-Zcsim}
\end{align}

In the $\zeta_1^*()$ step, $\hat\zeta_1^*()$ solves
$$\P_n\Big\{G\hat\omega_{1T}^{A_y}(A_y)\hat\omega_{\blackdiamond1^*}^Z(Z,A_z,A_y)(\overset{\blackdiamond}{Z},A_z,A_y)[\hat\mu_1(\overset{\blackdiamond}{Z},N,A_z,A_y)-{\color{red}\zeta_1^*(\overset{\blackdiamond}{Z},A_z,A_y)}]\Big\}=0,$$
where $(N,A_z,A_y)$ are data from the $G=1$ sample (hence $G$ in the equation) and $\overset{\blackdiamond}{Z}$ is the $Z$ value simulated for the artificial $G=\blackdiamond$ group.
This gives us the equality
$$\E\Big\{\omega_{1T}^{A_y\dagger}(A_y)\omega_{\blackdiamond1^*}^{Z\dagger}(\overset{\blackdiamond}{Z},A_z,A_y)[\mu_1^\dagger(\overset{\blackdiamond}{Z},N,A_z,A_y)-\zeta_1^{*\dagger}(\overset{\blackdiamond}{Z},A_z,A_y)]\mid G=1\Big\}=0,$$
which can be re-expressed%
\footnote{This is obtained by iterated expectation first conditioning on $\overset{\blackdiamond}{Z},A_z,A_y,G=1$ and then on $G=1$.}
as
\begin{align}
    \E\Big(\omega_{1T}^{A_y\dagger}(A_y)\omega_{\blackdiamond1^*}^{Z\dagger}(\overset{\blackdiamond}{Z},A_z,A_y)\Big\{\E_\blackdiamond[\mu_1^\dagger(\overset{\blackdiamond}{Z},N,A_z,A_y)\mid \overset{\blackdiamond}{Z},A_z,A_y,G=1]-\zeta_1^{*\dagger}(\overset{\blackdiamond}{Z},A_z,A_y)\Big\}\mid G=1\Big)=0,\label{ee:2-Zcsim}
\end{align}
where note that 
$$\E_\blackdiamond[m(N,z,A_z,A_y)\mid z,A_z,A_y,G=1]=\E^*[m(N,z,A_z,A_y)\mid z,A_z,A_y,G=1].$$

Now we work with the combination of (\ref{ee:1}), (\ref{ee:2-Zcsim}), (\ref{ee:3-Zcsim}), (\ref{ee:3-Zcsim.hwt}) and (\ref{ee:0}). For each of the four weighted regression steps, we consider two cases: \textbf{either the regression is correctly specified or the weighting function is consistantly estimated}. This leads to different implications of (\ref{ee:1}), (\ref{ee:2-Zcsim}), (\ref{ee:3-Zcsim}) and (\ref{ee:3-Zcsim.hwt}), marked as a and b below. We can mix and match these, and combine them with (\ref{ee:0}) to obtain $\theta_1^{*\dagger}=\theta_1^*$.

\resizebox{\textwidth}{!}{%
\begin{tabular}{@{}cc|c}
    \textbf{a) outcome regression correctly specified}
    & \textbf{b) weighting function consistent}
    & weight. fun.
    \\\hline
    \parbox{10cm}{\begin{align}
        \mu_1^{*\dagger}(Z,N,A_z,A_y)=\mu_1^*(Z,N,A_z,A_y)\tag{\ref{ee:1}a}
    \end{align}}
    & \parbox{11.5cm}{\begin{align}
        \theta_1^*=\E_\text{std}\Big\{\E^*[\mu_1^\dagger(Z,N,A_z,A_y)\mid A_y,G=1\Big\}\tag{\ref{ee:1}b}
    \end{align}}
    & $\omega_{11^*}^{(Z,N)}()\omega_{1T}^{A_y}()$
    \\\hline
    \parbox{10cm}{\begin{align}
        \zeta_1^{*\dagger}(Z,A_z,A_y)=\E^*[\mu_1^{*\dagger}(Z,N,A_z,A_y)\mid Z,A_z,A_y,G=1]\tag{\ref{ee:2-Zcsim}a}
    \end{align}}
    & \parbox{11.5cm}{\begin{align}
        \E_\text{std}\Big\{\E^*\Big[\mu_1^\dagger(Z,N,A_z,A_y)-\zeta_1^{*\dagger}(Z,A_z,A_y)\mid A_y,G=1\Big]\Big\}=0\tag{\ref{ee:2-Zcsim}b}
    \end{align}}
    & $\omega_{\blackdiamond1^*}^Z()\omega_{1T}^{A_y}()$
    \\\hline
    \parbox{10cm}{\begin{align}
        \kappa_1^{*\dagger}(A_z,A_y)=\E^*[\zeta_1^{*\dagger}(Z,A_z,A_y)\mid A_z,A_y,G=1]\tag{\ref{ee:3-Zcsim}a}
    \end{align}}
    & \parbox{11.5cm}{\begin{align}
        \E_\text{std}\Big\{\E^*\Big[\zeta_1^{*\dagger}(Z,A_z,A_y)-\kappa_1^{*\dagger}(A_z,A_y)\mid A_y,G=1\Big]\Big\}=0\tag{\ref{ee:3-Zcsim}b}
    \end{align}}
    & $\omega_{01}^{A_z}()\omega_{0T}^{A_y}()$
    \\\hline
    \parbox{10cm}{\begin{align}
        \eta_1^{*\dagger}(A_y)=\E[\kappa_1^{*\dagger}(A_z,A_y)\mid A_y,G=1]\tag{\ref{ee:3-Zcsim.hwt}a}
    \end{align}}
    & \parbox{11.5cm}{\begin{align}
        \E_\text{std}\Big\{\E[\kappa_1^{*\dagger}(A_z,A_y)\mid A_y,G=1]-\eta_1^{*\dagger}(A_y)\Big\}=0\tag{\ref{ee:3-Zcsim.hwt}b}
    \end{align}}
    & $\omega_{1T}^{A_y}()$
    \\\hline
    \multicolumn{2}{c}{%
    \parbox{5cm}{\begin{align}
        \theta_1^{*\dagger}=\E_\text{std}[\eta_1^{*\dagger}(A_y)]\tag{\ref{ee:0}}
    \end{align}}}
    \\\hline
\end{tabular}%
}

\textbf{N-Bridge-SWR}

This estimator involves estimating $\mu_1()$, then $\nu_1^*()$, then $\eta_1^*()$, then $\theta_1^*$.

The $\mu_1()$ and $\theta_1^*$ steps give use the equalities (\ref{ee:1}) and (\ref{ee:0}).

In the $\eta_1^*()$ step, $\hat\eta_1^*()$ solves
$$\P_n\Big\{G\,\hat\omega_{1T}^{A_y}(A_y)\left[\hat\nu_1^*(N,A_z,A_y)-{\color{red}\eta_1^*(A_y)}\right]\Big\}=0,$$
which implies the equality
\begin{align}
    \E\Big\{\omega_{1T}^{A_y\dagger}(A_y)[\nu_1^{*\dagger}(N,A_z,A_y)-\eta_1^{*\dagger}(A_y)]\mid G=1\Big\}=0.\label{ee:3-Ncsim}
\end{align}

In the $\nu_1^*()$ step, $\hat\nu_1^*()$ solves
$$\P_n\Big\{(1-G)\hat\omega_{1T}^{A_y}(A_y)\hat\omega_{\diamond1}^{(Z,N,A_z)}(\overset{\diamond}{N},A_z,A_y)\big[\hat\mu_1(Z,\overset{\diamond}{N},A_z,A_y)-{\color{red}\nu_1^*(\overset{\diamond}{N},A_z,A_y)}\big]\Big\}=0,$$
where $(Z,A_z,A_y)$ are data from the $G=0$ sample (hence $(1-G)$ in the equation), and $\overset{\diamond}{N}$ is the $N$ value simulated for the artificial $G=\diamond$ group.
This implies the equality:
$$\E\Big\{\omega_{1T}^{A_y\dagger}(A_y)\omega_{\diamond1}^{(Z,N,A_z)\dagger}(\overset{\diamond}{N},A_z,A_y)[\mu_1^\dagger(Z,\overset{\diamond}{N},A_z,A_y)-\nu_1^{*\dagger}(\overset{\diamond}{N},A_z,A_y)]\mid G=0\Big\}=0,$$
which can be re-expressed%
\footnote{This is obtained by iterated expectation first conditioning on $\overset{\diamond}{N},A_z,A_y,G=0$, then on $G=0$.}
as
\begin{align}
    \E\Big(\omega_{1T}^{A_y\dagger}(A_y)\omega_{\diamond1}^{(Z,N,A_z)\dagger}(\overset{\diamond}{N},A_z,A_y)\Big\{\int\mu_1^\dagger(z,\overset{\diamond}{N},A_z,A_y)\P(z\mid A_z,A_y,G=0)dz-\nu_1^{*\dagger}(\overset{\diamond}{N},A_z,A_y)\Big\}\mid G=0\Big)=0.\label{ee:2-Ncsim}
\end{align}
Now we have the combination of (\ref{ee:1}), (\ref{ee:2-Ncsim}), (\ref{ee:3-Ncsim}) and (\ref{ee:0}). For each of the three weighted regression steps, we consider two cases: \textbf{either the regression is correctly specified or the weighting function is consistently estimated}. These lead to different implications of (\ref{ee:1}), (\ref{ee:2-Ncsim}) and (\ref{ee:3-Ncsim}), marked as a and b below. We can mix and match these, and combine them with (\ref{ee:0}) to obtain $\theta_1^{*\dagger}=\theta_1^*$.

\resizebox{\textwidth}{!}{%
\begin{tabular}{@{}cc|c}
    \textbf{a) outcome regression correctly specified}
    & \textbf{b) weighting function consistent}
    & weight. fun.
    \\\hline
    \parbox{10cm}{\begin{align}
        \mu_1^{*\dagger}(Z,N,A_z,A_y)=\mu_1^*(Z,N,A_z,A_y)\tag{\ref{ee:1}a}
    \end{align}}
    & \parbox{11cm}{\begin{align}
        \theta_1^*=\E_\text{std}\Big\{\E^*[\mu_1^\dagger(Z,N,A_z,A_y)\mid A_y,G=1\Big\}\tag{\ref{ee:1}b}
    \end{align}}
    & $\omega_{11^*}^{(Z,N)}()\omega_{1T}^{A_y}()$
    \\\hline
    \parbox{10cm}{\begin{align}
        \nu_1^{*\dagger}(N,A_z,A_y)=\E^*[\mu_1^{*\dagger}(Z,N,A_z,A_y)\mid N,A_z,A_y,G=1]\tag{\ref{ee:2-Ncsim}a}
    \end{align}}
    & \parbox{11cm}{\begin{align}
        \E_\text{std}\Big\{\E^*\Big[\mu_1^\dagger(Z,N,A_z,A_y)-\nu_1^{*\dagger}(N,A_z,A_y)\mid A_y,G=1\Big]\Big\}=0\tag{\ref{ee:2-Ncsim}b}
    \end{align}}
    & $\omega_{\diamond1^*}^{(N,A_z)}()\omega_{1T}^{A_y}()$
    \\\hline
    \parbox{10cm}{\begin{align}
        \eta_1^{*\dagger}(A_y)=\E[\nu_1^{*\dagger}(N,A_z,A_y)\mid A_y,G=1]\tag{\ref{ee:3-Ncsim}a}
    \end{align}}
    & \parbox{11cm}{\begin{align}
        \E_\text{std}\Big\{\E[\nu_1^{*\dagger}(N,A_z,A_y)\mid A_y,G=1]-\eta_1^{*\dagger}(A_y)\Big\}=0\tag{\ref{ee:3-Ncsim}b}
    \end{align}}
    & $\omega_{1T}^{A_y}()$
    \\\hline
    \multicolumn{2}{c}{%
    \parbox{5cm}{\begin{align}
        \theta_1^{*\dagger}=\E_\text{std}[\eta_1^{*\dagger}(A_y)]\tag{\ref{ee:0}}
    \end{align}}}
    \\\hline
\end{tabular}%
}

\end{document}